\newcommand{\institute}[1]{\newcommand{\institute}{#1}}
\xpatchcmd{\ps@firstpagestyle}{Manuscript submitted to ACM}{}{\typeout{}}{\typeout{first patch failed}}
\xpatchcmd{\ps@standardpagestyle}{Manuscript submitted to ACM}{}{\typeout{Second patch succeeded}}{\typeout{d}}    \@ACM@manuscriptfalse
\let\@authorsaddresses\@empty
\renewcommand\@formatdoi[1]{\ignorespaces}
\renewcommand\footnotetextcopyrightpermission[1]{} 
\DeclarePairedDelimiter\ceil{\lceil}{\rceil}
\newtheorem{observation}{Observation}
\begin{document}

\title{Distributed Computation and Reconfiguration in Actively Dynamic Networks}

\author{Othon Michail}
\email{}
\affiliation{
	\institution{Department of Computer Science, University of Liverpool, UK \newline Othon.Michail@liverpool.ac.uk}
	\country{}	
}

\author{George Skretas}
\email{}
\affiliation{
  \institution{Department of Computer Science, University of Liverpool, UK \newline  G.Skretas@liverpool.ac.uk}
  \country{}
}

\author{Paul G. Spirakis}
\affiliation{
  \institution{Department of Computer Science, University of Liverpool, UK and Computer Engineering and Informatics Department, University of Patras, Greece \newline P.Spirakis@liverpool.ac.uk}
  \country{}
}

\renewcommand{\shortauthors}{}
\begin{abstract}
In this paper, motivated by recent advances in the algorithmic theory of dynamic networks, we study systems of distributed entities that can actively modify their communication network. This gives rise to distributed algorithms that apart from communication can also exploit network reconfiguration in order to carry out a given task. At the same time, the distributed task itself may now require a global reconfiguration from a given initial network $G_s$ to a target network $G_f$ from a family of networks having some good properties, like small diameter. With reasonably powerful computational entities, there is a straightforward algorithm that transforms any $G_s$ into a spanning clique in $O(\log n)$ time, where time is measured in synchronous rounds and $n$ is the number of entities. From the clique, the algorithm can then compute any global function on inputs and reconfigure to any desired target network in one additional round.

We argue that such a strategy, while time-optimal, is impractical for real applications. In real dynamic networks there is typically a cost associated with creating and maintaining connections. To formally capture such costs, we define three reasonable edge-complexity measures: the \emph{total edge activations}, the \emph{maximum activated edges per round}, and the \emph{maximum activated degree of a node}. The clique formation strategy highlighted above, maximizes all of them. We aim at improved algorithms that will achieve (poly)log$(n)$ time while minimizing the edge-complexity for the general task of transforming any $G_s$ into a $G_f$ of diameter (poly)log$(n)$.

There is a natural trade-off between time and edge complexity. Our main lower bound shows that $\Omega(n)$ total edge activations and $\Omega(n/\log n)$ activations per round must be paid by any algorithm (even centralized) that achieves an optimum of $\Theta(\log n)$ rounds. On the positive side, we give three distributed algorithms for our general task. The first runs in $O(\log n)$ time, with at most $2n$ active edges per round, an optimal total of $O(n\log n)$ edge activations, a maximum degree $n-1$, and a target network of diameter 2. The second achieves bounded degree by paying an additional logarithmic factor in time and in total edge activations, that is, $O(\log^2 n)$ and $O(n\log^2 n)$, respectively. It gives a target network of diameter $O(\log n)$ and uses $O(n)$ active edges per round. Our third algorithm shows that if we slightly increase the maximum degree to polylog$(n)$ then we can achieve a running time of $o(\log^2 n)$. 

This novel model of distributed computation and reconfiguration in actively dynamic networks and the proposed measures of the edge complexity of distributed algorithms may open new avenues for research in the algorithmic theory of dynamic networks. At the same time, they may serve as an abstraction of more constrained active-reconfiguration systems, such as reconfigurable robotics which come with geometric constraints, and draw interesting connections with alternative network reconfiguration models, like overlay network construction and network constructors. We discuss several open problems and promising future research directions.\\

\end{abstract}

\keywords{distributed algorithms, dynamic networks, reconfiguration, transformation, polylogarithmic time, edge complexity}

\maketitle
\thispagestyle{empty}
\section{Introduction}

\subsection{Dynamic Networks}

The \emph{algorithmic theory of dynamic networks} is a relatively new area of research, concerned with studying the algorithmic and structural properties of networked systems whose structure changes with time. 

One way to classify dynamic networks is based on \emph{who controls the network dynamics}. In \emph{passively} dynamic networks the changes are external to the algorithm, in the sense that the algorithm has no control over them. Such dynamics are usually modeled by sequences of events determined by an \emph{adversary scheduler}. This is for example the case when the computing entities must operate in a dynamic environment, such as when being carried by a set of transportation units. In other applications, the entities can \emph{actively} control the dynamics of their network, as is the case in mobile or reconfigurable robotics and peer to peer networks. \emph{Hybrid} cases or cases of \emph{partial control} are less studied (cf. \cite{GHSS17} for a relevant study).

Another level of classification comes from \emph{who controls the algorithm}. This gives rise to two main families of models. One is the \emph{fully centralized}, in which a central controller has global view of the system. In case of active network dynamics, the centralized algorithm typically designs a dynamic network by exploiting its full knowledge about the system in a way that aims to optimize some given objective function. If network dynamics are passive then the goal is typically to achieve some global computation task, like foremost journeys or dissemination, which may either be possible to compute \emph{offline} under full information about the evolution of the network or required to compute \emph{online} under limited or no knowledge about the future network structure. Similar objectives hold for the \emph{fully distributed} case, in which every node in the network is an independent computing entity, like an automaton or Turing machine, typically equipped with computation and communication capabilities, and in the case of active dynamics with the additional capability to locally modify the network structure, like \emph{activating} a connection to a new neighbor or \emph{eliminating} an existing connection. One may also consider \emph{partial distributed control}, in which only $k$ out of $n$ nodes are occupied by computing entities, but again not much is known about the latter family of models. 

\subsection{An Actively Dynamic Distributed Model}

In this paper, we consider an \emph{actively dynamic fully distributed} system. In particular, there are $n$ computing entities starting from an \emph{initial connected network} drawn from a family of initial networks. The entities are typically equipped with unique IDs, can compute locally, can communicate with neighboring entities, and can activate connections to new neighbors locally or eliminate some of their existing connections. All these take place in lock step through a standard synchronous message passing model, extended to include the additional operations of edge activations and deactivations within each round.

The goal is, generally speaking, to program all the entities with a distributed algorithm that can transform the initial network $G_s$ into a \emph{target network} $G_f$ from a family of target networks. The idea is that starting from a $G_s$ not necessarily having a good property, like small diameter, the algorithm will be able to ``efficiently'' reach a $G_f$ satisfying the property. This gives rise to two main objectives, which in some cases it might be possible to satisfy at the same time. One is to transform a given $G_s$ into a desired target $G_f$ and the other is to exploit some good properties of $G_f$ in order to more efficiently solve a distributed task, like computation of a global function through information dissemination. 

Even when edge activations are extremely local, meaning that an edge $uv$ can only be activated if there exists a node $w$ such that both $uw$ and $wv$ are already active, there is a straightforward algorithmic strategy that can successfully carry out most of the above tasks. In every round, all nodes activate all of their possible new connections, which corresponds to each node $u$ connecting with all nodes $v_i$ that were at distance 2 from $u$ in the beginning of the current round. By a simple induction, it can be shown that in any round $r$ the neighborhood of every node has size at least $2^r$, which implies that a spanning clique $K_n$ is formed in $O(\log n)$ rounds. Such a clique can then be used for global computations, like electing the maximum id as a leader, or for transforming into any desired target network $G_f$ through eliminating the edges in $E(K_n)\setminus E(G_f)$. All these can be performed within a single additional round.

Even though sublinear global computation and network-to-network transformations are in principle possible through the \emph{clique formation} strategy described above, this algorithmic strategy still has a number of properties which would make it impractical for real distributed systems. As already highlighted in the literature of dynamic networks, activating and maintaining a connection does not come for free and is associated with a cost that the network designer has to pay for. Even if we uniformly charge 1 for every such active connection, the clique formation incurs a cost of $\Theta(n^2)$ total edge activations in the worst case and always produces instances (e.g., when $K_n$ is formed) with as many as $\Theta(n^2)$ active edges in which all nodes have degree $\Theta(n)$.

Our goal in this work is to formally define such cost measures associated with the structure of the dynamic network and to give improved algorithmic strategies that maintain the time-efficiency of clique formation, while substantially improving the edge complexity as defined by those measures. In particular, we aim at minimizing the edge complexity, given the constraint of (poly)logarithmic running time. Observe at this point that without any restriction on the running time, a standard distributed dissemination solely through message passing over the initial network, would solve global computation without the need to activate any edges. However, linear running times are considered insufficient for our purposes (even when the goal is to solve traditional distributed tasks). Moreover, strategies that do not modify the input network cannot be useful for achieving network-to-network transformations.

\subsection{Contribution}

We define three cost measures associated with the edge complexity of our algorithms. One is the \emph{total number of edge activations} that the algorithm  performed during its course, the second one is the \emph{maximum number of activated edges in any round} by the algorithm, and the third one is the \emph{maximum activated degree of a node in any round}, where the maximum activated degree of a node is defined only by the edges that have been activated by the algorithm.

Our ultimate goal in this paper is to give \emph{(poly)logarithmic time} algorithms which, starting from \emph{any} connected network $G_s$, transform $G_s$ into a $G_f$ of \emph{(poly)logarithmic diameter} and at the same time \emph{elect a unique leader}.
Such algorithms can then be composed with any algorithm $B$ that assumes an initial network of (poly)logarithmic diameter and has access to a unique leader and unique ids. In case of a static network algorithm $B$, this for example yields (poly)logarithmic time information dissemination and computation of any global function on inputs. In case of an actively dynamic network algorithm $B$, it gives (poly)logarithmic time transformation into any target network from a given family which depends on restrictions related to the edge complexity. 

We restrict our focus on \emph{deterministic} algorithms, that is, the computational entities do not have access to any random choices. Moreover, our algorithms never break the connectivity of the network of active edges as this would result in components that could never be reconnected based on the permissible edge activations. Even though this is in principle permitted, it cannot be useful for the small diameter and spanning target networks that we are aiming for in this work. Temporary disconnections within a round may be permitted but can always be avoided by first activating all new edges and then deactivating any edges for the current round.  

To appreciate the difficulty in solving the above problem 
while optimizing the edge complexity, assume for a moment, a network as simple as a spanning line $u_0u_1$ $\cdots u_{n-1}$ with a pre-elected unique leader on one of its endpoints, say $u_0$. If we had global view of the system, then we would proceed in $\log n$ phases as follows. In every phase $i$, we would start from $u_0$ and activate edges by making hops of length $2$ over the edges activated in the previous phase, thus, activating the edges $u_0u_{2^i}, u_{2^i}u_{2\cdot 2^i}, u_{2\cdot 2^i}u_{3\cdot 2^i},\ldots$ in the current phase. This would give an edge for every $2^i$ consecutive nodes in phase $i$ and a total of $O(n)$ edge activations. The diameter of the resulting network and the number of phases are both logarithmic in $n$. Observe now that this basic construction essentially requires to determine which half of the nodes that activated an edge in the previous phase will be the ones to activate in the current phase. But all these nodes are bound to behave identically given an order-equivalence of received ids in their local history and there is no obvious way to exploit the pre-elected leader at $u_0$ for symmetry breaking, as its initial distance from many of them is asymptotically equal to the original diameter of the network, that is, $\Theta(n)$. What this example reveals, is an inherent trade-off between time and edge activations stemming from the inability of the distributed entities to break symmetry locally and, thus, fast. Intuitively, breaking symmetry takes time and, if left unbroken, costs many edge activations every time one of the nodes decides to activate. 

The difficulties that we just highlighted are formally captured by our lower bounds presented in Section \ref{SEC:LOWER-BOUNDS}. In particular, we first prove that $\Omega(\log n)$ is a lower bound on time following from an upper bound of 2 on the distance of new connections and the $\Theta(n)$ worst-case diameter of the initial network. Then we give an $\Omega(n)$ lower bound on total edge activations and $\Omega(n/\log n)$ activations per round for any centralized algorithm that achieves an optimal $\Theta(\log n)$ time. Our main lower bound is a total of $\Omega(n\log n)$ edge activations that any logarithmic time distributed algorithm must pay. This is in contrast to the $\Theta(n)$ total edges that would be sufficient for a centralized algorithm and is due to the distributed nature of the systems under consideration.    

We begin our algorithmic constructions with some basic algorithms for special types of initial and target networks, which will then be used as core components in our general algorithms. These are discussed in Section \ref{SUBSEC:BASIC}. One of these algorithms transforms any rooted tree into a star and the other an oriented spanning line into a complete binary tree. Both operate in $O(\log n)$ time, have a linear number of active edges per round and an optimal total of $O(n\log n)$ edge activations. The latter algorithm additionally maintains a maximum degree of at most 3 throughout its course, while the degree of the former is necessarily linear, due to ending up in a spanning star.

We then proceed to our main positive results. In particular, we give three algorithms for transforming \emph{any} initial connected network $G_s$ into a network $G_f$ of (poly)logarithmic diameter and at the same time electing a unique leader. Each of these algorithms makes a different contribution to the time vs edge complexity trade-off. All of our main algorithms are built upon the following general strategy. For each of them, we define a different \emph{gadget network} and the algorithms are developed in such a way that they always satisfy the following invariants. In any round of an execution, the network is the union of committees being such gadget networks of varying sizes and some additional edges including the initial edges and other edges used to join the committees. Initially, every node forms its own committee and the algorithms progressively merge pairs or larger groups of committees based on the rule that the committee with the greater id dominates. If properly performed, this ensures that eventually only one committee remains, namely, the committee of the node $u_{max}$ with maximum id in the network. The diameter of all our gadgets is (poly)logarithmic in their size, which facilitates quick merging and ensures that the final committee of $u_{max}$ satisfies the (poly)$\log(n)$ diameter requirement for $G_f$. The algorithms also ensure that, by the time the committee of $u_{max}$ is the unique remaining committee, $u_{max}$ is the unique leader elected. 

Our algorithms must achieve (poly)logarithmic time and they do so by satisfying the invariant that winning committees always grow exponentially fast. This growth is \emph{asynchronous} in our algorithms for the following reason. In a typical configuration (of a phase) the graph of mergings forms a spanning forest $F$ of committees such that any tree $T$ in $F$ is rooted at the committee that will eventually consume all committees in $V(T)$. Given that those trees may have different sizes (even up to $V(T)=\Theta(n)$), the winning times of different committees may be different, but we can still show that their amortized growth is exponential. 

Our first algorithm, called GraphToStar and presented in Section \ref{sec:algorithm-1}, uses a star network as a gadget. Its running time is $O(\log n)$ and it uses at most $2n$ active edges per round and an optimal total of $O(n\log n)$ edge activations. The target network $G_f$ that it outputs is a spanning star, thus, the algorithm achieves a final diameter of 2. 

Our second algorithm, called GraphToWreath and presented in Section \ref{SEC:ALGORITHM-2}, uses as a gadget a type of graph, which we call a \emph{wreath} and is the union of a ring and a complete binary tree spanning the ring. The main improvement compared to GraphToStar is that it maintains a bounded maximum degree throughout its course (given a bounded-degree $G_s$). It does this at the cost of increasing the running time to $O(\log^2 n)$ and the number of total edge activations to $O(n\log^2 n)$. The active edges per round remain $O(n)$. The target network $G_f$ that it outputs is a spanning complete binary tree (after deleting the original edges and the spanning ring), thus, the algorithm achieves a final diameter of $O(\log n)$. 

Our third algorithm, called GraphToThinWreath and presented in Section \ref{SEC:ALGORITHM-3}, shows that if we slightly increase the maximum degree to polylog$(n)$ then we can achieve a running time of $o(\log^2 n)$ (more precisely, $O(\log^2 n/\log\log^k n)$, for some constant $k\geq 1$).

If our model can be compared to models from the area of overlay networks construction (see Section \ref{subsec:related-work} for a discussion on this matter), then GraphToWreath is, to the best of our knowledge, the first deterministic bounded-degree $O(\log^2 n)$-time algorithm and GraphToThinWreath the first deterministic polylog$(n)$-degree $o(\log^2 n)$-time algorithm for the problem of transforming any connected $G_s$ into a polylog$(n)$ diameter $G_f$.

\subsection{Related Work}
\label{subsec:related-work}

\noindent\textbf{Temporal Graphs.} The algorithmic study of temporal graphs was initiated by Berman \cite{Be96} and Kempe \emph{et al.} \cite{KKK00}, who studied a special case of temporal graphs in which every edge can be available at most once. The problem of designing a cost-efficient temporal graph satisfying some given connectivity properties was introduced in \cite{MMS19}. The design task was carried out by an offline centralized algorithm starting from an empty edge set. Subsequent work \cite{EMMZ19}, motivated by epidemiology applications, considered the centralized algorithmic problem of re-designing a given temporal graph through edge deletions in order to end up with a temporal graph with bounded temporal reachability, thus keeping the spread of a disease to a minimum. Our work is related to the temporal network (re-)design problem but our model is fully distributed, allows for both edge activations and deletions, and our families of target networks are different than those considered in the above papers.

\noindent\textbf{Distributed Computation in Passively Dynamic Networks.} Probably the first authors to consider distributed computation in passively dynamic networks were Angluin \emph{et al.} \cite{AADFP06,AAER07,AR09}. Their population protocol model, considered originally the computational power of a population of $n$ finite automata which interact in pairs passively either under an eventual fairness condition or under a uniform random scheduling assumption. A variant of population protocols in which the automata can additionally create or destroy connections between them was introduced in \cite{MCS11-2,MS16a}. It was shown that in that model, called network constructors, complex spanning networks can be created efficiently despite the computational weakness of individual entities. The closest to our approach from this area is \cite{MS17a}, in which the authors showed how to transform any connected initial network into a spanning line which can then be exploited to achieve global computation on input values and termination. The main difference though is that in all these models pairwise interactions are chosen asynchronously by a scheduler, and connections can be created between any pair of nodes during their interaction independently of the current network structure and the distance between them. 

Other papers \cite{OW05,KLO10,MCS13b} have studied distributed computation in worst-case dynamic networks using a traditional message-passing model and typically operating through local broadcast in the current neighborhood. Our communication model is closer to those models but network dynamics there are always passive and their main goal has been to revisit the complexity of classical distributed tasks under a worst-case adversarial network.

\noindent\textbf{Construction of Overlay Networks.} There is a rich literature on the distributed construction of overlay networks. A typical assumption is that there is an overlay (active) edge from a node $u$ to a node $v$ in a given round iff $u$ has obtained $v$'s id through a message. 
Without further restrictions, the overlay in round $r$ would always correspond to the union of $r$ consecutive transitive extensions starting from the original edge set. The main restriction imposed in the relevant literature is a polylogarithmic (in bits) communication capacity per node per round, which also implies that in every round $O(\log n)$ new overlay connections per node are permitted. 

Our model and results, even though different in motivation, in the complexity measures considered, and in the restrictions we impose, appear to have similarities with some of the developments in this area. Unlike our work, where our complexity measures are motivated by the cost of creating and maintaining physical or virtual connections, the algorithmic challenges in overlay networks are mainly due to restricting the communication capacity of each node per round to a polylogarithmic total number of bits.

Research in this area started with seminal papers such as Chord of Stoica \emph{et al.} \cite{SMK01} and the Skip graphs of Aspnes and Shah \cite{AS07}. Probably the first authors to have considered the problem of constructing an overlay network of logarithmic diameter were Angluin \emph{et al.} \cite{AACW05}. Their algorithm is randomized and has a running time of $O((d +W) \log n)$ w.h.p., where $W$ is the maximum size of a unique id. Then Aspnes and Wu \cite{AW07} gave a randomized $O(\log n)$ time algorithm for the special case in which the initial network has outdegree at most 1. 

To the best of our knowledge, the only previous deterministic algorithm for the problem is the one by Gmyr \emph{et al.} \cite{GHSS17}. Our algorithmic strategies appear to have some similarities to their ``Overlay Construction Algorithm'', which in their work is used as a subroutine for monitoring properties of a passively dynamic network. Unlike our model, their model is hybrid in the sense that algorithms have partial control over the connections of an otherwise passively dynamic network. Due to using different complexity measures and restrictions it is not totally clear to us yet whether a direct comparison between them would be fair. Still, we give some first observations. Their algorithm has the same time complexity, i.e., $O(\log^2 n)$, with our GraphToWreath algorithm, while our GraphToStar algorithm achieves $O(\log n)$ and our GraphToThinWreath $o(\log^2 n)$. Their overlays appear to maintain $\Theta(n\log n)$ active connections per round, while our algorithms maintain $O(n)$. Their maximum active degree is polylogarithmic, the same as GraphToThinWreath, while GraphToStar uses linear and GraphToWreath always bounded by a constant. Their model restricts the communication capacity of every node to a polylogarithmic number of bits per round, whereas we do not restrict communication. 

A very recent work by G{\"o}tte \emph{et al.} \cite{GHS19} has improved the upper bound of \cite{AACW05} to $O(\log^{3/2} n)$, w.h.p. It is a randomized algorithm which uses a core deterministic procedure that has some similarities to our algorithmic strategy of maintaining and merging committees (called ``supernodes'' there) whose size increases exponentially fast. Their model keeps the polylogarithmic restriction on communication and the polylogarithmic maximum degree.  

Scheideler and Setzer \cite{SS19} recently studied the (centralized) computational complexity of computing the optimum graph transformation and gave $\mathbf{NP}$-hardness results and a constant-factor approximation algorithm for the problem. 

\noindent\textbf{Programmable Matter.} There is a growing recent interest in studying the algorithmic foundations of systems that can change their physical properties through local reconfigurations \cite{FRRS16,DGRS16,MSS18,AADD19,AMP19}. A prominent such property is changing their shape. Typical examples of systems in this area are reconfigurable robotics, swarm robotics, and self-assembly systems \cite{BG11,MC15}. In most of these settings, modification of structure can be represented as a dynamic network, usually called \emph{shape}, with additional geometric restrictions coming from the shape and the local reconfiguration mechanism of the entities. The goal is then to transform a given initial shape into a desired target shape through a sequence of valid local moves. Our network transformation problem can be viewed as an non-geometric abstraction of these geometric transformation problems. Apart from being motivated by this area, we also hope that the abstract algorithmic principles of network reconfiguration might promote our understanding of the geometrically constrained cases. 

\vspace{-0.3cm}
\section {Preliminaries}
\label{sec:prel}

\subsection{The Model}
\label{subsec:model}

An actively dynamic network is modeled in this work by a temporal graph $D=(V,E)$, where  $V$ is a static set of $n$ nodes and $E\subseteq \binom{V}{2}\times\mathbb{N}$ is a set of undirected time-edges. In particular, $E(i) =\{e: (e,i)\in E\}$ is the set of all edges that are \emph{active} in the temporal graph at the beginning of round $i$. Since V is static, $E(i)$ can be used to define a snapshot of the temporal graph at round $i$, which is the static graph $D(i)=(V,E(i))$. 

The temporal graph $D$ of an execution is generated by local operations performed by the nodes of the network, starting from an initial graph $G_s=D(1)$. Throughout this paper, $G_s$ is assumed to be connected. A node $u$ can \emph{activate an edge} with node $v$ in round $i$, if $uv\not\in E(i)$ and there exists a node $w$ such that both $uw$ and $wv$ are active at the beginning of round $i$. A node $u$ can \emph{deactivate an edge} with node $v$ in round $i$, provided that $uv\in E(i)$. An active edge remains active indefinitely unless a node who is incident to that edge deactivates it. There is at most one active edge between any pair of nodes, that is multiple edges are not allowed. If a node attempts to activate an edge which is already active, the action has no effect and the edge remains active; similarly for deactivating inactive edges. Moreover, if a node $u$ decides to activate an edge with a node $v$ in round $i$ and $v$ decides to activate an edge with $u$ in the same round, then only one edge is activated between them. In case $u$ and $v$ disagree on their decision about edge $uv$, then their actions have no effect on $uv$. We define $E_{ac}(i)$ as the set of all edges that were activated in round $i$ and $E_{dac}(i)$ as the set of all edges that were deactivated in round $i$. Then $E(i+1)=(E(i)\cup E_{ac}(i))\setminus E_{dac}(i)$.

We define set $N_{1}^{i}(u)$ of node $u$, where $v\in N_{1}^{i}(u)$ iff $uv\in E(i)$ which means that set $N_{1}^{i}(u)$ contains the neighbors of node $u$ in round $i$. Additionally, set $N_{2}^{i}(u)$ of node $u$, where $w\in N_{2}^{i}(u)$ iff there exists $ v\in V$ s.t. $v\in N_{1}^{i}(u)$ and $v\in N_{1}^{i}(w)$ and $w\not\in N_{1}^{i}(u)$. That is, set $N_{2}^{i}(u)$ of node $u$ in round $i$ contains the nodes at distance 2 which we will refer to as \emph{potential neighbors}. We will omit the $i$ index for rounds, when clear from context.

Each node $u\in V$ is identical to every other node $v$ but for the unique identifier \emph{(UID)} that each node possesses. Each node $u$ starts with a UID that is drawn from a namespace  $\mathcal{U}$. The maximum id is represented by $O(\log n)$ bits. An algorithm is called \emph{comparison based} if it manipulates the UIDs of the network using comparison operations $(<,>,=)$ only. All of the algorithms and lower bounds presented in this paper are comparison based. 

The nodes represent agents equipped with computation, communication, and edge-modification capabilities and they operating in synchronous rounds. In each round all agents perform the following actions in sequence and in lock step: 
Send messages to their neighbors, Receive messages from their neighbors, Activate edges with potential neighbors, Deactivate edges with neighbors, Update their local state.

We note that a node may choose to send a different message to different neighbors in a round and that the time needed for internal computations is assumed throughout to be $O(1)$. We do not impose any restriction on the size of the local memory of the agents, still the space complexity of our algorithms is within a reasonable polynomial in $n$.

\subsection{Problem Definitions and Performance Measures}

For the current paper we are mainly interested in the following problems.

\smallskip

\textbf{Leader Election.} Every node $u$ in graph $D=(V,E)$ has a variable $status_u$ that can be set to a value in $\{$Follower, Leader$\}$. An algorithm $A$ solves leader election if the algorithm has terminated and exactly one node has its status set to Leader while all other nodes have their status set to Follower.

\textbf{Token Dissemination.} Given an initial graph $D=(V,E)$ where each node $u\in V$ starts with some unique piece of information (token), every node $u\in V$ must terminate while having received that unique piece of information from every other node $v\in V\setminus \{u\}$. W.l.o.g. we will consider that unique information to be the UID of each node throughout the paper.

\textbf{Depth-$d$ Tree.} Given any initial graph $G_s$ from a given family, the distributed algorithm must reconfigure the graph into a target graph $G_f$, such that $G_f$ is a rooted tree of depth $d$ with a unique leader elected at the root.

\smallskip

Apart from studying the running time of our algorithms, measured as their worst-case number of rounds to carry out a given task, we also introduce the following measures related to their edge complexity.

\smallskip

\textbf{Total Edge Activations.} The total number of edge activations of an algorithm is given by $\sum_{i=1}^{T} |E_{ac}(i)|$, where $T$ is the running time of the algorithm. 

\textbf{Maximum Activated Edges.} It is defined as $\max_{i\in [T]} |E(i)\setminus E(1)|$, that is, equal to the maximum number of active edges of a round, disregarding the edges of the initial network.

\textbf{Maximum Activated Degree.} The maximum degree of a round, if we again only consider the edges that have been activated by the algorithm. Let $deg(G)$ denote the degree of a graph $G$. Then, formally, the maximum activated degree is equal to $\max_{i\in [T]} deg(D(i)\setminus D(1))$, where the graph difference is defined through the difference of their edge sets.

In this paper, instead of measuring the maximum activated degree we will focus on preserving the maximum degree of input networks from specific families. For example, one of our algorithms solves the Depth-$d$ Tree problem on any input network and, if the input network has bounded degree, then it guarantees that the degree in any round is also bounded.

\subsection{Basic Subroutines}
\label{SUBSEC:BASIC}

We will now provide algorithms that transform initial graphs into graphs with small diameter and which will be used as subroutines in our general algorithms. 

The first called TreeToStar transforms any initial rooted tree graph into a spanning star in $O(\log n)$ time with $O(n\log n)$ total edge activations and $O(n)$ active edges per round, provided that the nodes have a sense of orientation on the tree. This means that each node can distinguish its parent from its children. In every round, each node activates an edge with the potential neighbor that is its grandparent and deactivates the edge with its parent. This process keeps being repeated by each node until they activate an edge with the root of the tree.

\begin{proposition} \label{TreeToStar}
	Let $T$ be any tree rooted at $u_0$ of depth $d$. If the nodes have a sense of orientation on the tree, then algorithm TreeToStar transforms T into a spanning star centered at $u_0$ in $\ceil*{\log d}\leq\log n$ rounds. TreeToStar has at most $(n-1)+(n-2)=2n-3$ active edges per round. 
\end{proposition}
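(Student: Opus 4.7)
The plan is to prove two claims: (1) after $r$ rounds the network is still a rooted tree at $u_0$ in which each node originally at depth $k$ has current depth $\lceil k/2^r\rceil$; and (2) the number of simultaneously active edges inside any round never exceeds $2n-3$.

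For the structural claim I would proceed by induction on $r$. The base case $r=0$ is the original tree. For the inductive step, assume the invariant holds at the start of round $r+1$, so we still have a rooted tree at $u_0$ in which depths equal $\lceil k/2^r\rceil$. During round $r+1$ every non-root node with a grandparent activates the edge to its grandparent (a distance-2 potential neighbor, permitted by the model) and deactivates the edge to its parent; nodes whose parent is $u_0$ simply stop. A routine check shows that after these simultaneous local updates, the graph is still a rooted tree at $u_0$ in which every previous grandchild of a node $w$ becomes a child of $w$, so the new depth of a node is $\lceil\text{old depth}/2\rceil=\lceil k/2^{r+1}\rceil$. This preserves sense of orientation (each node knows its new parent was its old grandparent), so the procedure can iterate. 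After $\lceil\log d\rceil$ rounds every depth $\lceil k/2^r\rceil$ is $1$, which is exactly the spanning star centered at $u_0$. Termination is local: each node stops once its parent is $u_0$ (equivalently, once it has no grandparent).

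For the edge bound, observe that by the invariant the graph at the \emph{start} of each round is a tree on $n$ vertices, i.e., exactly $n-1$ active edges. By the model's action ordering (activations precede deactivations), the peak number of active edges within a round is attained between those two steps. Only nodes with a grandparent perform an activation, and at least two nodes lack a grandparent in every intermediate tree, namely $u_0$ itself and at least one of its current children (which exists since $n\ge 2$). Hence at most $n-2$ fresh edges can be added in a single round, yielding a peak of at most $(n-1)+(n-2)=2n-3$ active edges, as claimed.

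The only subtle point, and the place I would spend the most care, is verifying that the simultaneous activation/deactivation step really preserves the rooted-tree structure rather than creating a forest or multi-edges: I would argue that every new edge $u\to\text{grandparent}(u)$ keeps $u$ connected to $u_0$ through the now-child relation to its former grandparent, and that the set of ``former grandparents'' is precisely the set of surviving internal nodes, so the resulting graph is connected, acyclic, and has exactly $n-1$ edges, hence a tree. Everything else (the $O(\log n)$ time, the $2n-3$ edge cap, and the total $O(n\log n)$ activations) then follows immediately from the invariant.
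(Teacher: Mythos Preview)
Your proof is correct and follows essentially the same approach as the paper's own argument. The edge-count reasoning is identical: the graph at the start of each round is a tree with $n-1$ edges, and since the root together with at least one child of the root (the ``top two levels'') do not activate, at most $n-2$ new edges appear, giving the $2n-3$ peak. Your depth-halving invariant $\lceil k/2^r\rceil$ is a clean explicit justification of the $\lceil\log d\rceil$ running time that the paper simply asserts without proof, so in that respect your write-up is more complete than the original.
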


Our next algorithm called LineToCompleteBinaryTree transforms any line into a binary tree in $O(\log n)$ time with $O(n\log n)$ total edge activations, with $O(n)$ active edges per round and the degree of each node is at most $4$, provided that the nodes have a common sense of orientation. 

\noindent
In each round, each node activates an edge with its grandparent and afterwards it deactivates its edge with its parent. This process keeps being repeated by each node until they activate an edge with the root of the tree or if their grandparent has $2$ children.

\begin{proposition}
	Let $T$ be any line rooted at $u_0$ of diameter $d$. If the nodes have a sense of orientation on the line, then algorithm LineToCompleteBinaryTree transforms T into a binary tree centered at $u_0$ in $\ceil {\log d}\leq\log n$ time. LineToCompleteBinaryTree has at most $(n-1)+(n-2)=2n-3$ active edges per round, $n\log n$ total edge activations and bounded degree equal to $3$. 
\end{proposition}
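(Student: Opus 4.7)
The plan is to prove by induction on the round number $r$ the structural invariant that at the end of round $r$ the current graph $T_r$ decomposes into a complete binary tree rooted at $u_0$ (the \emph{absorbed part}, of depth at most $r$) together with at most $2^r$ vertex-disjoint chains, one hanging from each leaf of the absorbed part, of length at most $\lceil d/2^r\rceil$. The base case $r=0$ is the initial line, viewed as a single chain of length $d$ hanging off the trivial absorbed tree $\{u_0\}$. For the induction step I would analyse one round on a single chain $v\to v_1\to v_2\to\cdots\to v_k$ whose head $v$ is a leaf of the current absorbed tree: each $v_j$ with $j\geq 2$ attempts to attach to its grandparent $v_{j-2}$ (with $v_0:=v$). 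At the start of the round every such grandparent has exactly one child in the chain, and in particular $v$ has only $v_1$ as a child (no other chain competes because the invariant says $v$ is a leaf of the absorbed part), so the ``grandparent has 2 children'' check never fires; a direct inspection then shows that $v_1$ and $v_2$ become the two children of $v$, while the odd- and even-indexed tails become two new chains of length at most $\lceil k/2\rceil$ hanging from $v_1$ and $v_2$. Unioning over chains, the absorbed part grows to depth $r+1$ and each remaining chain halves.

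The running time follows immediately: once the maximum chain length drops to zero no node attempts a new attachment and the tree is stable, which happens by round $r=\lceil\log d\rceil\leq\log n$. At that point there are no chains left, so the resulting tree is a binary tree rooted at $u_0$.

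For the edge complexity, at every round boundary the graph is a tree on $n$ nodes and hence has exactly $n-1$ edges. Within a round, each non-root non-child-of-root node activates at most one edge (so at most $n-2$ new activations), producing a transient graph of size at most $(n-1)+(n-2)=2n-3$ before the deactivation step, and then returning to $n-1$ edges. Summing over at most $\lceil\log d\rceil\leq\log n$ rounds gives at most $n\log n$ total edge activations. The maximum degree is bounded by $3$: every non-root node always has exactly one parent (the attachment to a new parent and the deactivation of the old parent edge happen in the same round, and orientation can be updated locally along the new edge by a single exchange of messages), and the algorithm's rule ensures no node ever has more than $2$ children.

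The main obstacle I expect is justifying the invariant rigorously in the face of simultaneous edge updates across a chain. The subtle point is that the 2-children check uses the configuration at the \emph{start} of the round; this is exactly why the chain head $v$, which starts with only one child, is guaranteed to have room for its new grandchild $v_2$ even though $v$ is about to gain a child during the round. The invariant's guarantee that chain heads are leaves of the absorbed tree is what rules out any external competition for those child-slots, which is the only place a conflict could arise. Once this is nailed down, the rest of the argument is bookkeeping on chain lengths and on edge counts.
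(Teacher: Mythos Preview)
Your approach is correct in spirit and considerably more detailed than the paper's own proof, which is essentially a two-sentence reduction: it observes that LineToCompleteBinaryTree has \emph{exactly the same execution} as TreeToStar except for the termination criterion, and therefore inherits the running time and edge bounds of Proposition~\ref{TreeToStar} verbatim; it then argues the degree bound directly from the activation/deactivation rule. Your explicit structural invariant (absorbed tree plus halving chains) makes the running-time argument self-contained rather than relying on the analogy, which is a genuine improvement in rigour.

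Two points are worth tightening. First, the absorbed part is not a \emph{complete} binary tree once some chains run out before others: take $n=6$ and observe that after round~2 the leaf $u_2$ has only one child $u_4$, while $u_1$ has two. The correct invariant is that the absorbed part is a binary tree of depth at most $r$ in which every internal node has exactly two children; chains hang only from its leaves, and each chain has length at most $\lceil d/2^r\rceil$. This weaker form is all you use anyway. Second, on the degree bound: in the model the activation step precedes the deactivation step within a round, so a node $v_j$ in the interior of a chain transiently carries edges to its old parent $v_{j-1}$, its new parent $v_{j-2}$, its old child $v_{j+1}$, and its new child $v_{j+2}$, i.e.\ degree~$4$. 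The paper's own proof in fact derives~$4$ rather than the~$3$ stated in the proposition. Your sentence ``the attachment to a new parent and the deactivation of the old parent edge happen in the same round'' is correct but does not rule out this transient; if you mean the end-of-round degree (which is what the paper's formal Maximum Activated Degree measures), you should say so explicitly.
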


\subsection{General Strategy for Depth-$d$ Tree}

All algorithms developed in this paper solve the Depth-$d$ Tree problem starting from any initial network $G_s$ from a given family. Our aim is to always achieve this in (poly)logarithmic time while minimizing some of the edge-complexity parameters. There is a natural trade-off between time and edge complexity and each of our algorithms makes a different contribution to this trade-off. In particular, by paying for linear degree, our first algorithm manages to be optimal in all other parameters. If we instead insist on bounded degree, then our second algorithm shows that we can still solve Depth-$d$ Tree within an additional $O(\log n)$ factor both in time and total edge activations. Finally, if the bound on the degree is slightly relaxed to (poly)log$(n)$ then our third algorithm achieves $o(\log^2 n)$ time. 

All three algorithms are built upon the same general strategy that we now describe. For each of them we choose an appropriate gadget network, which has the properties of being ``close'' to the target network $G_f$ to be constructed and of facilitating efficient growth. For example, the $G_f$ of our first algorithm is a spanning star and the chosen gadget is a star graph, while the $G_f$ of our second algorithm is a complete binary tree and the chosen gadget is the union of a ring and a complete binary tree spanning that ring (called a \emph{wreath}).

Our algorithms satisfy the following properties. The nodes are always partitioned into committees, where each committee is internally organized according to the corresponding gadget network of the algorithm and has a unique leader, which is the node with maximum id in that committee. Initially, every node forms its own trivial committee and committees increase their size by competing with nearby committees. In particular, committees select and, if possible, merge with the maximum-id committee in their neighborhood. Prior to merging, such selections may give rise to pairs of committees, in which case merging is immediate, but also to rooted trees of committees where all selections are oriented towards the root and merging has to be deferred. In the latter case, the winning committee will eventually be the root of the tree, at which point all other committees of the tree will have merged to it. In all cases, merging must be done in such a way that the gadget-like internal structure of the winning committee is preserved. This growth guarantees that eventually there will be a single committee spanning the network. At that point, the leader of that committee (which is always the node with maximum id in the network) is an elected unique leader. Moreover, the gadget-like internal structure of that committee can be quickly transformed into the desired target network, due to the by-design close distance between them. For example, in the algorithm forming a star no further modification is required, while in the algorithm forming a complete binary tree, a ring is eliminated from a wreath so that only the tree remains.  

Our algorithms are designed to operate in asynchronous phases, with the guarantee that in every phase pairs of committees merge and trees of committees halve their depth. This can be used to show that in all our algorithms a single committee will remain within $O(\log n)$ phases. Each phase lasts a number of rounds which is within a constant factor of the maximum diameter of a committee involved in it, which is in turn upper bounded by the diameter of the final spanning committee. The latter is always equal to the diameter of the chosen gadget as a function of its size. The total time is then given by the product of the number of phases and the diameter of the chosen gadget. For example, in our first algorithm the gadget is a star and the running time (in rounds) is $O(1)\cdot O(\log n)$, in our second algorithm the gadget is a wreath of diameter $O(\log n)$ and the running time is $O(\log n)\cdot O(\log n)=O(\log^2 n)$, while in our third algorithm the gadget is a modified wreath, called ThinWreath, of diameter $o(\log n)$ and the running time is $o(\log n)\cdot O(\log n)=o(\log^2 n)$. Given that every node activates at most one edge per round, the total number of edge activations of our algorithms is within a linear factor of their running time.  

\section{An Edge Optimal Algorithm for General Graphs}
\label{sec:algorithm-1}

Our first algorithm, called GraphToStar, solves the Depth-$d$ Tree problem, for $d=1$. In particular, by using a star gadget it transforms any initial graph $G_s$ into a target spanning star graph $G_f$. Its running time is $O(\log n)$ and it uses an optimal number of $O(n\log n)$ total edge activations and $O(n)$ active edges per round. Optimality is established by matching lower bounds, presented in Section \ref{SEC:LOWER-BOUNDS}.

\smallskip

\noindent\textbf{Algorithm GraphToStar} 

\smallskip

Each committee $C(u)$ is a star graph where the center node $u$ is the leader of the committee and all other nodes are followers. The leader node of each committee is the node with the greatest UID in that committee. The UID of each committee is defined by the UID of that committee's leader. The winning committee in the final graph, denoted $C(u_{max})$, is the one with the greatest UID in the initial graph. Every node starts as a leader and forms its own committee as a single node. 
The original edges of $G_s$ are assumed to be maintained until the last round of the algorithm and the nodes can always distinguish them. The algorithm proceeds in phases, where in every phase each committee $C(u)$ executes in one of the following modes, always executing in selection mode in phase 1.

\begin{itemize}

\item \textbf{Selection:} If $C(u)$ has a neighboring committee $C(z)$ such that $UID_z>UID_u$ and $C(z)$ is not in \emph{pulling mode}, then, from its neighboring committees not in pulling mode, $C(u)$ \emph{selects} the one with the greatest UID; call the latter $C(v)$. It does this, by $u$ first activating an edge $e_1$ with a potential neighbor in $C(v)$. Then $u$ activates an edge with $v$, deactivates the previous edge $e_1$, and $C(u)$ enters either the merging or pulling mode. In particular, if $C(v)$ did not select, then $C(u)$ and $C(v)$ form a pair and $C(u)$ enters the merging mode. If on the other hand $C(v)$ selected some $C(w)$, then $C(u)$ enters the pulling mode. 

Otherwise, $C(u)$ did not select. If $C(u)$ was selected then it enters the waiting mode, else it remains in the selection mode.

If $C(u)$ has no neighboring committees, then it enters the termination mode. 

\item \textbf{Merging:} Given that in the previous phase the leader of $C(u)$ activated an edge with the leader of $C(v)$, each follower $x$ in $C(u)$ activates the edge $xv$ and deactivates the edge $xu$. The result is that $C(u)$ and $C(v)$ have merged into committee $C(v)$, which remains a star rooted at $v$ now spanning all nodes in $V(C(u))\cup V(C(v))$. Therefore, $C(u)$ does not exist any more.

\item \textbf{Pulling:} Given that in the previous phase the leader of $C(u)$ activated an edge with the leader of $C(v)$ and the leader of $C(v)$ activated an edge with the leader of $C(w)$, $u$ activates $uw$, deactivates $uv$, and $C(u)$ remains in pulling mode. If, instead, the leader of $C(v)$ did not activate in the previous phase, then $C(u)$ enters the merging mode.

\item \textbf{Waiting:} If $C(u)$ has no neighboring committees, $C(u)$ enters the termination mode. If in the previous phase no committee $C(v)$ activated an edge with $u$, then $C(u)$ enters the selection mode. Otherwise $C(u)$ remains in the waiting mode.

\item \textbf{Termination:} $C(u)$ deactivates every edge in $E(G_s)\setminus E(C(u))$. In particular, each follower $x$ in $C(u)$ deactivates all active edges incident to it but $xu$. 
\end{itemize}

\smallskip

\noindent\textbf{Correctness}

	\begin{lemma}
	Algorithm GraphToStar solves Depth-1 Tree.
	\end{lemma}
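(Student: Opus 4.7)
The plan is to prove correctness by establishing a structural invariant on committees together with a progress argument showing that all committees eventually merge into $C(u_{\max})$, and finally that the termination step yields a spanning star. I would not aim for a running-time bound here, only functional correctness.

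First, I would prove by induction on the phase number the following structural invariant: at the start of every phase the set of active edges is $E(G_s)$ together with a disjoint union of \emph{committee edges}, where for each committee $C(u)$ the committee edges on $V(C(u))$ form a star centered at its leader $u$ and $u$ has the greatest UID in $V(C(u))$; in addition, each committee currently in merging or pulling mode has exactly one inter-committee bridge edge activated by its leader. The base case is immediate: every node is its own trivial committee, so there are no committee edges and no bridges. For the inductive step I would check each of the five mode transitions. Selection adds a single bridge edge without touching any internal star. Merging rewires every follower of the losing committee $C(u)$ from $u$ to $v$ and removes the bridge edge, so $V(C(u))$ is absorbed into $C(v)$ as a star centered at $v$; since $\mathrm{UID}_v > \mathrm{UID}_u$, the new leader is still the max-UID node of the merged committee. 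Pulling re-targets the single bridge of $u$ one link further along without modifying any internal star. Waiting and Termination preserve the invariant trivially.

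Next, I would argue progress. Within a given phase the selection relation is a partial function mapping each selecting committee to a strictly greater-UID neighbor, so restricted to a connected component it is a forest oriented toward a unique locally-maximum committee at the root. A root that receives exactly one incoming selection merges with its child in the next phase. In any longer chain, each pulling step moves every pointer one link further along the chain, so within a bounded number of additional phases the chain collapses into its root via a cascade of mergings. Since committees remain globally connected through the preserved edges of $E(G_s)$, whenever two or more committees remain some committee has a strictly-greater-UID neighbor and hence either selects or is selected. The number of committees is therefore monotonically non-increasing and strictly decreases over any sufficiently long block of phases, so the process terminates with a single committee; because every merging step strictly increases the leader's UID, this final committee must be $C(u_{\max})$.

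Finally, once $C(u_{\max})$ has no neighboring committees it enters the termination mode and every follower deactivates all of its incident edges except the one to $u_{\max}$. By the structural invariant the surviving edges are exactly the star centered at $u_{\max}$ spanning $V$, and $u_{\max}$ is the unique node with $\mathit{status}=\text{Leader}$, so the output is a Depth-$1$ Tree as required. The main obstacle I expect is handling asynchrony cleanly: different committees may be in different modes during the same global round, so the selection-forest picture is not a single clean snapshot. I would address this by indexing phases \emph{per committee} rather than globally, and verifying that each mode transition of a committee is triggered only by actions of neighboring committees that are in compatible modes, so that the structural invariant and the monotone collapse of selection forests both hold relative to these committee-local phase counters.
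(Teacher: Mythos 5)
Your proposal is correct and follows essentially the same route as the paper's proof: show that $C(u_{\max})$ can never die, that every other committee eventually dies (via the selection-forest/pulling collapse and the local-maximality-plus-connectivity argument ruling out a committee stalling in selection forever), and that termination then leaves a spanning star with $u_{\max}$ as unique leader. The explicit star-shaped-committee invariant and the per-committee phase indexing you add are details the paper leaves implicit, but they do not change the underlying argument.
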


	\begin{proof}
It suffices to prove that in any execution of the algorithm, one committee eventually enters the termination mode and that this committee can only be $C({u_{max}})$. If this holds, then by the end of the termination phase $C({u_{max}})$ forms a spanning star rooted at $u_{max}$ and $u_{max}$ is the unique leader of the network. This satisfies all requirements of Depth-1 Tree.

A committee \emph{dies} (stops existing) only when it merges with another committee by entering the merging mode. First observe that there is always at least one \emph{alive} committee. This is $C({u_{max}})$, because entering the merging mode would contradict maximality of $u_{max}$. We will prove that any other committee eventually dies or grows, which due to the finiteness of $n$ will imply that eventually $C({u_{max}})$ will be the only alive committee.

In any phase, but the last one which is a termination phase, it holds that every alive committee $C(u)$ is in one of the selection, merging, pulling, and waiting modes. If $C(u)$ is in the merging mode, then by the end of the current phase it will have died by merging with another committee $C(v)$. It, thus, remains to argue about committees in the selection, pulling, and waiting modes. 

We first argue about committees in the pulling mode. Denote their set by $\mathcal{C}_p$. Observe that, in any given phase, the committees in pulling mode form a forest $F$, where each $C(u)\in \mathcal{C}_p$ belongs to a tree $T$ of $F$. Any such tree executes the TreeToStar algorithm (from Section \ref{SUBSEC:BASIC}) on committees and satisfies the invariant that its root committee $C_r$ is always in the waiting mode and $C_r$'s children are in the merging mode. In every phase, $C_r$'s children merge with $C_r$ and their children become the new children of $C_r$ and enter the merging mode. It follows that all non-root committees in $T$ will eventually merge with $C_r$. Thus, all committees in pulling mode eventually die.

It remains to argue about committees in the selection and waiting modes. We start from the waiting mode. Any committee $C(u)$ in waiting mode is a root of either a tree in the forest $F$ 
or of a star of committees in which all leaf-committees are merging with $C(u)$. In both cases, $C(u)$ eventually exits the waiting mode and enters the selection mode. This happens as soon as all other committees in its tree or star have merged to it, thus $C(u)$ has grown upon its exit.    

Now, a committee $C(u)$ in the selection mode can enter any other mode. As argued above, if it enters the merging or pulling modes it will eventually die and if it enters the waiting mode it will eventually grow. Thus, it suffices to consider the case in which it remains in the selection mode indefinitely. This can only happen if all current and future neighboring committees of $C(u)$, including the ones to eventually replace neighbors in pulling mode, have an id smaller than $UID_u$. But each of these must have selected a neighboring $C(w)$, such that $UID_w>UID_u$, otherwise it would have selected $C(u)$. Any such selection, results in $C(w)$ (or a $z$, such that $UID_z>UID_w$ in case $w$ belongs to a tree) becoming a neighbor of $C(u)$, thus contradicting the indefinite local maximality of $UID_u$.
	
	\end{proof}

	\noindent\textbf{Time Complexity}

	Let us move on to proving the time complexity of our algorithm. At the beginning, we are going to ignore the number of rounds within a phase, and we are just going to study the maximum number of phases before a single committee is left. We define $S(C(u_s))$ to be the \emph{size} of committee $C(u)$ in phase $s$.

	\begin{lemma}\label{trees}
		Consider committee $C(v)$ that is in waiting mode between phases $s$ and $s+j$. If the size of every committee in phase $s$ is at least $2^k$, then the size of committee $C(v)$ once it enters the selection mode in phase $s+j+1$ is at least $2^{k+j}$.
	\end{lemma}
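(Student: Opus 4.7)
The plan is to analyze the selection tree $T$ rooted at $C(v)$ at the beginning of phase $s$ and show that remaining in waiting for $j+1$ phases forces this tree to contain at least $2^j$ non-root committees. Combined with the hypothesis that every committee in phase $s$ has size at least $2^k$, this will yield the desired bound on $|C(v)|$ at the moment it re-enters selection.

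Because $C(v)$ is in waiting at phase $s$, at least one committee selected it in phase $s-1$, so $T$ has depth $D\ge 1$ and every other committee in $T$ sits below $C(v)$ via a chain of selections. Since $C(v)$ would otherwise itself have selected upward, $C(v)$ carries the maximum id in $T$, and hence every non-root committee of $T$ is destined to be absorbed by $C(v)$ during the waiting period.

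Next I would argue that the dynamics of $T$ during waiting mirror the committee-level analogue of TreeToStar from Section~\ref{SUBSEC:BASIC}, halving the depth each phase. In one waiting phase the depth-$1$ children are in merging mode and absorb into $C(v)$, while every deeper committee $C(w)$ applies the pulling rule and attaches to its current grandparent; in the boundary case where that grandparent is itself being absorbed, $C(w)$ becomes a new neighbor of $C(v)$ via the just-absorbed leader (now a follower of $C(v)$) and enters merging the following phase. Inducting on the depth shows that after $r$ waiting phases the portion of $T$ still to be drained has depth at most $\lfloor D/2^r\rfloor$. The waiting-exit rule fires when, in the previous phase, no other committee's leader activated a new edge with $v$; during the drain such leader-to-leader activations arise only from committees at current depth $2$ pulling to $v$ as their grandparent, so once the depth drops to $1$ the remaining children merge in a single phase and $C(v)$ enters selection the phase after. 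Consequently the total waiting time equals $\lfloor \log_2 D\rfloor + 1$, and $j+1$ waiting phases therefore force $D\ge 2^j$.

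Putting the pieces together, $T$ contains at least $D+1\ge 2^j+1$ committees each of size at least $2^k$; since all non-root committees of $T$ are absorbed into $C(v)$ by phase $s+j+1$ (and any additional subtrees that happen to attach to $C(v)$ during the waiting period only increase its size), we obtain $|C(v)| \ge (2^j+1)\cdot 2^k > 2^{k+j}$, as required. I expect the main technical difficulty to lie in the depth-halving step, specifically in certifying that when several consecutive levels of $T$ are simultaneously merging, pulling, or depending on an ancestor that has just been absorbed, the pulling rule still reattaches each deeper committee to a live ancestor so that no path of $T$ stalls and the halving proceeds uniformly across the whole tree.
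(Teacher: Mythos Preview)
Your proof is correct and rests on the same underlying fact as the paper's, namely the TreeToStar-style halving of the committee tree while the root waits, but you organize the counting differently. The paper argues \emph{forward}: it asserts that during waiting the number of committees merging into the root doubles from phase to phase, so after $j$ waiting phases at least $1+2+\cdots+2^{j-1}=2^j-1$ committees (plus the root) have been absorbed, each of size $\ge 2^k$. You argue \emph{backward}: from the depth-halving you infer that $j+1$ waiting phases force the initial depth $D\ge 2^j$, and then you lower-bound the number of absorbed committees by the $D+1\ge 2^j+1$ committees lying on a longest root-to-leaf path.

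Your route has two mild advantages. First, the paper's per-phase ``children double'' assertion is not literally true for arbitrary trees (nor even for a path, where the number absorbed per phase is $1,2,2,4,4,\ldots$ rather than $1,2,4,\ldots$); what is actually true, and what your depth argument captures cleanly, is that the \emph{cumulative} number absorbed after $i$ waiting phases is at least $2^i$. Second, your argument makes explicit that only the longest path matters for the lower bound, which is exactly the worst case. The cost is that you must be a bit more careful about the exit condition for waiting (your last paragraph correctly flags this); the paper sidesteps that by working directly with absorption counts. Either way the bound $2^{k+j}$ falls out identically.
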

	
	\begin{proof}
	Any committee $C(u)$ in waiting mode is a root of (i) either a tree in the forest $F$ 
	or (ii) a star of committees in which all leaf-committees are merging with $C(u)$. 
	
	For case (i): root committee $C(u)$ is always in waiting mode and $C(u)$'s children are in merging mode. In every phase, $C(u)$'s children merge with $C(u)$ and their children become the new children of $C(u)$ and enter the merging mode. It follows that all non-root committees in the tree will eventually merge with $C(u)$ in some phase $j$. Note that due to the nature of the pulling mode, in each phase the children of $C(u)$ are doubled. This is true because the pulling mode is simulating the TreeToStar algorithm on committees. Recall that we assumed that the size of every committee is $S(C(v_s))\geq2^k$ in phase $s$. Then in each phase $s+i$, where $0<i\leq j$, the size of the root committee is $S(C(u_{s+\log i}))= S(C(u_s))+2\cdot S(C(v_s))+4\cdot S(C(v_s))+\ldots+2^{\log (i-1)}\cdot S(C(v_s))=2^{k+i}$.
	
	For case (ii): root committee $C(u_s)$ is in waiting mode and has at least one leaf committee in phase $s$. After the leaf committee merges in $1$ phase, committee $C(u_{s+1})$ has size $S(C(u_{s+1}))\geq S(C(u_s))+S(C(u_s))=2^k+2^k=2^{k+1}.$
	\end{proof}

	\begin{lemma}\label{wait}
	If committee $C(u)$ stays in the selection mode for $p\geq 4$ consecutive phases, then $C(u)$ has a neighboring committee $C(v)\in \mathcal{C}_p$ that belongs to a tree $T$ for at least $p$ phases.
	\end{lemma}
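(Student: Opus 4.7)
The idea is that $C(u)$'s inability to select for $p\geq 4$ consecutive phases can only be sustained by a nearby pulling tree, and in fact the obstruction must be the \emph{same} higher-UID pulling-mode neighbour throughout. I will split the argument into two claims.

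\emph{Claim 1.} In each of the $p$ phases, $C(u)$ has at least one neighbour $C(v)$ with $UID_v > UID_u$ that is in pulling mode. The reason is that if $C(u)$ had any neighbour with larger UID that is not in pulling mode, the selection rule would force $C(u)$ out of selection mode, contradicting the hypothesis. The remaining possibility is that $C(u)$ has no neighbour of larger UID at all; I will rule this out (beyond a short initial transient) by tracking the merging dynamics. Any lower-UID neighbour $C(v')$ in selection mode that does not select $C(u)$ must have selected a higher-UID non-pulling neighbour $C(w)$ with $UID_w > UID_u$, placing $C(w)$ at distance $2$ from $C(u)$ through $C(v')$. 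In the next phase, the merger of $C(v')$ into $C(w)$ transfers the active edges between $V(C(u))$ and $V(C(v'))$ over to $V(C(w))$, so $C(w)$ becomes a neighbour of $C(u)$; the selection rule then forces $C(u)$ out unless $C(w)$ is in pulling mode. The hypothesis $p\geq 4$ provides enough slack to absorb this $O(1)$-phase transient and guarantee phases in which $C(u)$ has a higher-UID pulling neighbour $C(v)$.

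\emph{Claim 2.} Such a $C(v)$ persists in pulling mode throughout the entire window. The pulling-mode description shows that the only way $C(v)$ exits pulling is to enter merging and be absorbed by the root $C(R)$ of its tree $T$. During this merger, followers of $C(v)$ only deactivate their edges to $v$ and activate edges to the leader of $C(R)$, so the active edges between $V(C(u))$ and $V(C(v))$ are untouched and simply become edges between $V(C(u))$ and $V(C(R))$. Because $C(R)$ is in waiting mode (hence non-pulling) and $UID_R > UID_v > UID_u$, the selection rule forces $C(u)$ to select $C(R)$ in the very next phase, contradicting $C(u)$'s continued presence in selection mode. Therefore $C(v)$ stays in pulling mode, and hence in the same tree $T$, for all $p$ phases, giving the required committee.

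The main obstacle is the case analysis in Claim 1: carefully ruling out the sustained ``no higher-UID neighbour'' regime via the merging propagation argument is what pins down the $p\geq 4$ threshold. Once that is in place, Claim 2 is a clean application of the selection rule to the merged root, since merging never severs the $G_s$-edges from $V(C(u))$ to $V(C(v))$ and always elevates them to edges incident to a non-pulling, higher-UID committee.
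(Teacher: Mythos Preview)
Your proposal is correct and follows essentially the same approach as the paper: both argue by case analysis on the modes of $C(u)$'s neighbours, showing that any non-pulling higher-UID neighbour (or, via one merge step, any lower-UID neighbour that selected elsewhere) forces $C(u)$ out of selection within a constant number of phases, which is where the threshold $p\geq 4$ comes from. Your decomposition into Claim~1 (existence of a higher-UID pulling neighbour) and Claim~2 (persistence of that neighbour) is slightly more explicit than the paper's argument, which simply assumes a non-tree neighbour and derives a contradiction without separately justifying that the \emph{same} $C(v)$ remains in its tree for the full window; your Claim~2 fills that gap via the observation that absorption of $C(v)$ into its root $C(R)$ would hand $C(u)$ a non-pulling higher-UID neighbour. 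One minor wrinkle: your Claim~1 is phrased as holding ``in each of the $p$ phases'' but you then invoke a transient; it would be cleaner to state Claim~1 only from the first phase after the transient and then let Claim~2 carry the persistence, but this is cosmetic and matches the paper's level of rigour.
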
	

	\begin{proof}
	Let us assume that committee $C(u)$ stays in the selection mode for $p\geq 4$ consecutive phases while having a neighbor $C(v)$ that does not belong to tree $T$. If $C(v)$ does not belong to a tree in phase $k$, then it cannot be in pulling mode. If $C(v)$ is in selection mode in phase $k$ and $C(v)$ does not select $C(u)$ and $C(u)$ does not select $C(v)$, then $C(v)$ has a neighbor $C(w)$ where $UID_w>UID_v>UID_u$ and $C(v)$ selected $C(w)$. Then $C(v)$ enters the merging mode in phase $k+1$ and gets merged with $C(w)$. In phase $k+2$ committee $C(w)$ becomes a neighbor of $C(v)$ and $C(w)$ enters the selection mode. Therefore $C(v)$ would select $C(w)$ in phase $k+2$ and exit the selection mode. Thus, a contradiction. If $C(v)$ is in waiting mode in phase $k$, it cannot be the root of a tree, and is the root of a star. Therefore in phase $k+1$ it will enter the selection mode and based on the analysis of the previous paragraph, in phase $k+3$ $C(u)$ will exit the selection mode. Thus, a contradiction.
	\end{proof}

	\begin{lemma}\label{minimum}
	Let us assume that the minimum size of a committee in phase $s$ is $2^k$. If committee $C(u)$ stays in the selection mode from phase $s$ to phase $s+p$ where $p\geq 4$ consecutive phases, then in phase $s+p+1$ it will select or get selected by a committee $C(v)$ of size $2^{k+p-4}$.
	\end{lemma}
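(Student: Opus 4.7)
The plan is to combine Lemma~\ref{wait} with Lemma~\ref{trees} by tracking the growth of the root of the pulling-tree that blocks $C(u)$. Since $C(u)$ remains in selection mode for the $p+1\geq 5$ consecutive phases from $s$ to $s+p$, Lemma~\ref{wait} furnishes a neighboring committee $C(v)\in\mathcal{C}_p$ which belongs to some tree $T$ throughout at least $p$ of those phases. Let $C(r)$ be the root of $T$. By definition of the pulling-tree, $C(r)$ is in waiting mode as long as $T$ exists, and $UID_r$ is the maximum UID in $T$. Moreover, since $C(v)$ must have selected some committee $C(w)$ with $UID_w>UID_u$ (otherwise $C(v)$ would have selected $C(u)$ and $C(u)$ would be in waiting rather than selection mode), we have $UID_r\geq UID_w>UID_u$.

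Next I would use Lemma~\ref{trees} to lower bound $|V(C(r))|$ at the moment $C(r)$ exits waiting mode. The hypothesis that every committee has size at least $2^k$ in phase $s$ is exactly the input of that lemma. Because $C(r)$ is in waiting mode throughout the entire interval during which $C(v)$ is a non-root node of $T$, Lemma~\ref{trees} yields that when $C(r)$ first enters selection mode (say in phase $s+j+1$) its size is at least $2^{k+j}$. Phase-accounting then gives $j\geq p-4$: up to one phase is absorbed to form $T$ from the initial selections, up to one phase while the last descendants of $C(v)$ merge into $C(r)$, up to one phase for $C(r)$ itself to switch from waiting to selection mode, and one final phase until the selection rule can fire in the next round; these four unit delays do not overlap, so $C(r)$ has size at least $2^{k+p-4}$ by phase $s+p+1$.

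Finally, I would argue that $C(u)$'s partner in phase $s+p+1$ is exactly $C(r)$. Throughout the life of $T$ the successive committees that absorb $C(v)$'s nodes inherit $C(v)$'s neighborhood, so some committee in $T$ is a neighbor of $C(u)$ in every phase, and once the tree fully collapses this neighbor is $C(r)$ itself. In the first phase in which $C(r)$ is in selection mode (hence not in pulling mode) and is a neighbor of $C(u)$, the selection rule forces either $C(u)$ to select $C(r)$ or $C(r)$ to select $C(u)$ (since $UID_r>UID_u$ and $C(u)$ is also a non-pulling neighbor of $C(r)$); by the previous paragraph this first occurs in phase $s+p+1$, and the partner has size at least $2^{k+p-4}$, as claimed. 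The main obstacle I expect is precisely the bookkeeping in step~2: one must verify that the constant phase offsets between the indexing used in Lemma~\ref{wait} (pulling-mode membership of $C(v)$), the indexing used in Lemma~\ref{trees} (waiting-mode duration of the root), and the phase in which $C(u)$ finally exits selection, sum to at most four without double-counting any delay.
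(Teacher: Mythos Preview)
Your proposal is correct and follows essentially the same route as the paper: invoke Lemma~\ref{wait} to obtain a neighboring committee $C(v)$ trapped in a pulling-tree $T$, apply Lemma~\ref{trees} to the root of $T$ to get the $2^{k+p-O(1)}$ size bound, and argue that this root is the committee $C(u)$ eventually pairs with. Your write-up is in fact more explicit than the paper's (which skips the $UID_r>UID_u$ justification and the neighborhood-inheritance argument entirely), and your self-identified obstacle---the constant-phase bookkeeping---is indeed the only place where care is needed; the paper handles it by simply asserting ``committee $C(w)$ was on a tree of depth at least $p-3$'' and quoting Lemma~\ref{trees}.
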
		
	
	\begin{proof}
	From Lemma \ref{wait} it follows that, since $C(u)$ is in the selection mode for at least $4$ phases, there exists a neighbor $C(v)$ that belongs to a tree $T$ with. Since $C(u)$ exits the selection mode in phase $s+p$, it either selects committee $C(w)$ that the root of tree $T$ or $C(w)$ selects $C(v)$. Since $C(u)$ was in the selection phase for $p$ phases, committee $C(w)$ was on a tree of depth at least $p-3$. From Lemma \ref{trees} it follows that the size of $C(w)$ is $2^{k+p-3}$.
	\end{proof}

	\begin{lemma}\label{final}
		Assume that the minimum size of every committee in phase $s$ is $2^k$ and that every committee will have exited the selection mode in phase $s+p$ at least once. The size of all winning committees in phase $p+1$ is at least $2^{k+p-4}$.
	\end{lemma}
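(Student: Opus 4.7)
The plan is to fix an arbitrary winning committee $C(u)$ that is alive at phase $s+p+1$ and trace its trajectory back through the interval $[s,\,s+p+1]$. By hypothesis $C(u)$ exits the selection mode at least once during the first $p+1$ phases, so I would let $s+q$, with $0\le q\le p$, denote the first such phase. Since exits via the merging or pulling modes eventually kill the committee, and $C(u)$ is alive at phase $s+p+1$, the transition at phase $s+q$ must place $C(u)$ into the waiting mode, i.e., some neighboring committee has selected it.

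I would then split into two cases by the length of the initial selection stretch. In the easy case $q\le 4$, the committee enters the waiting mode by phase $s+4$ at the latest; Lemma~\ref{trees} then doubles its size in every phase it spends there. Since the minimum committee size across the network is monotone non-decreasing across phases (committees only merge, never split), Lemma~\ref{trees} applies afresh whenever $C(u)$ re-enters waiting after a brief return to selection. Over the at least $p-3$ phases remaining, the compounded growth from an initial size $\ge 2^k$ yields size $\ge 2^{k+p-3}\ge 2^{k+p-4}$. In the other case $q\ge 5$, Lemma~\ref{minimum} applies at phase $s+q$ with parameter $q-1\ge 4$: the committee that selects $C(u)$ has size $\ge 2^{k+q-5}$, so immediately after merging $C(u)$ has size $\ge 2^{k+q-5}$. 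Over the remaining $p-q+1$ phases up to $s+p+1$, each waiting phase again at least doubles $C(u)$'s size by Lemma~\ref{trees}, and the accumulated factor of at least $2^{p-q+1}$ gives a final size $\ge 2^{k+q-5}\cdot 2^{p-q+1}=2^{k+p-4}$.

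The main obstacle is the bookkeeping for committees that cycle between selection and waiting modes more than once within the window $[s,\,s+p+1]$, because Lemmas~\ref{trees} and~\ref{minimum} are phrased for a single contiguous block. I would handle this by noting (i) that the minimum committee size in the network is monotone non-decreasing across phases, so the premise of Lemma~\ref{trees} is preserved whenever $C(u)$ re-enters waiting, and (ii) that any subsequent selection stretch of length $\ge 5$ invokes Lemma~\ref{minimum} again, contributing only additional growth on top of what has already been accumulated. Multiplying growth factors across consecutive blocks therefore preserves the claimed bound $|C(u)|\ge 2^{k+p-4}$ at phase $s+p+1$, establishing the lemma.
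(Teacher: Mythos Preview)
The paper's own proof is much terser than yours: it case-splits on $p$ itself, observing that for $p\le 4$ every winning committee has merged at least once (hence size $\ge 2^{k+1}\ge 2^{k+p-4}$), and for $p\ge 4$ invoking Lemma~\ref{minimum} once to obtain a winner of size at least $2^{k+p-3}$ from the final exit. It does not attempt to follow an arbitrary winning committee through possible selection/waiting cycles.

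Your more ambitious attempt to do so has a real gap in the cycling bookkeeping. You cover waiting blocks via Lemma~\ref{trees} and selection blocks of length $\ge 5$ via Lemma~\ref{minimum}, but you give no quantitative control over selection blocks of length $\le 4$. During such a block $C(u)$ does not grow, and the network-wide minimum committee size need not increase either, since $C(u)$ itself may be that minimum; your monotonicity observation~(i) therefore adds nothing here. A trajectory like ``four selection phases, one waiting phase'' repeated across the window yields only about $p/5$ doublings, hence size roughly $2^{k+p/5}$, far below $2^{k+p-4}$ for large $p$. Consequently the closing sentence (``multiplying growth factors across consecutive blocks therefore preserves the claimed bound'') is not justified: the product of block-wise factors can fall short. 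To rescue the argument you would need either to rule out repeated short selection stretches for a surviving committee, or to show that the committee selecting $C(u)$ at the end of each short stretch has itself grown during those phases by enough to compensate for the dead time.
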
	
	
	\begin{proof}
	Trivially, if $p\le 4$ the winning committee has size at least $2^{k+1}$ in phase $p+1$ since it has merged with at least one other committee.
	
	From Lemma \ref{minimum} it follows that if $p\geq 4$ the winning committee between $C(w)$ and $C(u)$ will have size at least $2^{k+p-3}$ in phase $s+p+1$.
	\end{proof}

	\begin{lemma} 
		After $O(\log n)$ phases, there is only a single committee left in the graph.
	\end{lemma}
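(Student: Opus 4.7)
The plan is to prove that the minimum size of an alive committee at least doubles after every constant number of phases, which combined with the trivial upper bound of $n$ on any committee's size yields the $O(\log n)$ bound on the number of phases.

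First, I would fix a small constant $p \geq 5$ and consider an arbitrary window of $p+1$ consecutive phases starting at some phase $s$ at which the minimum size of an alive committee is $2^k$. Using the mode analysis from the correctness lemma together with Lemmas \ref{wait} and \ref{minimum}, I would verify the hypothesis of Lemma \ref{final} in this window: committees in merging mode die within one phase (so they are not alive at the end), non-root committees of a pulling tree are absorbed by their waiting root, and any committee stuck in the selection mode for $p \geq 4$ consecutive phases is forced by Lemma \ref{minimum} to either select or be selected by a committee of size at least $2^{k+p-4}$, at which point it exits the selection mode. Hence every committee alive at the end of the window either never entered the selection mode during it or exited it at least once.

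With this verification in hand, I would apply Lemma \ref{final} directly to conclude that every surviving committee has size at least $2^{k+p-4} = 2^{k+1}$ at the end of the window. Since every node starts in its own singleton committee of size $2^0 = 1$ in phase $1$, iterating this amortized doubling over $\lceil \log n \rceil$ successive windows forces the minimum alive-committee size to reach $n$, at which point only a single committee can remain. The total phase count is at most $(p+1)\lceil \log n \rceil = O(\log n)$, as required.

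The main obstacle will be handling committees that spend a long stretch in the waiting mode at the root of a deep pulling tree, which never re-enter the selection mode during the window and hence do not match the literal hypothesis of Lemma \ref{final}. However, Lemma \ref{trees} guarantees that such a root committee grows by a factor of $2$ in every phase it remains in waiting mode, so after $p+1$ phases its size is already at least $2^{k+p+1} \geq 2^{k+1}$; this easily absorbs into the same amortized doubling argument, and in fact over-satisfies the required lower bound, so these long-waiting roots only help rather than hurt the growth invariant.
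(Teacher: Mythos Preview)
Your approach is essentially the same as the paper's: both derive the bound from the amortized doubling of the minimum committee size established in Lemma~\ref{final}. The paper's own proof is in fact a single line that just invokes Lemma~\ref{final}, whereas you spell out the windowing argument and the separate treatment of long-waiting roots via Lemma~\ref{trees} more explicitly than the paper does.
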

	
	\begin{proof}
	From Lemma \ref{final}, it follows that after $O(\log n)$ phases, there will be a committee with at least $2^{\log n}$ nodes.
	\end{proof}

	\begin{lemma}
	
	Each phase consists of at most $2$ rounds.
	\end{lemma}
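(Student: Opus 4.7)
The plan is to examine each of the five modes (Selection, Merging, Pulling, Waiting, Termination) separately and verify that the prescribed edge operations can be carried out in at most $2$ rounds. Waiting is a pure status check, and Termination has each follower $x$ of $C(u)$ locally deactivate every incident edge other than $xu$, so both finish within a single round. For Merging, each follower $x \in C(u)$ activates $xv$ and deactivates $xu$, which is legal in a single round because $v$ is at distance $2$ from $x$ via $u$, thanks to the edge $uv$ activated during the preceding Selection phase. Pulling is analogous: $u$ activates $uw$ and deactivates $uv$ in one round, using the distance-$2$ shortcut via the previously activated $vw$.

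The only nontrivial case is Selection, where the claim is that exactly $2$ rounds suffice. The leader $u$ of $C(u)$ must activate an edge with the leader $v$ of a selected neighboring committee $C(v)$. Because each committee is internally a star (so its center is at distance $1$ from every other member) and $C(u), C(v)$ are neighbors, there is an active cross-committee edge $xy$ with $x \in C(u)$, $y \in C(v)$, whence the leader-to-leader distance satisfies $d(u,v) \leq 3$. In the worst case $x \neq u$ and $y \neq v$, giving $d(u,v) = 3$. The first round activates the intermediate edge $e_1 = uy$, which is a legal activation since $y$ is a distance-$2$ potential neighbor of $u$ via $x$. The second round activates $uv$ (legal since after round $1$, $v$ is at distance $2$ from $u$ via the fresh edge $e_1$) and deactivates $e_1$. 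In the degenerate cases where $u$ is already within distance $2$ of $v$, one round would suffice, but $2$ rounds are always enough.

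Taking the maximum over the five modes yields the claimed $2$-round bound per phase. The main subtlety I expect to handle carefully is that the intermediate activation of $e_1$ must occur in a round strictly earlier than the activation of $uv$, so that at the moment $uv$ is created, $v$ is genuinely a potential neighbor of $u$ under the model's rule that new activations may only target distance-$2$ nodes. This matches the ordering of actions within a round specified in Section~\ref{subsec:model}, and confirms that Selection cannot be compressed below two rounds while Merging, Pulling, Waiting, and Termination each finish strictly faster, so each phase fits within $2$ rounds.
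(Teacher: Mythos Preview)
Your proof is correct and follows the same mode-by-mode case analysis as the paper, only with considerably more justification: the paper's own argument is a bare one-liner listing the round counts (Selection $2$, Pulling $1$, Merging $1$, Waiting $1$, Termination $2$), whereas you spell out why the distance-$2$ activation rule makes each of these counts work. The only discrepancy is that you count Termination as a single round while the paper counts it as two; either way the bound of at most $2$ rounds per phase holds, so the lemma is unaffected.
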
	

	\begin{proof}
	Based on the description of the modes, the selection phase lasts $2$ rounds, the pulling phase lasts $1$ round, the merging phase lasts $1$ round, the waiting phase lasts $1$ round and the termination phase lasts $2$ rounds.
	\end{proof}
	
	\noindent\textbf{Edge Complexity}
	
	It is very simple to prove the edge complexity for the algorithm. Note that in each round $i$ each node activates at most 1 edge. Furthermore, if a node had activated an edge $u$ in round $i$, and it activates another edge $v$ in round $i+1$, then it deactivates edge $u$. Therefore, each node cannot have more than $2$ active edges that it has activated itself at any time and since we have $n$ nodes in the network, there can ever be at most $2n$ active edges per round. 
	
	\begin{theorem}
		For any initial connected graph $G_s$, the \emph{GraphToStar} algorithm solves the Depth-1 Tree problem in $O(\log{n})$ time with at most $O(n\log n)$ total edge activations and $O(n)$ active edges per round.
	\end{theorem}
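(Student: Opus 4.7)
The theorem is essentially a consolidation of the preceding lemmas, so the plan is to assemble them cleanly rather than to prove anything substantially new. First I would invoke the correctness lemma to conclude that GraphToStar does solve the Depth-1 Tree problem: after execution, the unique surviving committee $C(u_{max})$ is a spanning star rooted at $u_{max}$ (by the invariant that each committee is a star and the termination step, which deactivates every non-star edge of $G_s$), and $u_{max}$ is the unique leader.

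For the time bound, I would combine the two lemmas about phases. The lemma that after $O(\log n)$ phases only a single committee remains gives us the phase count, and the lemma that each phase lasts at most $2$ rounds gives the per-phase cost. Multiplying yields $O(\log n)$ rounds in total. I would remark that the termination phase adds only $O(1)$ additional rounds, which is absorbed by the $O(\log n)$ bound.

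For the total edge activations, I would observe that in each round each node activates at most one edge: the only mode in which a node activates a new edge on its own behalf is selection, merging, pulling, or during its own termination cleanup (which deactivates only). Summing over all $n$ nodes and all $O(\log n)$ rounds gives $O(n \log n)$ total edge activations. For the maximum active edges per round, I would use the observation already highlighted in the running text: because a node always deactivates the edge it activated in the previous round whenever it activates a new one (as in the selection-to-pulling transitions and the merging and pulling rules), each node carries at most two self-activated active edges at any time, giving at most $2n$ active edges per round in addition to edges of $G_s$ (which are themselves bounded and are ignored by the maximum-activated-edges measure per the definition in Section on performance measures).

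I do not anticipate a genuine obstacle here; the only care point is being explicit about which edges each mode activates and deactivates, so as to justify the ``at most one activation per node per round'' and ``at most two self-activated edges simultaneously'' claims. A brief case analysis over the five modes suffices, and no new calculation is required beyond what the earlier lemmas already give.
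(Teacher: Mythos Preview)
Your proposal is correct and mirrors the paper's own treatment: the theorem is stated without a separate proof, as a direct consolidation of the preceding correctness lemma, the two time-complexity lemmas ($O(\log n)$ phases, at most $2$ rounds each), and the edge-complexity paragraph (at most one activation per node per round, at most two self-activated edges per node at any time). Your plan to assemble these pieces is exactly what the paper does.
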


\section{Minimizing the Maximum Degree on General Graphs}
\label{SEC:ALGORITHM-2}

In the previous section, we devised an algorithm that minimizes the edge complexity of the graph but this came at a cost of linear degree. In this section we will create an algorithm that minimizes the maximum activated degree to a constant but has $O(\log^2 n)$ running time and $O(n\log^2 n)$ total edges activations. 

Recall the committees from section \ref{sec:algorithm-1}. Every committee was a star graph which was very practical. First of all, the leader of each committee $C(u)$ was a potential neighbor of each neighboring committee $C(v)$ and therefore $u$ could communicate in $O(1)$ phases with every $C(v)$ and decide with which $C(v)$ to merge with. Additionally merging committee $C(v)$ with $C(u)$ required $O(1)$ phases. Finally, the pulling phase cannot be used to merge multiple committees fast in this section, since it does not guarantee a constant degree for every node All of the above techniques were possible due to the small diameter of the star and the linear degree of each node. 

For this algorithm, our committees must have at least $\Omega (\log n)$ diameter in order to have a constant degree and therefore merging two different committees in constant time while keeping a specific structure proves to be complicated. The new gadget of our committees is going to be a graph we call \emph{wreath}. A wreath graph is a graph that has both a ring subgraph and a complete binary tree subgraph. We are going to use the edges of the ring subgraph to merge committees and the binary tree subgraph to exchange information between the nodes of the graph. First, let us define the structure of the wreath graph.

\begin{definition}
	We define a graph $D=(V,E)$ to belong to the wreath class of graphs if it has two subgraphs $D_r=(V,E_r)$ and $D_b=(V,E_b)$, where $D_r=(V,E_r)$ belongs to the class of ring graphs, $D_b=(V,E_b)$ belongs to the class of complete binary tree graphs, and $E=E_r\cup E_b$. 
\end{definition}

The $O(\log n)$ diameter that the wreath graph possesses, will allow the leaders of committees $C(u)$ to communicate with neighboring committees $C(v)$ in $O(\log n)$ time. Additionally, the merging phase of each pair of committees will require only $O(\log n)$ time.  The algorithm is almost identical to the GraphToStar as far as the high level strategy is concerned. Committees select neighboring committees and merge with them. The main difference is that when a tree with root $w$ is formed, we cannot use the pulling mode since this would increase the degree significantly. Instead the committees on each tree merge in a single ring that includes all committees in $O(1)$ time (ring merging mode). After this, $w$ deactivates one of its incident edges in order to create a line subgraph. Once this happens, each node on the ring executes an asynchronous version of the LineToCompleteBinaryTree subroutine in $O(\log n)$ time using the orientation of the new ring, where root $w$ is the root of the line. Once the subroutine is finished, the complete binary tree subgraph of the wreath graph is ready. Therefore we have managed to merge a tree graph of multiple committees into a single committee. 

\smallskip

\noindent\textbf{Algorithm GraphToWreath} 

\smallskip
Each committee $C(u)$ is a wreath graph where  $u$ is the leader of the committee and all other nodes are followers. The leader node of each committee is the node with the greatest UID in that committee. The UID of each committee is defined by the UID of that committee's leader. The winning committee in the final graph is the one with the greatest UID in the initial graph. Every node starts as a leader and forms its own committee as a single node. We will sometimes refer to a committee by its leader's name. The original edges of $G_s$ are assumed to be maintained until the last round of the algorithm and the nodes can always distinguish them. Our algorithm proceeds in phases, where in every phase each committee $C(u)$ executes in one of the following modes, always executing in selection mode in phase 1.

\begin{itemize}
	
	\item \textbf{Selection:} If $C(u)$ has a neighboring committee $C(z)$ such that $UID_z>UID_u$ and $C(z)$ is not in \emph{Ring Merging mode} or \emph{Tree Merging mode} then, from its neighboring committees not in ring merging or tree merging mode, $C(u)$ \emph{selects} the one with the greatest UID; call the latter $C(v)$. If $C(u)$ selected $C(v)$ or $C(u)$ was selected, $C(u)$ enters the Ring Merging mode. If $C(u)$ did not select anyone and $C(u)$ was not selected by anyone, it stays in the selection mode. If $C(u)$ has no neighboring committees, $C(u)$ enters the termination mode.

	\item \textbf{Ring Merging:} Given that in the previous phase, $C(u)$ selected $C(v)$, committee $C(u)$ merges its ring component with the ring component of $C(v)$ as described in the Merging the Spanning Ring Subgraph paragraph (see Appendix). Given that in the previous phase, $C(u)$ was selected by $C(k)$, committee $C(k)$ merges its ring component with the ring component of $C(u)$ as described in the Merging the Spanning Ring Subgraph paragraph. $C(u)$ enters the tree merging mode.

	\item \textbf{Tree Merging:} Every node $x$ in $C(u)$ executes one round of the asynchronous LineToCompleteBinaryTree algorithm as described in the asynchronous LineToCompleteBinaryTree paragraph (see Appendix). If there exists node $x$ that has not terminated the asynchronous LineToCompleteBinaryTree algorithm, $C(u)$ stays in the Tree Merging mode. If all nodes $x$ have terminated the asynchronous LineToCompleteBinaryTree algorithm, all nodes $x$ have now merged with committee $C'(u)$ whose leader is the root of the complete binary tree and $C'(u)$ enters the selection mode. $C(u)$ does not exist anymore.
	
	\item \textbf{Termination:} Each follower $x$ in $C(u)$ deactivates every edge apart from the edges that define the spanning complete binary tree subgraph.
\end{itemize}

Note here that we omit the communication steps for clarity and we claim that any communication performed between neighboring committees can be completed in $O(\log n)$ rounds since the diameter of each committee is at most $O(\log n)$.	

\begin{theorem}
	For any initial connected graph with constant degree, the GraphToWreath algorithm solves Depth-$\log n$ Tree in $O(\log^2 n)$ time with $O(n\log^2 n)$ total edge activations, $O(n)$ active edges per round and $O(1)$ maximum activated degree.
\end{theorem}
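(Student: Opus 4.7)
The plan is to parallel the correctness and time analysis of GraphToStar, substituting the $O(\log n)$-time wreath primitives (intra-committee communication, Ring Merging, Tree Merging) for the $O(1)$-time star primitives, while exploiting the bounded-degree structure of the wreath gadget for the edge-complexity bounds. Throughout I would maintain as a global invariant that between phases every committee $C(u)$ is a wreath on $V(C(u))$ with $u$ at the root of its spanning complete binary tree, and therefore has diameter $O(\log|V(C(u))|) = O(\log n)$.

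\textbf{Correctness.} I would first argue that Ring Merging correctly handles an entire rooted selection tree $T$ of committees rooted at $w$: each parent–child pair along $T$ splices its two rings into one cycle, and because splicing touches only $O(1)$ ring edges per committee and can be performed in parallel everywhere in $T$, after $O(1)$ rounds the vertex sets of all committees of $T$ lie on a single ring, which becomes a line as soon as $w$ deactivates one of its two new ring neighbours. The asynchronous LineToCompleteBinaryTree subroutine then converts this line into a single wreath in $O(\log n)$ rounds, re-establishing the invariant. As in the proof of GraphToStar, $C(u_{\max})$ can never be absorbed, any other committee in Selection either eventually dies, grows, or has a locally maximal neighbour merge to it, and the same indefinite-local-maximality contradiction from Section~\ref{sec:algorithm-1} shows that eventually only $C(u_{\max})$ survives. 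Termination then deactivates every edge outside the spanning complete binary tree, producing a rooted tree of depth $O(\log n)$ with $u_{\max}$ as unique leader, i.e., a solution of Depth-$\log n$ Tree.

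\textbf{Time.} I would import the amortised growth argument of Lemmas~\ref{trees}--\ref{final} almost verbatim, since the only property it uses is that each merging event (whether a pair merge or a tree-of-committees merge) at least doubles the size of the winning committee; hence after $O(\log n)$ phases a single committee of size $n$ remains. Each phase now costs $O(\log n)$ rounds rather than $O(1)$: Selection needs $O(\log n)$ rounds for the leader of each committee to gather and broadcast a summary of the committee's neighbourhood across a wreath of diameter $O(\log n)$; Ring Merging takes $O(1)$ rounds; Tree Merging takes $O(\log n)$ rounds by the analysis of LineToCompleteBinaryTree; and Termination takes $O(\log n)$ rounds for the same broadcast reason. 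Multiplying, the total running time is $O(\log n)\cdot O(\log n)=O(\log^2 n)$.

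\textbf{Edge complexity.} At any time each node lies in exactly one wreath committee, contributing at most two ring edges and at most three binary-tree edges; combined with its $O(1)$ incident edges of $G_s$ (by the bounded-degree hypothesis) its total active degree is $O(1)$, which yields both the $O(1)$ maximum activated degree and the $O(n)$ bound on active edges per round. Inspection of Ring Merging and of LineToCompleteBinaryTree shows that every node activates at most one new edge per round, so across the $O(\log^2 n)$ rounds the total number of edge activations is $O(n\log^2 n)$. The hardest part to pin down rigorously, I expect, will be the correctness of Ring Merging on a selection tree $T$ of depth greater than one: one has to check that simultaneous splicings at different committees of $T$ do not produce multiple cycles or dangling paths, that the resulting ring admits a well-defined orientation consistent with the original wreath orientations, and that all of this is finished within $O(1)$ rounds regardless of the depth of $T$, which is where the appendix-level details really matter.
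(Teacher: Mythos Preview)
Your proposal is correct and follows essentially the same approach as the paper: reduce correctness and the $O(\log n)$ phase bound to the GraphToStar analysis, argue $O(\log n)$ rounds per phase from the wreath's logarithmic diameter and the LineToCompleteBinaryTree subroutine, and derive the edge bounds from the constant degree of the wreath plus the one-activation-per-node-per-round property. The only stylistic difference is that the paper argues the phase bound via an explicit mode-by-mode correspondence between GraphToWreath and GraphToStar rather than by re-running Lemmas~\ref{trees}--\ref{final}, and its degree accounting is a bit finer (two ring edges, two transient ring-merging edges, two binary-tree edges, two transient LineToCompleteBinaryTree edges, for $8+c$ total); your identification of the deep-tree Ring Merging as the delicate step matches where the paper defers to its appendix.
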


\section{Trading the Degree for Time}
\label{SEC:ALGORITHM-3}

For our new algorithm, we are going to try to have $O(\frac{\log n}{\log\log n})$ time for the merging but we are going to allow the maximum degree to reach $O(\log^2 n)$. This requires a new graph for our committees where the diameter of the shape is $O(\frac{\log n}{\log\log n})$, so that the communication within the committees is $O(\frac{\log n}{\log\log n})$ and a new way to merge the committees in $O(\frac{\log n}{\log\log n})$. We also have to make the assumption that all nodes know the size of the initial graph.

Our new graph is very similar to the Wreath graph and we call it \emph{ThinWreath}. The main difference is that instead of having a complete binary tree component, it has a complete polylogarithmic degree tree component with diameter $O(\frac{\log n}{\log\log n})$ and polylogarithmic degree. The $O(\frac{\log n}{\log\log n})$ diameter that the ThinWreath graph possesses, will allow the leaders of committees $C(u)$ to communicate with neighboring committees $C(v)$ in $O(\frac{\log n}{\log\log n})$ time. 

\smallskip

\noindent\textbf{Algorithm GraphToThinWreath} 

\smallskip

Each committee $C(u)$ is a ThinWreath graph where $u$ is the leader of the committee and all other nodes are followers. The leader node of each committee is the node with the greatest UID in that committee. The UID of each committee is defined by the UID of that committee's leader. The winning committee in the final graph is the one with the greatest UID in the initial graph. Every node starts as a leader and forms its own committee as a single node. We will sometimes refer to a committee by its leader's name. The original edges of $G_s$ are assumed to be maintained until the last round of the algorithm and the nodes can always distinguish them. We also have to assume that the nodes know the size of the initial graph. Our algorithm proceeds in phases, where in every phase each committee $C(u)$ executes in one of the following modes, always executing in selection mode in phase 1.

\begin{itemize}
	
	\item \textbf{Selection:} If $C(u)$ has a neighboring committee $C(z)$ such that $UID_z>UID_u$ and $C(z)$ is in selection mode, then, from its neighboring committees in the selection mode, $C(u)$ \emph{selects} the one with the greatest UID; call the latter $C(v)$. If $C(u)$ was selected, $C(u)$ enters the \emph{Matchmaker mode}. If $C(u)$ was not selected and $C(u)$ selected $C(v)$, $C(u)$ enters the \emph{Matched mode}. If $C(u)$ did not select anyone and $C(u)$ was not selected by anyone, it stays in the selection mode. If $C(u)$ has no neighboring committees, $C(u)$ enters the termination mode.

	\item \textbf{Matchmaker:} If committees $C(k)$ had selected $C(u)$ in the previous phase, committee $C(u)$ matches committees $C(k)$ in pairs. If the number of committees $C(k)$ that selected $C(u)$ is odd, one committee is matched with $C(u)$. $C(u)$ enters the Matched mode.
	
	\item \textbf{Matched:} If committee $C(u)$ selected committee $C(v)$ in the last selection phase, committee $C(u)$ learns with which committee it has been matched. Committee $C(u)$ enters the Ring Merging mode.
	
	\item \textbf{Ring Merging:} Given that in the previous phase, $C(u)$ was matched with $C(v)$, committee $C(u)$ merges its ring component with the ring component of $C(v)$ as described in the Merging the Spanning Ring Subgraph paragraph (see Appendix) where the winning committee is $C(u)$ if $UID_u>UID_v$ and vice versa. Committee  $C(u)$ enters the Leader Merging mode.

	\item \textbf{Leader Merging:} Given that in the previous phase committee $C(u)$ lost to committee $C(k)$, the leader of $C(u)$ activates an edge with the leader of $C(k)$. 
	If committee $C(k)$ has lost to some other committee $C(l)$ in the previous phase, $C(u)$ enters the Tree Merging mode. If $C(u)$ did not lose to any other committee, $C(u)$ enters the Tree Merging mode where $u$ is the root.
	
	\item \textbf{Tree Merging:} The leader of $C(u)$ executes one round of the asynchronous LineToCompletePolylogarithmicTree algorithm as described in the Asynchronous LineToCompletePolylogarithmicTree paragraph (see Appendix). If there exists node $x$ that has not terminated the asynchronous LineToCompletePolylogarithmicTree algorithm, $C(u)$ stays in the Tree Merging mode. If all nodes $x$ have terminated the asynchronous LineToCompletePolylogarithmicTree algorithm, all nodes $x$ have now merged with committee $C'(u)$ whose leader $u'$ is the root of the complete polylogarithmic tree and $C'(u)$ enters the selection mode. Committee $C(u)$ does not exist anymore.
	
	\item \textbf{Termination:} Each follower $x$ in $C(u)$ deactivates every edge apart from the edges that define the spanning complete polylogarithmic tree subgraph.
\end{itemize}

\begin{theorem}
	For any initial connected graph with polylogarithmic degree, the GraphToThinWreath algorithm solves Depth-$\log n$ Tree in $O(\frac{\log^2 n}{\log\log n})$ time with $O(n\log^2 n)$ total edge activations, $O(n)$ active edges per round and $O(1)$ maximum activated degree. 
\end{theorem}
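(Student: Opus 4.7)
The plan is to follow the same three-part template that worked for GraphToStar and GraphToWreath: establish correctness through a committee invariant, bound the number of merging cycles by an amortized doubling argument, and bound the per-cycle cost by the diameter of a ThinWreath. For correctness, I would show by induction on phases that at the start of every selection phase the active graph is a union of vertex-disjoint ThinWreaths (plus residual original edges of $G_s$), each with its maximum-id node as leader. Since every completed merge either eliminates a committee or strictly grows one, and since $C(u_{max})$ can never be absorbed, a single spanning committee remains after finitely many phases, and the final termination mode deletes all edges except the polylogarithmic-degree tree component, yielding a Depth-$\log n$ Tree rooted at $u_{max}$.

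For the phase count I would analyze the new Matchmaker/Matched pairing, which replaces GraphToStar's pulling mode: whenever several committees select the same $C(u)$, $C(u)$ pairs them up (adopting the odd one itself if necessary), so every committee that exits the selection mode participates in a merge that at least doubles its size. Adapting Lemmas \ref{trees}--\ref{final} to this pairing pattern shows that after $O(\log n)$ cycles only $C(u_{max})$ survives. The cost of a single cycle is dominated by communication across a committee (for selection and matchmaking), ring merging, and in particular the asynchronous LineToCompletePolylogarithmicTree subroutine on a path of length up to $n$. Each of these finishes in $O(\log n/\log\log n)$ rounds because the ThinWreath's tree component has diameter $\Theta(\log n/\log\log n)$. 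Multiplying the two bounds gives running time $O(\log^2 n/\log\log n)$.

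The remaining edge bounds fall out once the structural invariant is in place. Since every node activates at most one edge per round, the total edge activations are $O(n)$ times the running time, i.e. $O(n\log^2 n/\log\log n)=O(n\log^2 n)$. The active non-initial edges in any round consist of the ring edges and tree edges of all current committees, each of which contributes a linear number of edges in its own size, so their union has size $O(n)$. Correctness of the ring merging and tree merging stages can be reduced to the earlier propositions about TreeToStar-like contraction and LineToCompleteBinaryTree-like balancing, now over a larger fan-out.

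The hard part will be the stated $O(1)$ maximum activated degree. The algorithm's own preamble allows degree up to $O(\log^2 n)$, and the ThinWreath gadget is defined to have a polylogarithmic-fan-out tree component, so the natural implementation yields polylog, not constant, algorithm-activated degree at interior tree nodes. To salvage an $O(1)$ bound one would have to argue either that the polylog-degree input edges of $G_s$ are reused to absorb the tree edges at each node (so the newly \emph{activated} degree per node stays $O(1)$), or that incoming connections in the asynchronous LineToCompletePolylogarithmicTree are rerouted through short chains rather than directly attached to the growing parent. Formalising either route --- and in particular handling the transient rounds where a node briefly hosts many new tree children before old ring edges are deactivated --- is where I expect the proof to need the most care, and it is also where the stated theorem may need to be reconciled with the algorithm description.
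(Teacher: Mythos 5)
Your reconstruction follows essentially the same route the paper takes (or, more precisely, gestures at): the paper states this theorem without a formal proof, offering only the low-level descriptions in the appendix (the Matchmaker pairing that replaces pulling, the Leader Merging mode, and the asynchronous LineToCompletePolylogarithmicTree whose termination criterion is a grandparent with $\log n$ children), and implicitly relies on the correspondence with the GraphToStar/GraphToWreath analyses exactly as you propose. Your phase count via an adaptation of Lemmas \ref{trees}--\ref{final}, the per-phase cost bounded by the $O(\log n/\log\log n)$ ThinWreath diameter, and the edge bounds derived from the one-activation-per-node-per-round invariant all match the paper's intended argument.

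You are also right to be suspicious of the final clause: the claimed $O(1)$ maximum activated degree is not something you should try to salvage, because it contradicts the paper's own description of the algorithm. Section \ref{SEC:ALGORITHM-3} explicitly states that the maximum degree is allowed to reach $O(\log^2 n)$, the abstract describes the third algorithm as having polylog$(n)$ degree, and the ThinWreath's tree component has polylogarithmic fan-out by construction, so interior tree nodes necessarily acquire $\Theta(\log n)$ activated children. The $O(1)$ in the theorem statement is evidently a copy-over from the GraphToWreath theorem and should read polylog$(n)$; neither of the two rescue strategies you sketch (absorbing tree edges into the initial polylog-degree edges of $G_s$, or rerouting through chains) is pursued by the paper, and the first would not work anyway since the activated-degree measure counts only edges the algorithm creates, independently of $E(G_s)$. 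With that correction, your proof plan is the right one, and is in fact more explicit than what the paper provides.
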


\section{Lower Bounds for the Depth-$\log n$ Tree Problem}
\label{SEC:LOWER-BOUNDS}

We now shift our focus into proving lower bounds for Depth-$\log n$ Tree. 

\begin{lemma} \label{LowerTimeBound}
	Any centralized transformation strategy requires $\Omega(\log n)$ rounds to solve Depth-$\log n$ Tree if the initial graph $G_s$ is a spanning line.
\end{lemma}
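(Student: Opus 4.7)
The plan is to exploit the locality restriction on edge activations: a new edge $uv$ can appear in round $r$ only if $u$ and $v$ share a common neighbor in $G_{r-1}$. This means each edge in $G_r$ corresponds to a pair of endpoints at graph-distance at most $2$ in $G_{r-1}$, so graph-distances can shrink by at most a factor of $2$ per round. Starting from a line, which has diameter $n-1$, and ending at a depth-$\log n$ tree, which has diameter at most $2\log n$, this bottleneck will immediately give $T = \Omega(\log n)$.

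Formally, let $G_0 = G_s$ be the initial line and $G_r$ be the graph at the end of round $r$, with $G_T = G_f$. I would first establish the invariant
\[
  d_{G_{r-1}}(u,v) \;\le\; 2\cdot d_{G_r}(u,v) \qquad \text{for all } u,v \in V.
\]
For any edge $e=xy \in E(G_r)$, either $e$ was already in $G_{r-1}$ (so $d_{G_{r-1}}(x,y)=1$) or it was activated in round $r$, which by the model requires a common neighbor $w$ in $G_{r-1}$ (so $d_{G_{r-1}}(x,y)\le 2$). Concatenating along a shortest $u$--$v$ path in $G_r$ of length $k$ then yields a walk of length at most $2k$ in $G_{r-1}$, proving the claim. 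Iterating the invariant gives $\mathrm{diam}(G_0) \le 2^{T}\cdot \mathrm{diam}(G_T)$.

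Next I would plug in the two endpoints. Since $G_T$ is a rooted tree of depth $\log n$, $\mathrm{diam}(G_T)\le 2\log n$, while $\mathrm{diam}(G_0)=n-1$ because $G_s$ is a spanning line. Therefore $n-1 \le 2^{T}\cdot 2\log n$, i.e., $2^T \ge (n-1)/(2\log n)$, from which $T \ge \log n - \log\log n - O(1) = \Omega(\log n)$, as required.

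The one subtle step is that the invariant uses graph-distance, which is finite only if the graph in question is connected; without this, deactivations could a priori disconnect intermediate snapshots and break the chain. I would handle this by observing that any new edge in round $r$ lies within a single connected component of $G_{r-1}$ (it needs a common neighbor), and no old edge can bridge components either, so the components of $G_{r-1}$ only refine in $G_r$. Since $G_T$ is connected (it is a spanning tree), every $G_r$ for $r\le T$ must be connected, which justifies the distance calculation above. This connectivity bookkeeping is the main point where care is needed, but once established, the diameter-halving chain gives the lower bound directly and applies uniformly to any centralized strategy.
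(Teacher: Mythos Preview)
Your proof is correct and follows essentially the same approach as the paper: both argue that because a newly activated edge connects nodes at distance at most $2$ in the previous snapshot, distances can shrink by at most a factor of $2$ per round, so going from diameter $n-1$ to diameter $O(\log n)$ needs $\Omega(\log n)$ rounds. The paper phrases this via an auxiliary ``potential'' $PO_{u,v}$ that also accounts for message-passing, whereas you work directly with graph distances and additionally handle the connectivity of intermediate snapshots explicitly; the underlying argument is the same.
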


\begin{lemma}\label{EdgeActivations}
	Any centralized transformation strategy that solves Depth-$\log n$ Tree in $O(\log n)$ rounds, requires $\Omega(n)$ edge activations and $\Omega (n/\log n)$ edge activations per round.
\end{lemma}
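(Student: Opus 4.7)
The plan is to prove the $\Omega(n)$ total-activation bound via a potential-function argument, then extract the per-round bound by simple averaging. I take $G_s = u_0 u_1 \cdots u_{n-1}$ to be the spanning line (the same worst-case instance as in Lemma~\ref{LowerTimeBound}), and let $G_f$ be any valid depth-$\log n$ tree output. Define the potential $\Phi(H) = \sum_{\{u,v\} \subseteq V} d_H(u,v)$ on any connected spanning graph $H$; since the algorithm keeps the graph connected throughout, $\Phi$ is finite at every round.

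First I would pin down the two endpoints of $\Phi$. A direct calculation gives $\Phi(G_s) = \sum_{i<j}(j-i) = \Theta(n^3)$. On the other hand, any rooted tree of depth $\log n$ has diameter at most $2\log n$, so $\Phi(G_f) \le \binom{n}{2} \cdot 2\log n = O(n^2 \log n)$. Hence the potential must drop by $\Omega(n^3)$ over the course of the execution.

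Next comes the crucial per-round bound: if $a_t$ edges are activated in round $t$, the resulting drop of $\Phi$ is at most $a_t \binom{n}{2}$. I would prove this by a path-replacement argument. Fix a pair $\{a,b\}$ and take a shortest path $P$ from $a$ to $b$ in the post-activation graph $G_{t-1} \cup E_{ac}(t)$; let $k \le a_t$ be the number of newly activated edges appearing on $P$. Each such edge $u_i v_i$ satisfies $d_{G_{t-1}}(u_i, v_i) = 2$ by the model's activation rule, so replacing every $u_i v_i$ on $P$ by its witnessing 2-path in $G_{t-1}$ yields a walk in $G_{t-1}$ of length $d_{new}(a,b) + k$. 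This gives $d_{old}(a,b) - d_{new}(a,b) \le a_t$, and summing over the $\binom{n}{2}$ pairs gives the per-round bound. Since deactivations can only increase $\Phi$, the cumulative drop is at most $\sum_t a_t \binom{n}{2} = A \binom{n}{2}$, where $A$ is the total number of activations. Combining with the required drop of $\Omega(n^3)$ forces $A = \Omega(n)$, and since $T = O(\log n)$, by pigeonhole some round must perform $\Omega(n/\log n)$ activations.

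The main obstacle I anticipate is getting this per-round bound right. A naive ``each new edge saves at most $1$ per pair'' works only for a single sequential activation and collapses for parallel activations, because a shortest post-round path may traverse several new edges and save several units of distance for the same pair. The path-replacement charging above handles the parallel case cleanly by bounding a pair's savings by the number of new edges it actually traverses, which is at most $a_t$. This yields exactly the factor that makes $A = \Omega(n^3)/\binom{n}{2} = \Omega(n)$ meet, up to constants, the centralized $O(n)$ upper bound achievable by recursive doubling from the line endpoint.
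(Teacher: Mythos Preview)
Your proof is correct. The path-replacement argument is exactly the right mechanism, and your treatment of parallel activations and of deactivations (which can only raise $\Phi$) is sound.

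The paper, however, takes a strictly simpler route: rather than summing distances over all $\binom{n}{2}$ pairs, it tracks a \emph{single} pair, namely the two endpoints $u_0$ and $u_{n-1}$ of the spanning line. The very same path-replacement step you use shows that in round $t$ the quantity $d(u_0,u_{n-1})$ can drop by at most $a_t$, hence over the whole execution by at most $A=\sum_t a_t$; since this distance must fall from $n-1$ to at most $2\log n$ in any depth-$\log n$ tree, one gets $A \ge n-1-2\log n = \Omega(n)$ immediately, and the $\Omega(n/\log n)$ per-round bound then follows by averaging over the $O(\log n)$ rounds. Your aggregate potential $\Phi$ effectively multiplies both sides of this single-pair inequality by $\binom{n}{2}$: the $\Theta(n^3)$ initial value, the $O(n^2\log n)$ final value, and the $a_t\binom{n}{2}$ per-round drop all carry exactly this common factor, which then cancels. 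So the aggregation over pairs is harmless but adds no strength; the single-pair version is shorter and makes it transparent that the two extremal nodes alone already force the bound.
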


On the positive side:

\begin{theorem}
	There is a centralized transformation strategy that, for any initial graph $D=(V,E)$, solves Depth-$\log n$ Tree in $O(\log n)$ rounds, with $\Theta(n)$ total edge activations.
\end{theorem}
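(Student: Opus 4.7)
The plan is for the centralized algorithm to exploit its global view in three stages: (i) reshape $G_s$ into a spanning line in a single round, (ii) run the line-doubling scheme from the introduction on that line for $\log n$ rounds, and (iii) in one final round deactivate every active edge except those of a chosen depth-$\log n$ spanning tree. The main obstacle is stage (i), because the algorithm is bound by the distance-$2$ activation constraint while being simultaneously limited to $O(n)$ total activations; the distributed lower bound of Lemma~\ref{EdgeActivations} rules out any such stage (i) in the distributed setting, but the centralized oracle lets us circumvent it.

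For stage (i), I would invoke Sekanina's classical theorem, which states that for any connected graph $H$ on at least two vertices the square $H^2$ contains a Hamiltonian path. Applied to $G_s$, this yields a Hamiltonian path $L = v_0 v_1 \cdots v_{n-1}$ of $G_s^2$, which the centralized algorithm computes explicitly. Every edge $v_i v_{i+1}$ of $L$ either already lies in $G_s$ or connects two vertices at distance exactly $2$ in $G_s$; hence in round $1$ the algorithm activates in parallel the at most $n-1$ edges of $L$ missing from $G_s$, with each distance-$2$ precondition witnessed by edges of $G_s$ that are active at the start of the round.

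For stage (ii), I apply the line-doubling argument from the introduction to $L$ with $v_0$ as the root. In phase $i = 1, \ldots, \log n$ the algorithm activates every edge $v_{k \cdot 2^i} v_{(k+1) \cdot 2^i}$ for valid $k$; a straightforward induction on $i$ shows that each such pair is at distance $2$ at the end of phase $i-1$, witnessed by the edge $v_{k \cdot 2^i} v_{(2k+1) \cdot 2^{i-1}}$ activated in phase $i-1$ (or by $L$ itself when $i = 1$). Phase $i$ adds at most $\lceil n/2^i \rceil$ edges, so the stage totals $\sum_{i \geq 1} n/2^i = O(n)$ activations over $\log n$ rounds. The resulting active graph has distance $O(\log n)$ from $v_0$ to every other node, via the binary-representation shortcut $v_0 \to v_{2^{i_1}} \to v_{2^{i_1} + 2^{i_2}} \to \cdots$, and therefore contains a depth-$\log n$ spanning tree $T^*$ rooted at $v_0$. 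Stage (iii) deactivates every active edge outside $T^*$ in one round, which does not count toward edge activations. The overall cost is $1 + \log n + 1 = O(\log n)$ rounds and $(n-1) + O(n) = \Theta(n)$ edge activations, matching the lower bound of Lemma~\ref{EdgeActivations}, with $v_0$ elected as the unique leader at the root of $T^*$.
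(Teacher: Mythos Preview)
Your stage~(i) rests on a false premise. Sekanina's theorem concerns the \emph{cube} $H^3$, not the square: it asserts that $H^3$ is Hamiltonian for every connected $H$ on at least three vertices. The analogous statement for squares is simply not true. Take $G_s$ to be the spider with center $c$, five middle vertices $m_1,\ldots,m_5$ adjacent to $c$, and five leaves $\ell_i$ adjacent to $m_i$. In $G_s^2$ each leaf $\ell_i$ has degree exactly $2$ (its only neighbours are $m_i$ and $c$). In any Hamiltonian path at most two vertices are endpoints, so at least three leaves would be internal and would each force the edge $\ell_i c$ into the path, giving $c$ degree at least $3$ in the path---impossible. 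Hence $G_s^2$ has no Hamiltonian path and your single-round reduction to a spanning line cannot be carried out. (Fleischner's theorem rescues Hamiltonicity of the square only under a $2$-connectivity hypothesis you do not have.)

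The paper sidesteps this obstruction by never insisting on an honest spanning line. It takes a spanning tree of $G_s$, computes an Euler tour of that tree, and treats the tour as a \emph{virtual} line of length at most $2(n-1)$ in which consecutive positions are always adjacent in $G_s$. The CutInHalf doubling scheme (your stage~(ii)) is then run on this virtual line; each activation is between two actual nodes whose virtual positions are at distance~$2$, so the physical distance-$2$ constraint is respected, and the total number of activations is still $\sum_i 2n/2^i = O(n)$. Your stages~(ii) and~(iii) are fine and essentially match the paper; the repair needed is to replace the non-existent Hamiltonian path of $G_s^2$ by the Euler-tour virtual line, after which your argument goes through.
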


We are now going to show that there is a difference in the minimum total edge activations required for solving the Depth-$\log n$ Tree problem between the centralized and the distributed case.

\begin{theorem}
	Any distributed algorithm that solves the Depth-$\log n$ Tree problem in $O(\log n)$ time, requires $\Omega (n\log n)$ total edge activations.
\end{theorem}

\section{Conclusion and Open Problems}

In this work we considered a distributed model for actively dynamic networks. The model can achieve global distributed computation and network reconfiguration in (poly)logarithmic time, but trivial solutions incur an impractical cost, which is related to the creation and maintenance of edges in the dynamic network generated by the algorithm. We defined natural cost measures associated with the edge complexity of actively dynamic algorithms. It turns out that there is a natural trade-off between the time and edge complexity of algorithms. By focusing on the apparently representative task of transforming any initial network from a given family into a target network of (poly)logarithmic diameter, which can then be exploited for global computation or further reconfiguration, we obtained non-trivial insight into this trade-off.

Our model is inspired by recent developments in the algorithmic theory of dynamic networks and in the theory of reconfigurable robotics. Still, it turns out to be very close to the interesting area of overlay network construction. It is not clear yet what is the formal relationship between the polylogarithmic restriction on communication in overlay networks and our efforts to minimize the total number of edge activations in our algorithms. This remains an interesting question for future research.

There is also a number of technical questions specific to our model and the obtained results. We do not know yet what are the ultimate lower bounds on time for different restrictions on the maximum degree. For maximum degree bounded by a constant our best upper bound is $O(\log^2 n)$ and if bounded by (poly)log$(n)$ this drops slightly by an $O(\log\log n)$ factor. Can any of these be improved to $O(\log n)$, that is, matching the $\Omega(\log n)$ lower bound on time? It would also be valuable to investigate randomized algorithms for the same problems, like the ones already developed in overlay networks.

Finally, there are many variants of the proposed model and complexity measures that would make sense and might give rise into further interesting questions and developments. Such variants include anonymous distributed entities which are possibly restricted to treat their neighbors identically even w.r.t. actions (e.g., through local broadcast) and alternative potential neighborhoods, e.g., activating edges at larger distances.

\bibliographystyle{ACM-Reference-Format}
\bibliography{MSS20-arxiv}

\newpage

\noindent\textbf{\Large APPENDIX}
\appendix

\section{Omitted Details from Section \ref{SUBSEC:BASIC} - Basic Subroutines}
\label{appSUBSEC:BASIC}

We will now provide algorithms that transform initial graphs into graphs with small diameter and which will be used as subroutines in our general algorithms. 

The first called TreeToStar transforms any initial rooted tree graph into a spanning star in $O(\log n)$ time with $O(n\log n)$ total edge activations and $O(n)$ active edges per round, provided that the nodes have a sense of orientation on the tree. This means that each node can distinguish its parent from its children. In every round, each node activates an edge with the potential neighbor that is its grandparent and deactivates the edge with its parent. This process keeps being repeated by each node until they activate an edge with the root of the tree.

\begin{proposition} \label{appTreeToStar}
	Let $T$ be any tree rooted at $u_0$ of depth $d$. If the nodes have a sense of orientation on the tree, then algorithm TreeToStar transforms T into a spanning star centered at $u_0$ in $\ceil*{\log d}\leq\log n$ rounds. TreeToStar has at most $(n-1)+(n-2)=2n-3$ active edges per round. 
\end{proposition}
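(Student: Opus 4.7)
The plan is to track how the rooted tree evolves round by round and to show that every node's depth is essentially halved in each round. I would let $T_i$ denote the tree of active edges at the beginning of round $i$ (so $T_1$ is the initial tree), and let $d_i(u)$ denote the depth of $u$ in $T_i$. The key invariant, established by induction on $i$, is that $T_i$ is a rooted tree at $u_0$ and that in one round of the algorithm every node $u$ with $d_i(u)\geq 2$ replaces its parent edge by an edge to the unique $gp_i(u)$ that was its grandparent in $T_i$, while nodes at depth $0$ or $1$ perform no change. Because each such activation is a potential-neighbor (distance-$2$) action determined purely by the state of $T_i$ at the start of the round, simultaneous actions across the network cause no ambiguity. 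One then checks that $T_{i+1}$ is again a rooted tree at $u_0$: every non-root node holds exactly one parent edge, no cycles appear (new parents are strictly closer to $u_0$), and connectivity to $u_0$ is preserved through the new parent chain.

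Next I would prove, by induction on depth, the recurrence $d_{i+1}(u) = \ceil{d_i(u)/2}$. The cases $d_i(u)\in\{0,1\}$ are immediate. For $d_i(u)=k\geq 2$ the new parent of $u$ is $gp_i(u)$ with $d_i(gp_i(u))=k-2$; applying the inductive hypothesis to $gp_i(u)$ gives $d_{i+1}(gp_i(u)) = \ceil{(k-2)/2}$, and hence
\[
d_{i+1}(u) \;=\; 1 + \ceil{(k-2)/2} \;=\; \ceil{k/2}.
\]
Iterating the recurrence yields $d_{i+1}(u)=\ceil{d_1(u)/2^{i}}$, which drops to at most $1$ as soon as $i\geq \ceil{\log d}$. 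At that moment every node is a direct child of $u_0$, and $T_i$ is precisely the spanning star centered at $u_0$, giving the claimed running time $\ceil{\log d}\leq\log n$.

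For the edge bound, I would observe that at any instant within a single round the set of active edges is the union of the $n-1$ edges of $T_i$ and the newly activated grandparent edges; only nodes with $d_i(u)\geq 2$ activate (the root and at least one depth-$1$ node do not), so at most $n-2$ new edges are present simultaneously, giving at most $(n-1)+(n-2)=2n-3$ active edges. The only genuine obstacle in the argument is verifying that the depth recurrence and the tree property survive the simultaneous local updates; this is handled by insisting that every node computes its grandparent from $T_i$ (the snapshot at the beginning of the round), so concurrent updates at different nodes cannot interfere with one another, and the induction on depth within a single round then goes through cleanly.
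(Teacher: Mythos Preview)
Your argument is correct and follows the same line as the paper's proof: the paper also argues the edge bound by counting the $n-1$ tree edges plus at most $n-2$ new grandparent edges (since the top two levels contain at least two nodes), and it treats the $\ceil{\log d}$ running time as essentially given. Your version is more thorough in that you actually establish the depth-halving recurrence $d_{i+1}(u)=\ceil{d_i(u)/2}$ and iterate it, whereas the paper simply asserts the running time; otherwise the two proofs coincide.
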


\begin{proof}
	Just before deactivating edges in the current round the set of active edges consists of the $(n-1)$ edges of the tree in the previous round plus at most one new parent connection per node in all but the top 2 levels of the tree. As the top 2 have at least 2 nodes, there are at most $n-2$ edge activations per round. Then edges are being deleted resulting in a tree by the end of each phase therefore the bound holds for all rounds.
	
	Recall that the algorithm runs for $\ceil*{\log d}<\log n$ rounds and there are at most $n-2$ edge activations per round. Therefore we have $O(\log n)$ total edge activations. 	
\end{proof}	

Our next algorithm called LineToCompleteBinaryTree transforms any line into a binary tree in $O(\log n)$ time with $O(n\log n)$ total edge activations, with $O(n)$ active edges per round and the degree of each node is at most $4$, provided that the nodes have a common sense of orientation. This means that each node can distinguish its parent from its children. 

\noindent
In each round, each node activates an edge with its grandparent and afterwards it deactivates its edge with its parent. This process keeps being repeated by each node until they activate an edge with the root of the tree or if their grandparent has $2$ children.

\begin{proposition}
	Let $T$ be any line rooted at $u_0$ of diameter $d$. If the nodes have a sense of orientation on the line, then algorithm LineToCompleteBinaryTree transforms T into a binary tree centered at $u_0$ in $\ceil {\log d}\leq\log n$. LineToCompleteBinaryTree has at most $(n-1)+(n-2)=2n-3$ active edges per round, $n\log n$ total edge activations and bounded degree equal to $3$. 
\end{proposition}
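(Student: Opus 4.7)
The plan is to establish the four claims of the proposition in sequence: depth $\lceil \log d\rceil$ after that many rounds, at most $2n-3$ active edges per round, at most $n\log n$ total edge activations, and maximum degree $3$.

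The first step is to set up and maintain a structural invariant for every round: the currently active (non-original) edges form a rooted tree $T_i$ in which every node has at most two children, and $T_i$ is a ``contraction'' of the original line in the sense that each node's current parent lies strictly closer to $u_0$ along the original line than itself. The two-children cap is a direct consequence of the stopping rule (no jump when the grandparent already has two children), and the contraction property follows from the fact that a jump replaces an edge $v\,p(v)$ with the edge $v\,p(p(v))$ along the line orientation, while any inherited children of $v$ remain in $v$'s subtree.

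For the time bound, I would use a halving-style induction on $d$. After a single round, $u_0$ has acquired a second child $u_2$, and the nodes split along $u_0$ into two sub-chains: one through $u_1$ of length $\lceil (d-1)/2\rceil$ and one through $u_2$ of length $\lfloor (d-1)/2\rfloor$, each a rooted line to which the rule continues to apply. The subtlety is that the grandparent-saturation rule could in principle block jumps and slow the halving; I would handle this by a potential argument showing that whenever a node is blocked, its grandparent sits at depth at most $\lceil \log d\rceil - i$ and already has two children, so the ``upper'' part of the tree is already balanced there and the residual unresolved depth still halves on the remaining subtrees. Induction then gives final depth $\lceil \log d\rceil$ after $\lceil \log d\rceil$ rounds.

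The remaining quantitative claims follow quickly. The active-edges bound follows exactly as in Proposition \ref{appTreeToStar}: just after all activations and before any deactivations, the $n-1$ current tree edges are augmented by at most one new edge per node, and since no node in the top two levels of $T_i$ has a grandparent, the number of activations is at most $n-2$, giving at most $2n-3$ active edges. Summing $n-2$ activations over at most $\lceil \log d\rceil \le \log n$ rounds gives the $n\log n$ total. Finally, the degree bound of $3$ follows by combining the at-most-two-children invariant with the single parent edge at the end of each round; a node may transiently have four incident edges (parent, grandparent, and two children) within a round, but because per-round operations are performed in the fixed order activate--deactivate, the end-of-round degree never exceeds~$3$.

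The main obstacle I anticipate is the time bound: one must carefully rule out the possibility that cascading blocks stall progress. My expectation is that the ``saturation $\Rightarrow$ already balanced here'' observation above turns blocked jumps into evidence that the depth in the affected region is already small, so amortized halving is preserved and $\lceil \log d\rceil$ rounds suffice.
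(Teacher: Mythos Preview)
Your proposal is correct and, for the edge-count claims and the degree bound, aligns with the paper's own proof: both reduce the active-edges-per-round and total-activations bounds to the TreeToStar argument (at most $n-2$ activations per round over at most $\lceil\log d\rceil$ rounds), and both derive the degree bound from the two-children cap plus the single parent edge. One small divergence worth noting: the paper's proof actually concludes a maximum degree of $4$ throughout the execution (parent, newly activated grandparent, and up to two children, prior to the deactivation step), which is precisely your ``transient four'' observation; your distinction between transient degree $4$ and end-of-round degree $3$ is sharper than what the paper writes.

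Where you genuinely go beyond the paper is the time bound. The paper's proof does not argue it at all---it simply asserts that the execution is the same as TreeToStar apart from the termination criterion and leaves the $\lceil\log d\rceil$ claim implicit. Your halving recursion is the right way to make this explicit, and it is in fact cleaner than you anticipate. The ``cascading blocks'' obstacle dissolves once you observe that after one round the line splits exactly into two independent sub-lines rooted at $u_1$ and $u_2$ (the odd- and even-indexed nodes, respectively), and that a node which gets blocked because its grandparent already has two children has reached its final position in the binary tree and simply becomes the root of the next level of the recursion. No amortization or potential function is needed: the blocking rule is precisely what makes the recursion terminate locally at each level, not something that stalls progress below. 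So your inductive split already carries the whole argument.
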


\begin{proof}
	The proof for the total edge activations and active edges per round is identical to the proof of Proposition \ref{appTreeToStar} since the two algorithms have exactly the same execution but for the termination criteria. Therefore the edge performance analysis stays the same. For the bounded degree, by definition of the algorithm, in each round, each node $u$ activates an edge with its grandparent and deactivates an edge with its parent. Additionally, if node $u$ has two children, no other node $v$ may activate an edge with $u$. If we take into account the above statements, and the fact that each node starts with $2$ incident edges, then the maximum degree of each node throughout the execution of the algorithm is at most $4$.
\end{proof}

\section{Omitted Details from Section \ref{SEC:ALGORITHM-2} - Minimizing the Maximum Degree on General Graphs}

\smallskip

\noindent\textbf{Correctness} 

\smallskip

\begin{lemma}
	Algorithm GraphToStar solves $Depth-\log n$ Tree.
\end{lemma}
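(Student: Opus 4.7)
The plan is to essentially repeat the invariant-based argument given in Section~\ref{sec:algorithm-1} for the Depth-$1$ Tree version of the lemma, and then to observe that a spanning star is a rooted tree of depth $1 \le \log n$, so the very same terminal configuration already satisfies the Depth-$\log n$ Tree specification (unique leader elected, depth-$d$ rooted spanning tree with $d \le \log n$). Concretely, I would show that in every execution the committee $C(u_{max})$ with globally maximum UID is the unique committee that eventually enters the termination mode, and that upon termination the active edges are exactly the edges of a star centered at $u_{max}$.

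The first step is the \emph{immortality} of $C(u_{max})$: by inspection of the modes, a committee dies only through the merging mode, which is entered only after selecting (or being pulled toward) a strictly greater-UID neighbor. Since no such neighbor exists for $u_{max}$, $C(u_{max})$ can never die. The second step is the \emph{progress} claim: every other committee eventually dies or strictly grows. I would handle the modes one by one. Merging kills a committee within one phase. For pulling, I would argue that the set of pulling committees forms a forest in which each tree is rooted at a committee in waiting mode with its children in merging mode; by the TreeToStar-style invariant already used in Proposition~\ref{TreeToStar}, repeatedly merging the current children and promoting their children eventually consumes the entire tree. For waiting, the root exits to selection once its star/tree has been absorbed, and it has strictly grown. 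For selection, I would assume for contradiction that some $C(u)$ with $u \ne u_{max}$ stays in selection forever; then every current or future neighboring committee has UID less than $u$, but each such neighbor must itself have selected some $C(w)$ with $\mathrm{UID}_w > \mathrm{UID}_u$, and after merging $C(w)$ (or an even larger absorbing committee) becomes a neighbor of $C(u)$, contradicting the alleged indefinite local maximality.

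Combining immortality with progress in a finite-population argument forces $C(u_{max})$ to be the only surviving committee after finitely many phases; since at that point $u_{max}$ has no neighboring committee, the selection mode sends it into termination. In the termination mode every follower $x \in C(u_{max})$ deactivates every incident edge except $xu_{max}$, and $u_{max}$ retains all of them, so the final active edge set is exactly $\{x u_{max} : x \ne u_{max}\}$: a spanning star rooted at $u_{max}$. This is a rooted tree of depth $1 \le \log n$ with $u_{max}$ as the unique elected leader, which meets the Depth-$\log n$ Tree specification.

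The main obstacle, as in the Depth-$1$ version, is the selection case: one has to rule out a committee remaining in selection forever while its neighbourhood keeps changing because of ongoing merges elsewhere. The cleanest way I see is the local-maximality contradiction above, but care is needed to verify that the neighbour that replaces an absorbed low-UID committee is not itself in pulling/merging mode indefinitely; this is why the pulling-forest analysis must be established before the selection argument, so that any pulling neighbour is guaranteed to resolve into a concrete higher-UID absorbing committee in finitely many phases.
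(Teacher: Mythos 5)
You have read the lemma literally, but the statement as printed is almost certainly a typo: it sits in the appendix giving the omitted details of Section \ref{SEC:ALGORITHM-2}, and the paper's own proof of it argues about the ring merging and tree merging modes and concludes that $C(u_{max})$ ends up as a \emph{spanning complete binary tree} rooted at $u_{max}$. The intended claim is therefore that \emph{GraphToWreath} solves Depth-$\log n$ Tree, not GraphToStar. Your proof instead re-establishes the correctness of GraphToStar (the Section \ref{sec:algorithm-1} algorithm) and then observes that a depth-$1$ star also meets a depth-at-most-$\log n$ specification. As a proof of the literal sentence this is acceptable --- indeed it is an immediate corollary of the Depth-$1$ Tree lemma already proved in Section \ref{sec:algorithm-1}, provided one reads ``depth $d$'' in the problem definition as ``depth at most $d$'' --- but it does not prove the statement the paper is actually proving here, and it adds essentially nothing beyond the earlier lemma.

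What the intended proof needs, and what your argument does not touch, is the correctness of the wreath-based mode structure. The shared skeleton (immortality of $C(u_{max})$, every other committee dies or grows, case analysis on modes, local-maximality contradiction for a committee stuck in selection) does carry over, and your version of that skeleton is sound. But in GraphToWreath the reason $C(u_{max})$ cannot die is different: there is no merging/pulling/waiting trichotomy; instead one must argue that whenever $C(u_{max})$ enters the tree merging mode it is necessarily the root of the resulting complete binary tree, so it survives the merge as the leader of the new committee. One must also argue that ring merging followed by the asynchronous LineToCompleteBinaryTree correctly consumes an entire tree of committees into a single wreath while preserving the gadget structure, and that the terminal configuration is a spanning complete binary tree, hence of depth $O(\log n)$ rather than $1$. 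None of these points appear in your write-up, so as a proof of the intended lemma it analyses the wrong algorithm and leaves the genuinely new content unproved.
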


\begin{proof}
	It suffices to prove that in any execution of the algorithm, one committee eventually enters the termination mode and that this committee can only be $C({u_{max}})$. If this holds, then by the end of the termination phase $C({u_{max}})$ forms a spanning complete binary tree rooted at $u_{max}$ and $u_{max}$ is the unique leader of the network. This satisfies all requirements of Depth-$\log n$ Tree.
	
	A committee \emph{dies} only when it merges with another committee by entering the tree merging mode. First observe that there is always at least one \emph{alive} committee. This is $C({u_{max}})$, because when it enters the tree merging mode, it is always the root of the complete binary tree. We will prove that any other committee eventually dies or grows, which due to the finiteness of $n$ will imply that eventually $C({u_{max}})$ will be the only alive committee.
	
	In any phase, but the last one which is a termination phase, it holds that every alive committee $C(u)$ is in one of the selection, ring merging, and tree merging modes. If $C(u)$ is in the ring merging mode then it will enter the tree merging mode and if its leader is not the root of the complete binary tree, then by the end of the current phase it will have died by merging with another committee $C'(u)$. It, thus, remains to argue about committees in the selection mode. 
	
	Now, a committee $C(u)$ in the selection mode can enter the tree merging mode. As argued above, if it enters the ring merging and tree merging modes in sequence it will either die or it will eventually grow. Thus, it suffices to consider the case in which it remains in the selection mode indefinitely. This can only happen if all current and future neighboring committees of $C(u)$ have an id smaller than $UID_u$. But each of these must have selected a neighboring $C(w)$, such that $UID_w>UID_u$, otherwise it would have selected $C(u)$. Any such selection, results in $C(w)$ becoming a neighbor of $C(u)$, thus contradicting the indefinite local maximality of $UID_u$.
\end{proof}

\noindent\textbf{Time Complexity}

\smallskip 

Let us move on to proving the time complexity of our algorithm. At the beginning, we are going to ignore the number of rounds within a phase, and we are just going to study the maximum number of phases before a single committee is left.

\begin{lemma} 
	After $O(\log n)$ phases, there is only a single committee left in the graph.
\end{lemma}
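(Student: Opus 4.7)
The plan is to mirror the amortised-growth analysis carried out for GraphToStar (Lemmas \ref{trees}--\ref{final}), exploiting the simpler merge dynamics afforded by the wreath gadget. Specifically, I will argue that the committee of maximum UID, $C(u_{\max})$, at least doubles in size within a bounded number of consecutive phases, so that after $O(\log n)$ phases it must be the only surviving committee.

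First I would rule out any committee remaining in the selection mode indefinitely, by a direct analogue of the argument given in the correctness proof above. If $C(u)$ stayed in selection forever, every current and future neighbouring committee would have to have a strictly smaller UID (otherwise $C(u)$ would select it), and every such smaller-UID neighbour $C(v)$ would have to have selected some committee $C(w)$ with $UID_w > UID_u$ (otherwise $C(v)$ would have selected $C(u)$, contradicting that $C(u)$ is never selected). Once the ring and tree merging involving $C(w)$ finishes, $C(w)$ becomes a neighbour of $C(u)$ with UID strictly larger than $UID_u$, contradicting the supposed indefinite local maximality of $UID_u$.

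Next I would show the single-phase collapse of a selection tree. In any phase the selection arrows form a rooted forest, each tree rooted at a local UID maximum. By the description of the Merging the Spanning Ring Subgraph subroutine, the ring-merging phase that follows simultaneously splices every ring in a selection tree into one bigger ring, because each $C(u)$ in the tree merges its ring with the ring of its selected parent $C(v)$ \emph{and} with the ring of every child $C(k)$ that selected $C(u)$. The subsequent tree-merging phase then converts that bigger ring into a single wreath rooted at the tree's local maximum. Hence, in $O(1)$ phases after a selection tree is formed, the whole tree contracts into a single surviving committee whose size is at least the sum of the sizes of its participants, and in particular at least double the size of any one of them.

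Combining the two steps yields the exponential-growth argument. Since $C(u_{\max})$ never dies and is necessarily the root of any selection tree in which it appears, whenever $C(u_{\max})$ has at least one live neighbour it participates in a merge within $O(1)$ phases that at least doubles its size; because sizes are bounded by $n$, $O(\log n)$ such doublings suffice to reduce the configuration to a single committee. The main obstacle will be the asynchrony between different selection trees: trees of very different sizes spend very different numbers of rounds inside tree-merging, so nearby committees enter the next selection mode at different global phase indices, and a committee may linger in selection simply because all its neighbours are temporarily inside a ring- or tree-merging mode. I would address this exactly as in the GraphToStar analysis, by defining a phase locally per committee (as the algorithm already does), verifying that any committee whose neighbourhood is stuck in merging modes has those neighbours' merges finish within a bounded number of phases, and then amortising over $O(1)$-phase windows in the spirit of Lemma \ref{final}.
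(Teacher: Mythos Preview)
Your proposal contains a concrete error in the second step: you assert that a selection tree contracts into a single committee ``in $O(1)$ phases'', and later that neighbours' merges ``finish within a bounded number of phases''. This is not how the algorithm is specified. Ring merging indeed costs one phase, but the Tree Merging mode executes only \emph{one round} of the asynchronous LineToCompleteBinaryTree per phase and explicitly ``stays in the Tree Merging mode'' until every node has terminated. Since the spliced ring may contain up to $n$ nodes, completing LineToCompleteBinaryTree can take $\Theta(\log n)$ tree-merging phases, not $O(1)$. Your doubling-every-$O(1)$-phases argument for $C(u_{\max})$ therefore does not go through: a single merge that doubles $C(u_{\max})$ may itself consume $\Theta(\log n)$ phases, which na\"ively yields only an $O(\log^2 n)$ phase bound.

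The paper's proof avoids this by taking a different route. Rather than re-deriving a growth argument, it sets up a mode-by-mode correspondence between GraphToWreath and GraphToStar: selection maps to selection; ring merging is an $O(1)$-phase intermediary; and Tree Merging (running LineToCompleteBinaryTree) plays the role of the pulling mode (running TreeToStar), with the claim that both collapse a depth-$t$ selection tree in $O(\log t)$ phases. The merging and waiting modes of GraphToStar are absorbed into Tree Merging. Because every GraphToWreath mode is matched to a GraphToStar mode with the same phase cost, the $O(\log n)$-phase bound is inherited wholesale from the GraphToStar analysis (Lemmas~\ref{trees}--\ref{final}) without redoing the amortisation. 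Your closing paragraph gestures at exactly this amortisation, but to make it work you must replace the incorrect ``$O(1)$ phases'' by the correct ``$O(\log t)$ phases for a depth-$t$ selection tree'' and then invoke the GraphToStar growth lemmas---which is precisely the paper's correspondence argument.
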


\begin{proof}

	Note that there is a direct correspondence between the modes in the GraphToWreath algorithm and the GraphToStar algorithm. 
	
	Both selection modes are used to decide the selections between the neighboring committees. The difference between the two algorithms is that each selection phase has a different running time. In particular, The GraphToStar selection phase required $2$ rounds while the selection phase of the GraphToWreath requires $O(\log n)$ rounds due to the diameter of the Wreath graph that each committee has.
	
	The ring mode is always an intermediate phase between the selection phase and the tree merging phase that lasts for $O(1)$ rounds. The purpose of this mode is to turn the tree $T$ created by the committees in the selection phase into a line so that the LineToCompleteBinaryTree subroutine can work.
	
	The pulling mode in the GraphToStar implements the TreeToStar subroutine, while the tree merging mode in the GraphToWreath implements the asynchronous version of the LinetoCompleteBinaryTree. Both subroutines are used to merge the Trees $T$ of depth $t$ created by the committees in $O(\log t)$ time and recall from the basic subroutines subsection that the TreeToStar and the LineToCompleteBinaryTree have the same running time. Therefore both algorithms require the same amount of phases.
	
	Note that there is no merging or waiting mode in the GraphToWreath since those modes have also been implemented by the merging tree mode.
	
	Since all modes that have been implemented in the GraphToWreath have equivalent modes in the GraphToStar with similar running times, the GraphToWreath algorithm requires at most $O(\log n)$ phases.
\end{proof}	

\begin{lemma}
	Each phase in the GraphToWreath algorithm, requires at most $O(\log n)$ rounds.
\end{lemma}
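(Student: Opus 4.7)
The plan is to bound the round complexity of each of the five modes of GraphToWreath separately, using the fact that the gadget itself has logarithmic diameter. The key structural invariant maintained by the algorithm is that every alive committee $C(u)$ is a wreath, i.e., it contains a spanning complete binary tree subgraph. Hence the diameter of any committee of size $k$ is $O(\log k) = O(\log n)$, and in particular the leader can reach any follower (and vice versa) in $O(\log n)$ rounds via the binary-tree component. I will use this as the universal tool for intra-committee convergecast and broadcast.

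First I would handle the selection mode. To carry out a selection, the leader $u$ must (i) learn the UIDs of all neighboring committees of $C(u)$ and their current modes, and (ii) communicate its decision back to the relevant boundary nodes. Boundary nodes are the endpoints (in $C(u)$) of the preserved edges of $G_s$ leading to other committees; they locally exchange UIDs across original edges in $O(1)$ rounds, then convergecast upward to $u$ over the binary tree in $O(\log n)$ rounds; $u$ computes the max-UID neighboring committee and broadcasts its decision down the tree in another $O(\log n)$ rounds. Selections between two leaders are then confirmed by direct exchange across an original edge of $G_s$, which is $O(1)$. Total: $O(\log n)$.

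Next I would handle ring merging and tree merging. Ring merging, per the appendix subroutine, requires the two leaders involved to coordinate the breaking of one ring edge per committee and the activation of two new bridging edges between their boundary nodes; the coordination piggybacks on the leader-to-boundary broadcast (an $O(\log n)$ operation), after which the actual edge operations happen locally in $O(1)$ rounds. For tree merging, the committees along a selection tree of depth $t$ use the newly formed ring to run the asynchronous \emph{LineToCompleteBinaryTree} subroutine. Because the merged ring at the end of this phase has at most $n$ nodes, its ``line'' portion has length at most $n$, and the subroutine terminates in $\lceil \log n\rceil$ rounds by the proposition in Section~\ref{SUBSEC:BASIC}. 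Thus both ring merging and tree merging are $O(\log n)$.

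Finally, termination amounts to the leader broadcasting a termination signal to every node of $C(u_{max})$ along the binary tree in $O(\log n)$ rounds, after which each follower deactivates all its incident edges not belonging to the spanning binary tree in one additional round. Taking the maximum over all five modes and a constant multiplicative overhead for sequencing the substeps within a phase gives $O(\log n)$ rounds per phase. The main obstacle I expect is making the asynchronous argument for tree merging rigorous: different subtrees of the selection tree finish their local LineToCompleteBinaryTree stages at different times, so I would need to argue that the overall number of rounds for the tree-merging mode of a committee is bounded by $O(\log s)$, where $s \le n$ is the size of the resulting merged committee, rather than by the sum of the depths of the participating subtrees; this follows from the fact that the subroutine's depth parameter halves at each round regardless of which nodes are activated in parallel, an observation that can be formalized exactly as in the synchronous LineToCompleteBinaryTree analysis.
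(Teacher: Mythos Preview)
Your proposal is correct and follows the same mode-by-mode decomposition as the paper, using the $O(\log n)$ diameter of the wreath gadget to bound intra-committee communication. Two small discrepancies are worth flagging. First, the paper defines a single tree-merging \emph{phase} to be exactly one round of the asynchronous LineToCompleteBinaryTree subroutine, so its per-phase cost is $O(1)$ (trivially $O(\log n)$); you instead treat the entire run of the subroutine as one phase costing $O(\log n)$ rounds. Both readings yield the stated bound here, but the paper's convention is what makes the companion ``$O(\log n)$ phases'' lemma go through via the direct correspondence with GraphToStar's pulling mode, so your worry about asynchrony across subtrees is handled at the phase-count level rather than inside this lemma. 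Second, GraphToWreath has four modes, not five, and the paper charges the $O(\log n)$ leader-to-boundary coordination to the selection phase (so ring merging itself is $O(1)$); your attribution differs but does not affect the $O(\log n)$ upper bound.
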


\begin{proof}
	We defer the detailed analysis for the proof to the low level description of the GraphToWreath algorithm where each different mode is described in detail with the upper bound on its running time.
	
	In short, let us argue that the selection phase requires $O(\log n)$ rounds since each $C(u)$ has to exchange information with its neighboring committees in order to decide with which committee $C(w)$ it is gonna merge with and whether any other committee $C(v)$ will decide to merge with $C(u)$. This requires time that is upper bounded by the diameter of each committee.
	
	The ring merging phase requires $O(1)$ rounds since every committee has to merge its ring component with committees $C(v)$ and the running time does not depend on the size of each committee participating.
	
	The tree merging mode implements one round of the asynchronous LineToCompleteBinaryTree.
	
	Therefore each phase requires at most $O(\log n)$ rounds to execute.
\end{proof}

\noindent\textbf{Edge Complexity}

\smallskip

The analysis for the total edge activations is simple. The algorithm runs for $O(\log^2 n)$ rounds and each node activates at most $1$ edge per round. Therefore the total edge activations are $O(n\log^2 n)$. 

Let us consider the maximum incident edges that a node can have, excluding the edges of the initial graph. Based on the low level description of the GraphToWreath algorithm, a node can have $2$ active incident edges for its ring neighbors in the wreath graph and another $2$ when two ring graphs are merging. Additionally, it can have $2$ active edges for the complete binary tree and another $2$ for the execution of the LineToCompleteBinaryTree. Therefore the active edges per round are $O(n)$ and the maximum degree of each node is $8+c$, where $c$ is the degree of each node in the original graph.

\subsection{Merging Wreath Graphs}

We next give the low level description of the GraphToWreath algorithm.

Consider multiple committees two of which are $C(u)=(K,L(p),u)$ and $C'(u)=(K',L'(p),v)$ where each committee is a wreath graph and $u,v$ are the leaders of each committee respectively. Every node in each committee knows the leader, the size and the diameter of the committee it belongs to and finally, all of the nodes share a common orientation based on the ring subgraph of the committee. W.l.o.g. we assume that the orientation is clockwise. Similar to the algorithm in section 4, each committee selects the neighboring committee with the greatest $UID$ to merge with. 

\smallskip

\noindent\textbf{Communication}

Here we describe how committees communicate with other committees and how they understand in which mode they are and in which mode they should switch to.

Each follower $x$ in committee $C(u)$ sends a message $\{myUID_x,maxNeighborUID,maxNeighborDiameter\}$ to its leader $u$, where $myUID_x$ contains the $UID$ of node $x$, $maxNeighborUID$ contains the $UID$ of the neighboring committee with the greatest UID among along neighboring committees that $x$ has an edge with, and $maxNeighborDiameter$ contains the diameter of that committee. This step requires at most $\log d\leq \log n$ rounds where $d$ is the diameter of committee $C(u)$.

After committee leader $u$ receives all the triplets $\{myUID_x,maxNeighborUID,maxNeihbourDiameter\}$, then if $\not\exists$ $maxNeighborUID>UID_u$, committee $C(u)$ does not select another committee and committee leader $u$ waits to see whether another committee has selected $C(u)$. Committee leader $u$ knows the maximum waiting time since it just received the maximum diameter of all neighboring committees. 
If $\exists maxNeighborUID>UID_u$, $C(u)$ selects the neighbouring committee $C(v)$ with the greatest $maxNeighborUID$ and sends a message to $x$ to initiate the connection with that committee. This step requires $\log d\leq \log n$ rounds.

After follower $x$ receives the initiation message, it sends a message to its neighbor $y\in K'$ that $C(u)$ has selected $C(v)$. Follower $y$ sends the selection request to its leader $v$. Leader $v$ waits enough in order to receive all possible selection requests from different neighboring committees. When $v$ receives all the requests, it sends back an approval message to all nodes $y$ with a timestamp that defines in which round the merging should happen. This step requires $2\log d+2\cdot maxNeighborDiameter\leq 4*\log n$ rounds.

Therefore every committee $C(u)$ can understand which committee $C(v)$ it has selected and whether any committees $C'(v)$ have selected $C(u)$. This means that $C(u)$ knows which mode it should enter after the selection phase.

\smallskip

\noindent\textbf{Merging the Spanning Ring Subgraph}

\smallskip

Suppose that multiple committees $C(u)$ have selected committee $C(v)$ and are trying to merge their ring component in the Ring Merging mode. For simplicity, let us call those committees $C(1),C(2),\ldots, C(n)$ and let us call $x_1\in V(C(1)), x_2\in V(C(2)),\ldots, x_p\in V(C(p))$ the nodes of each committee that are neighbors to $y\in C(v)$. The merging description follows. The clockwise neighbor of $y$ activates an edge with the counterclockwise neighbor of $x_1$, $x_1$ activates an edge with the counterclockwise neighbor of $x_2,\ldots$, and $x_{p-2}$ activates an edge with the counter clockwise neighbor of $x_{p-1}$. This process requires $2$ rounds because each pair of nodes has distance $3$ from each other. Afterwards, $y$ deactivates an edge with its clockwise neighbor and $x_1,x_2,\ldots,x_p$ deactivate the edge with their counterclockwise neighbor. Note here that at this point, $C(v)$ has a spanning ring that includes the nodes from all committees $C(1),C(2),\ldots, C(n)$. The above operations require $O(1)$ rounds since they do not depend on the size of each committee. After the operations are finished, every committee enters the tree merging mode.

Note here that the above process creates a graph with diameter $d\in O(\log n)$ if all committees involved were in a star subgraph, and a graph with diameter $d>O(\log n)$ if all committees involved were in a tree subgraph. Therefore we have to handle the two cases differently. Also note that there are some special cases where the above process needs to be tweaked in order to work e.g. when a committee consists of $1$ or $2$ nodes and it has not formed a ring yet.

\smallskip

\noindent\textbf{Merging the CompleteBinaryTree Subgraph}

\smallskip

\noindent\textbf{Stars}. Assume that committee $C(u)$ selected no committee but at least $1$ other committee $C(v)$ selected $C(u)$. After the ring merging is complete, committee leader $u$ sends a message to all nodes in committee $C(u)$ with a timestamp that defines in which round they should deactivate the edges of the previous complete binary tree subgraph and then execute the LineToCompleteBinaryTree subroutine where the edges of the ring subgraph define the line graph for the subroutine and the orientation of the ring defines the parent/children of each node. After the subroutine is finished, we have a spanning complete binary tree subgraph and therefore the final merging is complete. After that, committee leader $v$ also sends the size, the diameter, and $UID_v$ to all followers $x$ in $C(u)$. The messaging part of this algorithm requires $\log d'+\log maxNeighborDiameter$ and the LineToCompleteBinaryTree requires $\log|V(C(u))|)$. 

\smallskip

\noindent\textbf{Trees.} Since $d>O(\log n)$ we cannot use the previous method to form the CompleteBinaryTree. Note here that while $d>O(\log n)$, the distance $b$ of each node from at least $1$ ex-committee leader is $b<O(\log n)$ which we will take advantage of. Since the ex-committee leaders knew that they were on a directed tree subgraph in the graph of committees, once each of them finished with its own merging, it will send an activation message that will be propagated to all nodes. Once a node receives the activation message, it starts executing the asynchronous variation of the LineToCompleteBinaryTree. Since we know that every node has $b<O(\log n)$, every node will start executing the the asynchronous variation of the LineToCompleteBinaryTree after at most $O(\log n)$ rounds. Basically, every node has a different waking-up round between $0$ and $\log n$ The asynchronous LineToCompleteBinaryTree requires $\log n$ rounds after the final node in the line awakens. Therefore this step takes $O(\log n)$ rounds.

\smallskip

\noindent\textbf{Asynchronous LineToCompleteBinaryTree}

\smallskip

Our goal here is to make the nodes simulate the protocol of the synchronous LineToCompleteBinaryTree. The difficulty arises from the fact that nodes wake up at different points. Consider a spanning line of size $n$ with the root being the ``right" endpoint of the line. Let us call the nodes $u_0,u_1,u_2,\ldots,u_{n-1}$  for $j=1,2,\ldots,n-1$ starting from the ``left" endpoint of the line. In order to make sure that asynchronous version works correctly we have to make sure of the following things: (i) Each node $u_j$ never has a degree of more than $4$. This guarantees that the degree of each node stays the same as the synchronous version. (ii) Each node $u_j$ has to activate exactly the following edges  $(u_j,u_{j+2^1}),(u_j,u_{j+2^2}),\ldots,(u_j,u_{j+2^i})$ in this exact order, for $((i=1,2,\ldots,n)\lor (j+2^i<n-1))$ and has to deactivate exactly the following edges $(u_j,u_{j+2^0}),(u_j,u_{j+2^1}),\ldots,(u_j,u_{j+2^{i-1}})$ in this exact order for $(i=1,2,\ldots,n)\lor j+2^{i-1}<n-1$. This guarantees that both versions have the exact same edge activations and deactivations.

The asynchronous LineToCompleteBinaryTree works as follows:

Each node $u$ keeps a counter called $EA=0$ and $EDA=0$. Counter $EA$ tracks how many edges node $u$ has activated and counter $DEA$ tracks how many edges node $u$ has deactivated. Consider node $u$ where its parent is called $v$, its grandparent is called $w$, its child is called $x$ and the root of the tree is called $r$.
In each round $2\cdot i+1$, each node $u$ activates an edge $w$ if $u,v,w$ are awake AND $EA_u=EA_v$ AND $DEA_u=EA_u$. In each round $2\cdot i$, each node deactivates an edge with its parent if $u,v,x$ are awake AND $EA_x=DEA_u+1$ AND $EA_u=DEA_u+1$. In each round $i$, $r$ sends a termination if it has two children. If in some arbitrary round $j$ node $u$ receives a termination message, and in the beginning of round $2\cdot i+1>j$ its grandparent $w$ has $2$ children AND $EA_u=EA_w-1$ then node $u$ enters the termination state where in each round, it sends a termination message to its children. Note here that the two children of $r$ have no grandparent, and they enter the termination state once their parent (which is $r$) has two children.

\begin{lemma}
	Consider two initial line graphs $D(i)=(V,E(i))$ and $D'(i)=(V',E'(i))$ with size $n$, where $V=V'$ and $E(i)=E'(i)$. Executing the asynchronous LineToCompleteBinaryTree algorithm on graph $D(i)=(V,E(i))$ yields the final graph $D(f)=(V,E(f))$ and executing the synchronous LineToCompleteBinaryTree algorithm on graph $D'(i)=(V',E'(i))$ yields the final graph $D'(\log n)=(V',E'(\log n))$, where $E(f)=E'(\log n)$. 
\end{lemma}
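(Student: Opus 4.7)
The plan is to show that the asynchronous execution produces, node by node, the same sequence of activations and deactivations as the synchronous execution, and that these terminate at corresponding states. The natural parameter to induct on is the common ``phase index'' $k$, which in the synchronous run corresponds to round $k$, and in the asynchronous run corresponds to the value of $EA_u$ at a node $u$. The target invariant I would establish is: for every node $u$ and every $k$ with $0\le k\le \lceil\log n\rceil$, the $k$-th edge activated by $u$ in the asynchronous execution is exactly the edge $(u,u_{j+2^k})$ (with $u=u_j$), which coincides with the edge activated by $u$ in round $k$ of the synchronous execution; and the $k$-th deactivation removes the same edge in both executions. Since the final active edge set of each node is determined by the sequence of activations and deactivations it performs (together with the edges received from others), the equality $E(f)=E'(\log n)$ would follow.

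The key step is verifying that the guards enforce the invariant. First I would observe that a node $u$ may activate its $(k{+}1)$-st edge only in an odd round with $EA_u=EA_v=k$ and $DEA_u=EA_u=k$, i.e., only after its parent has also completed its $k$-th activation and after $u$ itself has already deactivated its previous parent-edge exactly $k$ times. By a straightforward upward induction on $k$, one shows that this state is reached for $u$ precisely $2k$ asynchronous rounds after $u$'s wake-up and one round after its parent reaches it; the resulting potential neighbor is the node currently at distance $2^k$ from $u$ in the path structure, which is $u_{j+2^k}$, matching the synchronous behaviour. The deactivation clause, $EA_x=DEA_u+1$ and $EA_u=DEA_u+1$, guarantees that $u$ does not drop its parent-edge before its child has completed the corresponding activation, which prevents creating disconnected components ahead of schedule and again matches the synchronous order. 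A small but important point is to rule out ``races'' near the endpoints of the line (nodes with $j+2^k\ge n-1$), which I would handle by an explicit case analysis showing that those nodes simply stop activating at the correct index.

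The main obstacle I anticipate is the correctness of termination detection under asynchrony. In the synchronous version, all nodes stop simultaneously; in the asynchronous version, the root $r$ detects that it has two children and broadcasts a termination token down the tree, with a node $u$ only accepting termination once its grandparent has two children and $EA_u=EA_w-1$. I would show, again by induction on distance from $r$, that this condition is first satisfied for $u$ precisely one asynchronous round after the token reaches $u$'s parent, and that at that moment $u$ has performed exactly the same total number of activations and deactivations as in the synchronous run. Combined with the fact that no guard can fire after termination is entered, this ensures that the multiset of edges each node ever activated and deactivated is identical in the two runs.

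Putting the pieces together, I would conclude by noting that the final edge set at each node in either execution is $(\text{initial edges})\cup(\text{activations})\setminus(\text{deactivations})$. Since the per-node activation and deactivation sequences coincide by the invariant, the edge sets coincide: $E(f)=E'(\lceil\log n\rceil)$. The bookkeeping is mostly routine; the delicate points are (i) proving the guard conditions propagate activation in the correct wave even when wake-up times differ by up to $\lceil\log n\rceil$, and (ii) proving that termination, which is the only global signal in the asynchronous protocol, stops each node at the exact iteration count it would have reached synchronously.
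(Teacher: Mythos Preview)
Your proposal is essentially the same approach as the paper's proof: both argue by induction that the guard conditions on $EA$ and $DEA$ force each node $u_j$ to activate and deactivate exactly the sequence $(u_j,u_{j+2^1}),(u_j,u_{j+2^2}),\ldots$ in order, matching the synchronous run, and both then handle termination separately to show nodes stop at the correct iteration. The paper's write-up is somewhat less structured (it first bounds the degree, then inducts on rounds rather than on the counter value $k$), but the substance is the same.

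One small caution: your claim that the state $EA_u=k$ is reached ``precisely $2k$ asynchronous rounds after $u$'s wake-up'' is too strong as stated, since $u$ must also wait for its parent and grandparent to be awake and to have matching counters; the delay depends on their wake-up times, not just $u$'s. This does not affect your conclusion about $E(f)=E'(\log n)$, because the order of activations and deactivations is what determines the final edge set, not their absolute timing---and you correctly flag the wake-up propagation as the delicate point.
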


\begin{proof}
	
	Consider any arbitrary node $u_j$ on the line $u_0,u_1,\ldots,u_{n-1}$. The condition ($DEA_{u_j}=EA_{u_j}$) for activating an edge imposed by the algorithm does not allow node $u_j$ to have more than $2$ active edges with nodes $u_k$, where $k>j$. The condition $EA_{u_j}=EA_{u_i}$ for activating an edge imposed by the algorithm does not allow node $u_j$ to have more than $2$ active edges with nodes $u_l$, where $l<j$. Therefore node $u_j$ can never have more than $4$ active edges. 
	
	Now let us prove that node $u_j$ always activates and deactivates the edges based on the previous analysis.
	
	Consider $p$ the round in which node $u_j$ has woken up. For every round $g<p$, node $u_{j+1}$ will not deactivate edge $(u_{j+1},u_{j+2})$ by definition of the algorithm, and if node $u_{j-1}$ is not awake node $u_j$ will not deactivate $(u_{j-1},u_j)$.  Once nodes $u_j,u_{j+1},u_{j+2}$, node $u_j$ will activate edge $(u_j,u_{j+2})$ and after this once node $u_{j-1}$ activates $(u_{j-1},u_{j+1})$ node $u_j$ will deactivate $(u_{j-1},u_j)$.
	
	Consider round $2\cdot b>p$. Lets us assume that in rounds $p,p+1,\ldots,b-1$, the correct edges were being activated. This means that node $u_j$ now has an active edge with its current parent $u_{j+2^{b-1}}$. Note here that node $u_j$ will not activate an edge with its grandparent since $EA_{u_j}>EA_{u_{j+2^{b-1}}}$. Once node $u_{j+2^{b-1}}$ activates  $EA_{u_j}$ in total, it will have $u_{j+2^{b}}$ as a parent. Once this happens, node $u_j$ activates an edge with $u_{j+2^{b}}$. 		
	Consider round $2\cdot b+1>p$. Now let us assume that in rounds $p,p+1,\ldots,b-1$, the correct edges were being deactivated. This means that node $u_j$ now has an active edge with its child $u_{j-2^{b-1}}$. Node $u_j$ will not deactivate the edge with $u_{j+2^{b-1}}$ since $EA_{u_{j-2^{b-1}}}<DEA_{u_j}+1$. Once node $u_{j-2^{b-1}}$ activates an edge with $u_{j+2^{b-1}}$, $EA_{u_{j-2^{b-1}}}=DEA_{u_j}+1$ and node $u_j$ will deactivate an edge with with $u_{j+2^{b-1}}$.
	
	Finally we need to prove, that node $u_j$ will terminate and will not keep activating and deactivating edges indefinitely. Note here that node $u_{n-1}$ will definitely terminate at some point since node $u_{n-3}$ will activate an edge with $u_{n-1}$ once it awakens. Each time that a node $u$ terminates, its sends a message to its children to terminate as well. Therefore, we can see that all nodes will eventually terminate. Also note that no node will receive a termination message until its parent terminates as well. The termination state condition can then be translated to: ``if my parent (and my grandparent) has entered termination mode, and my grandparent has $2$ children, i will also enter the termination mode". Note that his can only happen when node $u$ has reached its final position in the CompleteBinaryTree. Node $u$ can never activate an edge after it reaches its final position since its grandparent $w$ will always have two children.
\end{proof}

\begin{corollary}
	Consider an initial line graph $D(i)=(V,E(i))$ with $n$ nodes. We execute the asynchronous LineToCompleteBinaryTree algorithm on the graph. Assume that the final node on the graph wakes up in round $k$. The LineToCompleteBinaryTree algorithm requires $O(\log(n)+k)$ rounds to terminate.
\end{corollary}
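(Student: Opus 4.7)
The plan is to leverage the previous lemma, which guarantees that the asynchronous execution produces exactly the same sequence of edge activations and deactivations at every node as the synchronous LineToCompleteBinaryTree. Since the synchronous algorithm terminates in $\lceil\log n\rceil$ rounds, every node $u_j$ performs at most $\log n$ activations and at most $\log n$ deactivations in total, so its counters $EA_{u_j}$ and $DEA_{u_j}$ are bounded by $O(\log n)$. The goal is therefore to argue that, once the last node has awakened at round $k$, only $O(\log n)$ additional rounds are needed before every node reaches its terminal counter value.

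First, I would introduce a notion of \emph{step}: say that node $u_j$ has completed step $s$ once $EA_{u_j}\geq s$ and $DEA_{u_j}\geq s$. Note that step $s$ for node $u_j$ corresponds exactly to the $s$-th parallel round of the synchronous algorithm at $u_j$. The key claim is the following: for every $s\geq 0$, every node that ever performs step $s$ has completed it by round $\max(k,\,t_0)+2s$ for some small constant offset $t_0$. I would prove this by induction on $s$. The guard conditions in the asynchronous algorithm state that $u_j$ can activate its next edge only when its current parent $v$ and grandparent $w$ are awake and satisfy $EA_{u_j}=EA_v$, and can deactivate only when the corresponding child has caught up. By the inductive hypothesis, all three of $u_j$, $v$, $w$ have completed step $s-1$ by round $\max(k,t_0)+2(s-1)$; both nodes are awake after round $k$; and within one round $u_j$ activates its next edge, and within one more round it deactivates the stale parent edge, thereby completing step $s$.

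Since the synchronous algorithm involves only $\lceil\log n\rceil$ steps, induction on $s$ up to $\log n$ yields that all activations and deactivations finish by round $k+O(\log n)$. It remains to bound the time for termination messages to propagate. Once the root $r$ has two children, which happens by the end of the last synchronous step, the termination signal travels down the final complete binary tree, which has depth $O(\log n)$. This contributes only an additive $O(\log n)$, and the asynchronous wake-up times do not delay this cascade because every node on the path is already awake by round $k\leq k+O(\log n)$. Adding the two contributions gives the claimed $O(\log n + k)$ bound.

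The main obstacle is the inductive step: one has to verify that the three guard clauses $EA_{u_j}=EA_v$, $EA_{u_j}=DEA_{u_j}$ (for activation) and $EA_x=DEA_{u_j}+1$, $EA_{u_j}=DEA_{u_j}+1$ (for deactivation) really do become simultaneously satisfied within $O(1)$ rounds after all the relevant neighbors have completed the previous step. This is where the structure of the synchronous algorithm is used: in the synchronous execution, activations happen on odd rounds and deactivations on even rounds in a strictly alternating fashion, so the asynchronous counters respect the same parity pattern, and no node can be permanently blocked waiting for a neighbor once all involved nodes are awake and have matching step counts.
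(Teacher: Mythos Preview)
Your approach is correct but takes a different route from the paper. The paper's proof is much shorter and argues from the perspective of a single bottleneck node: it picks any node $u_j$ among the last to wake up and claims that from round $k$ onward $u_j$ activates on every odd round and deactivates on every even round, because all of its (current) neighbors are already awake and have counters at least as large as $u_j$'s. Since each node performs $O(\log n)$ activations in total, $u_j$ finishes within $O(\log n)$ rounds after waking, and the bound follows.

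Your argument is a uniform-progress induction over a global step counter $s$, showing that \emph{every} node completes step $s$ by round $k+2s$. This is more explicit and arguably more rigorous: the paper's one-line justification that ``every neighbor $u_k$ of $u_j$ has $EA_{u_k}\geq EA_{u_j}$'' glosses over exactly the interlocking of the guard conditions that you flag as the main obstacle. Your decomposition makes visible that the equality guard $EA_u=EA_v$ (not merely $\leq$) must be met, and that this relies on the deactivation guard preventing the parent from running more than one step ahead of its child. The paper's bottleneck argument is terser and gets to the bound faster, while your induction buys a clearer account of why no node ever stalls once all nodes are awake; both ultimately rest on the same structural fact that the alternating activation/deactivation guards keep neighboring counters within one of each other.
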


\begin{proof}
	
	Note that after all nodes have awoken, the algorithm requires $O(\log n)$ rounds to terminate. Consider node $u_j$ that was one of the last nodes to wake up. In each odd round after waking up, node $u_j$ will activate an edge and in each even round after waking up, it will deactivate an edge. This happens because every neighbor $u_k$ of $u_j$ is awake and it has $EA_{u_k}\geq EA_{u_j}$ and $EDA_{u_j}\geq EDA_{u_i}$.
\end{proof}

\section{Omitted Details from Section \ref{SEC:ALGORITHM-3} - Trading the Degree for Time}

Since the high level strategy of our algorithm is exactly the same as the previous ones apart from minor differences, we are just going to list how we handle the new differences instead of the whole algorithm. First consider the selection graph. 

\begin{enumerate}

\item Each committee $u$ that has more than 1 child prepares to deactivate all of its edges with its children in the selection graph apart from the one with the greatest UID among its children. Note now that if the deactivations do happen, in the selection graph we will have subgraphs that are either single committees, or pairs of committees or lines of committees. Thus, we have managed to get rid of the directed trees. The problem now is that we have a lot of single committees. 

\item Each committee $u$ performs a matching between its children that are leaves. If they are an odd number of children, then $u$ matches itself with one if its children as well so that everyone is matched. Thus, we have also managed to minimize the number of single committees.

\item Each committee $u$ deactivates the edges listed on step 1. 

\end{enumerate}

Currently in the selection graph, we have pairs of committees and lines of committees. Each pair merges in a single committee in $1$ phase and each committee on a line execute a variant of the LineToCompleteBinaryTree which requires $O(\log k)$ rounds where $k$ is the diameter of the line. We have now finished with the high level strategy of the algorithm. 

We are now going to provide the low level details of our algorithm which are the ones that allow us to minimize the running time of our algorithm. We are going to omit the communication description since it is very similar to the GraphToWreath algorithm. 

\textbf{Leader Merging Mode}

Consider two committees which are $C(p)=(K,L(p),u)$ and $C'(p)=(K',L'(p),v)$ and committee $u$ decides to merge with committee $v$ through nodes $x\in K$ and $y\in K'$. They both agree on the merging (identical to how the GraphToWreath committees communicate), first they merge their ring components (identical to how the Wreath committees merge their ring component) and then node $x$ activates edges in $C(p)$ until it activates an edge with leader $u$ and node $y$ activates edges in $C'(p)$ until it activates an edge with leader $y$. After that, $u$ activates an edge with $y$ and the merging is complete. We now have the new committee  $C''(p)=(K'',L''(p),v)$. Let us assume that prior to the merging the shortest path between leader $u$ and every other node in committees $C(p)$ is $O(\frac{\log n}{\log\log n})$. Let us assume the same for committee $v$. Then after the merging, the same is true for leader $v$ and every other node in committee $C''(p)$.

\textbf{Asynchronous LineToCompletePolylogarithmicTree}

Every leader starts executing a variant of the asynchronous LineToCompleteBinaryTree. The difference of this variant algorithm called LineToCompletePolylogarithmicTree, is that the criteria for entering the termination stage is that your grandparent has $\log n$ children instead of $2$. Every node knows the size $n$ of the network, and therefore knows the $\log n$ upper bound needed for termination. Note here that once all of the leaders finish the asynchronous LineToCompletePolylogarithmicTree the shortest path between any $2$ leaders is $O(\frac{\log n}{\log\log n})$. Let us assume that prior to the merging the shortest path between leader $u$ and every other node in committees $C(p)$ is $O(\frac{\log n}{\log\log n})$. Let us assume the same for all committees on the line. Then after the merging, the same is true for leader $v$ and every other node in the final committee.

\textbf{Matching}

We still have to define how the matching is done between the children of each committee $u$ . The difficulty here is that after we do the matching, the children have to become neighbors through $u$ and we have to handle this properly so that we can guarantee that the degree of each node stays polylogarithmic. In order to achieve this, each leader $u$ keeps a virtual addressing of each node in its committee on the leaves of its committee. Since $n/2$ nodes of the committee are leaf nodes, we can achieve this by addressing two nodes to each leaf node. Consider multiple neighboring committees that committee $u$ has to match. Committee $u$ computes the minimum distance between all pair of committees and then matches the pair the minimum distance. Committee $u$ keeps recomputing and matching until all pairs of nodes are matched. 
Note here that after each merging that committee $u$ does, it has to virtually readdress all the nodes in its committee based on the new polylogarithmic tree.

\section{Omitted Details from Section \ref{SEC:LOWER-BOUNDS} - Lower Bounds for the Depth-$\log n$ Tree Problem}
\label{appSEC:LOWER-BOUNDS}

We will now shift our focus into proving lower bounds for our model. We are going to provide lower bounds for both a centralized model and a distributed one because we want to show that there is an important difference between the two of them. 

\subsection{Centralized Setting}

In the centralized setting, everything we have previously defined in the model subsection stays the same but now every node also has complete knowledge of the graph and a centralized controller can decide what each node will do in each round.

\begin{definition}
	We define the potential of a $UID_{u}$ to $v$ as its minimum "distance" from $v$. The distance is defined as follows: Consider all nodes $w$ in the network that know $UID_u$. Compute the length of the shortest path between each node $w$ and node $v$. The minimum length among all shortest paths is the distance between $UID_u$ and node $v$. We denote the potential of $UID_{u}$ to $v$ by $PO_{u,v}$. 
\end{definition}

Note that in any initial graph $D=(V,E)$, $\forall u,v \in V, PO_{u,v}= \max\limits_{u} PO_{u,v}=n-1$. Consider any pair of nodes $u,v$, where $PO_{u,v}=k$. There are two ways to reduce $PO_{u,v}$ in each round $i$:

$\bullet$ \textbf{Information Propagation}. Consider all nodes $w$ that currently know $UID_u$. Compute the shortest path between all pairs of $w$ and $v$ and pick node $w$ that yields the smallest shortest path. Node $w$ can send to $UID_u$ one of its neighbors $y$ that belong on the shortest path between $w$ and $v$ to reduce $PO_{u,v}$ by $1$.  

$\bullet$ \textbf{Reduce Shortest Paths}. Consider all nodes $w$ that currently know $UID_u$. Compute the shortest path between all pairs of $w$ and $v$ and pick node $w$ that yields the smallest shortest path with $size=k$. Now consider all pairs of nodes $x,y$ that are potential neighbors and also belong on the shortest path between $w$ and $v$. Activating $xy$ between one pair of $x,y$ reduces $PO_{u,v}$ by $1$. Activating multiple $xy$ between different pairs in one round can reduce $PO_{u,v}$ even more but at most by $k/2$.

\begin{observation} \label{appPotential}
	In order for an algorithm to solve the Depth-$\log n$ Tree Problem, $\forall u,v\in V, PO_{u,v}\leq \log n$.
\end{observation}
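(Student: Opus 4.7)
The plan is to combine a structural consequence of the target network with an information-propagation property forced by comparison-based deterministic leader election. At termination the graph is a rooted tree of depth at most $\log n$ whose root $u^{*}$ is the unique leader; in particular $d(u^{*}, v) \leq \log n$ for every node $v$. This already yields $PO_{u^{*}, v} \leq \log n$ since $u^{*}$ trivially knows its own UID, but it does not directly give the bound for arbitrary pairs, because the diameter of a depth-$\log n$ rooted tree can be as large as $2\log n$, so using only $d(u, v)$ would yield $PO_{u, v} \leq 2\log n$.

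To close this factor-of-two gap, I would argue that $u^{*}$ must have received, directly or transitively, $UID_u$ for every $u \in V$. Since the algorithm is comparison-based (as stated for all results in the paper) and must elect a unique leader, a standard indistinguishability argument forces $u^{*}$ to be the extremal-UID node, and further forces $u^{*}$ to have ``witnessed'' every other UID: if some $u \neq u^{*}$ had never transmitted its UID to $u^{*}$ along any chain of messages, then relabeling $UID_u$ to an arbitrarily larger value would yield an execution indistinguishable from the original at $u^{*}$, so $u^{*}$ would still be declared leader even though the new $UID_u$ now dominates, contradicting correctness on the Depth-$\log n$ Tree problem. Combined with $d(u^{*}, v) \leq \log n$, this places $u^{*}$ among the nodes considered in the definition of $PO_{u, v}$ and gives $PO_{u, v} \leq \log n$.

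The main obstacle is formalizing what it means for a node to ``know'' a UID, which the model leaves implicit. The cleanest convention is to identify the set of UIDs known to a node with its own UID together with every UID appearing in a message it has received (possibly forwarded transitively), after which the indistinguishability argument runs without friction. A secondary subtlety is that in the centralized subsection the controller has global knowledge, yet $PO$ is defined purely in terms of nodes' local knowledge; since the potential definition is already at the node level, this is less a real obstacle than a reminder to stay at that level throughout the argument.
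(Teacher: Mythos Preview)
The paper states this as an \emph{Observation} and gives no proof; it is used only to feed the $\Omega(\log n)$ time and $\Omega(n)$ edge-activation lower bounds, both of which are insensitive to constant factors. The trivial argument you mention---$u$ itself knows $UID_u$, and in a depth-$\log n$ rooted tree $d(u,v)\le 2\log n$---already gives $PO_{u,v}\le 2\log n$, and this is all the subsequent proofs actually need. So the paper almost certainly intends the observation in this loose, up-to-constants sense.

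Your attempt to close the factor-of-two gap has a genuine problem. The observation sits in the \emph{centralized} subsection, where by definition ``every node also has complete knowledge of the graph and a centralized controller can decide what each node will do in each round.'' In that setting the controller can designate any node as the root/leader without that node having received anyone's UID through messages; the indistinguishability-from-relabeling argument simply does not bite, because the entity making the election decision (the controller) already sees all UIDs. You flag this as a ``secondary subtlety'' and wave it away, but it actually kills the argument in the setting where the observation is stated.

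Even if one transplants your argument to the distributed setting, the contradiction step is not sound as written. The Depth-$\log n$ Tree problem only asks for \emph{some} unique leader at the root; it nowhere requires the leader to be the extremal-UID node. So ``$u^{*}$ would still be declared leader even though the new $UID_u$ now dominates'' is not, by itself, a violation of correctness. To get a contradiction you would need a further argument (e.g., that the relabeled execution produces a second node in leader state, or that the resulting graph fails to be a depth-$\log n$ tree), and that is not supplied. There is also a mismatch between ``$u^{*}$ lies on a causal chain from $u$'' and ``$u^{*}$ knows $UID_u$'' in the paper's sense: the former can hold while the latter fails (e.g., intermediate nodes forward only maxima), and $PO$ is defined via the latter.

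In short: the paper's own ``proof'' is absent, the bound is meant asymptotically, and the simple diameter argument is the intended (and sufficient) justification. Your sharper route does not go through in the centralized setting and has a gap in the distributed one.
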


\begin{lemma} \label{appLowerTimeBound}
	Any transformation strategy based on this model requires $\Omega(\log n)$ time to solve the Depth-$\log n$ tree problem if the initial graph $G_s$ is a spanning line.
\end{lemma}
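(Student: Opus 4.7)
The plan is to exploit the potential function $PO_{u,v}$ together with Observation~\ref{appPotential} and the two mechanisms listed for reducing potential. I would fix the pair of endpoints $u_0, u_{n-1}$ of the spanning line $G_s$, for which $PO_{u_0,u_{n-1}} = n-1$ in the initial configuration, and track how this single quantity can evolve over a sequence of rounds of any centralized algorithm.

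The first step is to formalize a per-round upper bound on how fast $PO_{u,v}$ can drop. Information propagation changes $PO_{u,v}$ by at most $1$, so the only way to obtain a big decrease in a round is via \emph{Reduce Shortest Paths}. According to the model, activating the available distance-$2$ edges simultaneously along the current witnessing shortest path can reduce the potential by at most a factor of $2$ in one round, i.e., if $PO_{u,v} = k$ at the start of a round then $PO_{u,v} \geq \lceil k/2 \rceil$ at the end. I would prove this by arguing that the new active edges can only shortcut pairs of consecutive edges of the witnessing path into single edges (potential neighbors are at distance exactly $2$), and that combining this with the information-propagation decrement of $1$ is dominated by the halving bound for all $k \geq 2$.

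From this per-round bound, a straightforward induction gives $PO_{u_0,u_{n-1}}(t) \geq (n-1)/2^t$ after $t$ rounds, starting from $PO_{u_0,u_{n-1}}(0) = n-1$. By Observation~\ref{appPotential}, any algorithm that solves Depth-$\log n$ Tree must at termination satisfy $PO_{u_0,u_{n-1}} \leq \log n$, so the running time $T$ must satisfy
\[
\frac{n-1}{2^T} \leq \log n,
\]
yielding $T \geq \log(n-1) - \log\log n = \Omega(\log n)$, as required.

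The main obstacle I anticipate is justifying the per-round halving bound rigorously. One has to check that no single round can do better than factor-$2$ reduction even by simultaneously (i) activating many shortcut edges along several candidate shortest paths between different source witnesses $w$ and $v$, and (ii) exchanging $UID_u$ to a new witness $w'$. The key observation is that a newly active edge $xy$ must satisfy that $x$ and $y$ were distance-$2$ potential neighbors at the \emph{start} of the round, so the only shortening operations available in the round are edge-contractions of pairs of consecutive edges on the line graph already present; thus the minimum shortest-path length can at best be halved, regardless of how many contractions or information hops are performed in parallel. Once this structural claim is stated carefully, the lemma follows from the displayed inequality above.
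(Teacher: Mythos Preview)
Your proposal is correct and follows essentially the same approach as the paper's proof: fix the two endpoints of the line with initial potential $n-1$, invoke Observation~\ref{appPotential} to require the potential to drop to at most $\log n$, and use the fact that edge activations can at best halve the potential per round (while information propagation contributes only an additive $-1$) to conclude $\Omega(\log n)$ rounds. The paper's argument is terser and does not spell out the per-round halving justification or the induction you outline, but the structure and ingredients are identical.
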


\begin{proof}
	
	Consider a spanning line where, for simplicity, we call the node that resides at the ``left" endpoint of the line $u$ and the node that resides at the ``right" endpoint of the line $v$. According to observation \ref{appPotential}, in order for an algorithm to solve the Depth-$\log n$ tree problem, $PO_{u,v}\leq \log n$. In the initial graph, $PO_{u,v}=n-1$. 
	We know that by using Edge Activations, we can reduce $PO_{u,v}$ by half in each round, and by using Information Propagation we can reduce $PO_{u,v}$ by 1 in each round. Therefore in order for $PO_{u,v}=\log n$, any algorithm would required at least $\Omega(\log n)$ rounds.
\end{proof}

\begin{lemma}\label{appEdgeActivations}
	Any transformation strategy based on this model that solves the Depth-$\log n$ Tree problem in $O(\log n)$ time, requires $\Omega(n)$ edge activations.
\end{lemma}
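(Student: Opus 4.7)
My plan is to reuse the spanning line $L = u_0 u_1 \cdots u_{n-1}$ from Lemma~\ref{appLowerTimeBound} as the hard instance and to track just the graph distance between its two endpoints. Let $G_t$ denote the graph at the end of round $t$, with $G_0 = L$ and $G_T = G_f$ the final tree for some $T = O(\log n)$. Write $d_t$ for the distance from $u_0$ to $u_{n-1}$ in $G_t$. Then $d_0 = n-1$, while $d_T \le 2\log n$ since any tree of depth $\log n$ has diameter at most $2\log n$ (take the path through the root).

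The engine of the proof is a one-round \emph{path-replacement} inequality. Fix $t \ge 1$, let $P_t$ be a shortest $u_0$-to-$u_{n-1}$ path in $G_t$, and let $k_t$ be the number of edges of $P_t$ lying in $G_t \setminus G_{t-1}$, i.e., newly activated during round $t$. By the model's activation rule, each such new edge $(x,y)$ comes with a witness $w$ satisfying $(x,w),(w,y)\in E(G_{t-1})$; substituting the 2-hop detour $x$-$w$-$y$ for each of the $k_t$ new edges of $P_t$ produces a walk from $u_0$ to $u_{n-1}$ of length $d_t + k_t$ that lies entirely in $G_{t-1}$ (the edges of $P_t$ that were not newly activated appear in $G_t$, so they were already present in $G_{t-1}$ and not deactivated this round). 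Therefore $d_{t-1} \le d_t + k_t$, i.e., $k_t \ge d_{t-1}-d_t$.

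Summing over the rounds telescopes to
\[
 \sum_{t=1}^{T} k_t \;\ge\; d_0 - d_T \;\ge\; n - 1 - 2\log n \;=\; \Omega(n).
\]
Since $k_t$ lower-bounds the number of activations performed in round $t$, the algorithm must perform $\Omega(n)$ total edge activations, which is the claim. As a bonus, spreading $\Omega(n)$ activations across $O(\log n)$ rounds forces at least one round to contain $\Omega(n/\log n)$ activations, recovering the companion per-round lower bound mentioned alongside this lemma in the introduction. The only delicate step is justifying that the replacement walk really lies in $G_{t-1}$; this is a small bookkeeping check that uses the activation rule plus the fact that an edge present in $G_t$ but not newly activated must have survived unchanged from $G_{t-1}$. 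Once that is handled, the entire argument collapses to a single telescoping estimate.
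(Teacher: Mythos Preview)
Your proof is correct and follows the same core idea as the paper: a distance-like quantity starts at $n-1$, must end at $O(\log n)$, and each edge activation can reduce it by at most one. The paper tracks the potential $PO_{u,v}$ (the minimum graph distance from any node currently knowing $UID_u$ to $v$) and separately charges ``information propagation'' rounds and edge activations against its decrease, obtaining $\#EdgeActivations \ge (n-1) - \#MessagesSent - \log n \ge n-1-2\log n$. You instead track the plain graph distance $d_t$ between the two endpoints and observe that it can only shrink via activations; your path-replacement inequality $d_{t-1}\le d_t+k_t$ is exactly the statement that each of the $k_t$ new edges on a shortest path substitutes for a $2$-hop detour in $G_{t-1}$.

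The difference is mostly in bookkeeping, but it buys you two small things. First, your argument never invokes messages or the $O(\log n)$ time hypothesis: the telescoping bound $\sum_t k_t \ge (n-1)-2\log n$ holds for \emph{any} strategy reaching depth $\log n$, regardless of running time, so you have in fact proved a slightly stronger statement than the lemma as phrased. Second, you avoid the potential machinery entirely, so the argument is self-contained. The ``delicate step'' you flag is indeed routine: an edge of $P_t$ not in $G_t\setminus G_{t-1}$ is by definition in $G_t\cap G_{t-1}$, and an edge of $G_t\setminus G_{t-1}$ must lie in $E_{ac}$ for that round, whence the activation rule supplies the two-hop witness in $G_{t-1}$.
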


\begin{proof}
	Let us again consider a spanning line as the initial graph. W.l.o.g. let us assume that the size of the network is odd. Let us call $u$ the node that is the ``left" end point of the line and $v$ the ``right'' endpoint of the line.
	
	Let us assume that in some round $i$, where $i\leq \log n$, that $PO_{u,v}\leq \log n$. We can produce the following equation based on the two rules that allow us to reduce the potential: $InitialPotential-\#EdgeActivations-\#MessagesSent\leq \log n$. The maximum value of $MessagesSent$ is $\log n$ and $InitialPotential=n-1$ and if we add those in the previous equation we get  $\#EdgeActivations\geq n-1-2\log n$ and therefore, in order for $PO_{u,v}\leq \log n$ at least $n-1-2\log n$ edges have to have been activated.
\end{proof}

\begin{lemma}
	Any transformation strategy based on this model that solves the Depth-$\log n$ Tree problem in $O(\log n)$ time, requires $\Omega (n/\log n)$ edge activations per round.
\end{lemma}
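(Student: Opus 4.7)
The plan is to derive this lower bound as a direct averaging consequence of Lemma~\ref{appEdgeActivations}. I will reuse the same adversarial instance, namely a spanning line $G_s$ with endpoints $u$ and $v$, so that I can invoke the total edge activation lower bound without any additional setup.

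First, I would recall that Lemma~\ref{appEdgeActivations} establishes that every centralized strategy solving Depth-$\log n$ Tree in $O(\log n)$ rounds on the spanning line must perform at least $n-1-2\log n \in \Omega(n)$ edge activations in total, summed over its entire execution. Next, I would let $T \in O(\log n)$ denote the running time and write the total activation count as $\sum_{i=1}^{T} |E_{ac}(i)|$. By the pigeonhole principle, the maximum $\max_{i \in [T]} |E_{ac}(i)|$ is at least the average $\tfrac{1}{T} \sum_{i=1}^{T} |E_{ac}(i)|$, which is $\Omega(n)/O(\log n) = \Omega(n/\log n)$. Therefore at least one round $i^{\star}$ must witness $|E_{ac}(i^{\star})| \in \Omega(n/\log n)$, which is precisely the claimed per-round lower bound.

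There is no substantial obstacle here beyond making sure the asymptotic bookkeeping is clean: the only subtle point to flag is that the bound is a worst-case statement over rounds (maximum), not a uniform statement that every single round must activate that many edges, and that the adversarial instance inherited from Lemma~\ref{appEdgeActivations} is what forces this peak. Since both the $\Omega(n)$ total and the $O(\log n)$ time bound apply to the same execution on the same line instance, the conclusion follows without any need to re-analyze potentials or edge-activation mechanics from scratch.
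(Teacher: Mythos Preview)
Your proposal is correct and matches the paper's own argument essentially line for line: the paper also invokes Lemma~\ref{appEdgeActivations} on the spanning line to get $\Omega(n)$ total edge activations and then divides by the $O(\log n)$ running time to obtain the $\Omega(n/\log n)$ per-round bound. Your additional remark clarifying that this is a worst-case (maximum) statement over rounds is a welcome sharpening of what the paper leaves implicit.
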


\begin{proof}
	
	From Lemma \ref{appEdgeActivations} we know that $PO_{u,v}=0$ to be possible in $\log n$ time, the following equation has to be true $EdgeActivations=\Omega(n)$. Now, since we are trying to find the minimum number of edge activations per round possible, we can easily do this by dividing the total number of edge activations with the number of rounds. Therefore $EdgeActivationsPerRound\geq \frac{EdgeActivations}{Rounds} \geq \frac{\Omega(n)}{\log n}$. 
\end{proof}

Since we have have just proven that $\Omega(n)$ edge activations are required in order to solve the Depth-$\log n$ problem given any initial graph, we are now going to prove that $\Theta(n)$ edges are sufficient in order to solve it. First, we are going to informally prove it for the special case of the spanning line graph and afterwards we are going to prove it for general graphs.

Consider a spanning line with nodes $u_1,u_2,\ldots,u_j$ for $j=1,2,\ldots,n$. For simplicity, assume that $u_1$ is the ``left" endpoint of the line, $u_2$ is the neighbor of $u_1$ etc, $u_3$ is a neighbor of $u_2$ etc. In each round $i$, we activate edge $u_j,u_{j+2^{i}}$ $\forall$ $\{u_j|(j\mod(2^i)=1)\land (j+2^{i}\leq n)\}$. After $\log n$ rounds, the diameter of the shape is equal to $\log n$. Let us now proceed to analyzing the total edge activations. By definition of the algorithm, in each round $i$, $\frac{n}{2^i}$ edges are activated. Since the algorithm runs for $\log n rounds$, we have $\sum_{i=1}^{\log n} \frac{n}{2^i}=n$ total edge activations. We call this algorithm CutInHalf.

\begin{theorem}
	Given any initial graph $D=(V,E)$, the Depth-$\log n$ problem can be solved in $O(\log n)$ time, with $\Theta(n)$ total edge activations.
\end{theorem}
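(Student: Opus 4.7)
The plan is to reduce the general-graph case to the spanning-line case already handled by \emph{CutInHalf}. Since the algorithm is centralized, it has full knowledge of $G_s$ and may perform arbitrary offline computation on it.

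First, I would compute a spanning tree $T$ of $G_s$, which exists because $G_s$ is connected, and in an initial round deactivate every non-tree edge; no activations are used here. The active network is then exactly $T$.

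Second, I would linearize $T$ into a spanning line $L$ on $V$ by a centralized divide-and-conquer. At each recursive step I remove a centroid edge of the current subtree to split it into two subtrees of size at most $2n/3$; recursively linearize each subtree into a line with a designated endpoint chosen near the removed edge; and join the two resulting lines by activating a single edge between the designated endpoints, arranging this so as to satisfy the distance-$2$ constraint through an intermediate vertex adjacent to both endpoints of the removed centroid edge. The recursion has depth $O(\log n)$, and at level $k$ there are $O(n/2^k)$ activations performed in parallel, summing to $O(n)$ activations in $O(\log n)$ rounds.

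Third, I would apply \emph{CutInHalf} to the spanning line $L$; by the argument preceding the theorem this uses $O(\log n)$ further rounds and $\Theta(n)$ further edge activations, and produces a rooted tree of depth $\log n$ whose root is designated as the elected leader.

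Summing the three phases yields an $O(\log n)$-round centralized algorithm with $\Theta(n)$ total edge activations, matching the lower bound of Lemma~\ref{EdgeActivations}. The main obstacle is the linearization phase: making every join activation respect the distance-$2$ constraint. I would address this by maintaining the invariant that every recursive call returns a line whose designated endpoint lies within graph-distance $1$ of a fixed anchor vertex on the removed centroid edge, so that a common neighbor is always available at the moment of the join. Checking the invariant for the base case (subtrees of constant size) is trivial, and its preservation under the inductive step follows from the freedom to pick which endpoint of each recursively produced line is designated.
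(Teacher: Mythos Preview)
Your linearization phase has a genuine gap. The claim that removing a centroid \emph{edge} splits a tree into two subtrees each of size at most $2n/3$ is false: in a star on $n$ vertices, every edge removal yields components of sizes $1$ and $n-1$. Thus the recursion depth is $\Theta(n)$ in the worst case, not $O(\log n)$, and the time bound collapses. (Vertex-centroid decomposition does give balanced pieces, but then you obtain up to $\deg(c)$ components rather than two, and the ``join two lines'' step no longer applies directly.) A second, related issue is the invariant for the join: once a subtree has been recursively linearized, its internal distances have changed, and the anchor vertex of the removed edge need not remain adjacent to an endpoint of the resulting line; your sketch does not explain how to enforce this.

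The paper sidesteps linearization entirely. It computes a spanning tree, takes an Euler tour of that tree to obtain a virtual ring of length at most $2n$ (each real node possibly appearing several times), breaks the ring into a virtual line, and runs \emph{CutInHalf} on that virtual line. Because consecutive virtual nodes in the tour are always joined by a real tree edge, every activation performed by \emph{CutInHalf} is between real nodes at current distance at most $2$, so the distance-$2$ constraint is automatically respected. This gives $O(\log n)$ rounds and $\Theta(n)$ total activations without any delicate recursive bookkeeping.
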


\begin{proof}
	Since we are in a centralized setting, we are first going to perform some global computations that are going to output the specific edges that have to be activated in order for the diameter of the shape to drop to $\log n$.
	We consider any initial graph $D=(V,E)$ and we pick an arbitrary node called $u$. First, we compute a spanning tree that starts from node $u$. Afterwards we compute an eulerian tour starting from $u$. This way we can create a virtual ring $D'=(V',E')$ that has $|V'|\leq|2\cdot V|$ and $|E'|\leq 2|E|$. Now in this ring, node $u$ deactivates one of its incident edges and the graph is now a line. We can now execute the CutInHalf algorithm to solve the Depth-$\log n$ Tree problem in $O(\log n)$ time, with $\Theta(n)$ total edge activations .
\end{proof}

\subsection{Distributed Setting}

In this part, we are going to show that there is a difference in the minimum total edge activations required for solving the Depth-$\log n$ problem between the centralized and the distributed model.

\begin{definition}
	Let $U={u_1,u_2,\ldots,u_k}$ be a sequence of UIDs of length $k$. We say that $U$ is an increasing order sequence if, for all $i,j,1\leq i,j\leq k$, we have $i\leq j$ iff $u_i\leq u_j$. 
\end{definition}

\begin{definition}
	Let $A$ be a comparison-based algorithm executing on an increasing order ring graph. Let $i$ and $j$ be two nodes in the ring graph. We say that $i$ and $j$ are in corresponding states if the UIDs that they both have received from counterclockwise neighbors are a decreasing order sequence and the UIDs they have received are an increasing order sequence and vice versa.
\end{definition}

\begin{definition}\label{appIncreasing Order Ring}
	We define the increasing order ring $R$ as follows. Suppose we have an increasing order sequence $U$ of UIDs to be assigned on a ring with $n$ nodes. We assign the smallest UID from $U={u_1,u_2,\ldots,u_k}$ to an arbitrary node and we continue assigning increasing UIDs clockwise (or counterclockwise). We call this an increasing order ring.
\end{definition}

\begin{definition}
	We define a round of an execution/algorithm to be active if at least one message is sent in it or an edge is activated in it.	
\end{definition}

\begin{definition}
	We define the \emph{k-expo-neighborhood} of node $i$ in ring $R$ of size $n$, where $0\leq k \leq \frac{n}{2}$, to consist of the $2\cdot 2^k+1$ nodes $i-2^k,\ldots,i+2^k$, that is, those that are within distance at most $2^k$ from node $i$ (including $i$ itself).
\end{definition}	

\begin{lemma} \label{appCorresponding States}
	Let $A$ be a comparison-based algorithm executing in an increasing order ring of size $n$ and let $d_{min}$ be the initial distance from between node $d$ and the node with the minimum UID called $d_0$ and $d_{max}$ be the initial distance from between node $d$ and the node with me maximum UID called $d_{n-1}$. Let $i$ and $j$ be two nodes in $A$ where $2^k=\min(\max(i_{min},i_{max}),\max(j_{min},j_{max}))$. Then, at any point after at most $k$ active rounds, nodes $i$ and $j$ are in corresponding states, with respect to the UID sequences.
\end{lemma}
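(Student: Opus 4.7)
The proof will proceed by strong induction on the number of active rounds $r \leq k$. The critical structural fact is that in this model a new edge $uv$ can only be activated through a common neighbor $w$, so each active round at most doubles the reach of any node's information flow: after $r$ active rounds, every UID or state bit held by node $p$ must originate from $p$'s $2^r$-expo-neighborhood in the original ring $R$. This ``doubling of reach'' is the first lemma I would establish, by a straightforward sub-induction: after one active round, every neighbor with which $u$ has exchanged information lies within distance $2$; an edge activated in round $r$ must bridge two endpoints of a length-$2$ walk through already-active edges, whose endpoints lie in the $2^{r-1}$-expo-neighborhood.

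The main inductive invariant I will maintain is that after $r \leq k$ active rounds, nodes $i$ and $j$ are in corresponding states. The base case $r = 0$ is immediate since neither node has yet received any UID other than its own, so both sides of both sequences are empty and trivially correspond. For the inductive step, the hypothesis $2^r \leq 2^k \leq \max(p_{min}, p_{max})$ for $p \in \{i,j\}$ guarantees that the $2^r$-expo-neighborhood of neither $i$ nor $j$ contains both $d_0$ and $d_{n-1}$ simultaneously; consequently, by the increasing arrangement of UIDs given in Definition \ref{appIncreasing Order Ring}, the subsequence of UIDs visible to $i$ going clockwise (respectively counterclockwise) from $i$ is strictly monotonic, and the same holds for $j$. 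This monotonicity is exactly what ``corresponding states'' demands.

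Closing the induction uses the comparison-based symmetry of $A$: its choice of messages to send, potential neighbors to activate, and edges to deactivate depends only on the order-types of the UIDs it has seen, not on their absolute values. If $i$ and $j$ enter round $r$ in corresponding states, they must issue mirror-symmetric actions under the canonical order-isomorphism between their two $2^r$-expo-neighborhoods; moreover, because the UIDs newly incorporated at round $r$ still form monotonic extensions of the previously received sequences (again by the non-crossing of the wrap-around), the invariant is preserved after round $r$. Iterating up to $r = k$ yields the lemma.

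The hard part will be making precise how edge activations, as opposed to pure message passing, interact with ``corresponding states''. An activation creates a new structural neighbor, and one must verify that when $i$ activates an edge with a potential neighbor at distance $2$, the mirror-symmetric action at $j$ targets the corresponding node under the neighborhood order-isomorphism. This reduces to the fact that the selection criterion used by $A$ can only name potential neighbors through the order-type of their UIDs, and that on a ring with strictly monotonic UID sequences the order-isomorphism between $i$'s and $j$'s neighborhoods is unique. A secondary subtlety is that ``active round'' is a global notion: one must check that, inductively, $i$ chooses to be active in round $r$ iff $j$ does, which again follows from comparison-based behavior applied to their corresponding states at the start of that round.
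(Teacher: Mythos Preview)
Your proposal is correct and in fact more carefully argued than the paper's own proof. The paper takes a shorter route: it asserts without further justification that $i$ and $j$ are in corresponding states precisely as long as neither of them has learned \emph{both} of the extremal UIDs $UID_{d_0}$ and $UID_{d_{n-1}}$, and then reduces the lemma to the potential framework already developed for Lemma~\ref{appLowerTimeBound} (each active round can at most halve the potential $PO_{d_0,i}$ or $PO_{d_{n-1},i}$, so it takes roughly $k$ active rounds before either potential reaches zero). Your approach instead establishes the symmetry directly, by induction on the number of active rounds together with the order-isomorphism between the two $2^r$-expo-neighborhoods. This buys you a self-contained argument that does not lean on an unproved equivalence, and it explicitly handles the interaction of edge activations with the notion of corresponding states---a point the paper glosses over entirely. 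The paper's route is shorter because it recycles the potential machinery, but it leaves the key reduction (``corresponding states $\Leftrightarrow$ not yet seen both extremes'') as an exercise; your induction is what actually justifies that reduction.
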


\begin{proof}
	Note here that nodes $i$ and $j$ are in corresponding states as long as $(((PO_{d_0,i}>0) \lor (PO_{d_{n-1},i}>0)) \land  ((PO_{d_0,j}>0) \lor (PO_{d_{n-1},j}>0))$. In simple terms, $i$ and $j$ are in corresponding states as long as both of them do not know both $UID_{d_0}$ and $UID_{d_{n-1}}$.
	Consider $p=\max(PO_{d_0,i},PO_{d_{n-1},i})$ prior to the execution of the algorithm. We know that the initial distance between $d_0,i$ and $d_{n-1},i$ is at least $2^k$ and therefore $p>2^k$. We already know from a previous proof that at least $k-\log k$ are needed before $((PO_{d_0,i}>0) \lor (PO_{d_{n-1},i}>0))$. A similar argument is used for $j$.
\end{proof}

\begin{observation} \label{appActive rounds}
	Any transformation strategy based on this model that solves the Depth-$\log n$ Tree problem in $O(\log n)$ time in an increasing order ring, requires at least $\log n$ active rounds.
\end{observation}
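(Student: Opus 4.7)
The plan is to specialise the centralised potential-function machinery set up earlier in this section to the ring. First I would fix the increasing-order ring $R$ of size $n$ and pick the pair $(d_0, v)$ where $d_0$ is the node of minimum UID and $v$ is the node at ring distance $\lfloor n/2 \rfloor$ from $d_0$. Initially $UID_{d_0}$ is known only by $d_0$ itself, so $PO_{d_0,v}=\lfloor n/2 \rfloor$. By Observation~\ref{appPotential}, any algorithm that solves the Depth-$\log n$ Tree problem must drive this quantity down to at most $\log n$ by the time it terminates, so the analysis reduces to bounding the rate at which $PO_{d_0,v}$ can shrink.

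Next I would bound the per-round decrease of $PO_{d_0,v}$. By definition an inactive round sends no messages and activates no edges, so the potential is unchanged. In an active round exactly two mechanisms (listed before Observation~\ref{appPotential}) can decrease it: Information Propagation, which pushes $UID_{d_0}$ one hop closer to $v$ and decreases the potential by at most $1$, and Reduce Shortest Paths, which creates shortcuts on the current shortest path from a node knowing $UID_{d_0}$ to $v$. Combining the two, I would show that one active round takes the potential from $k$ to at least $k/2 - 1$.

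From the recurrence $k_{t+1}\ge k_t/2 - 1$ with $k_0 = \lfloor n/2 \rfloor$, a direct induction gives $k_t \ge n/2^{t+1} - O(1)$ after $t$ active rounds. To satisfy $k_t\le \log n$ this forces $2^{t+1}\ge n/\Theta(\log n)$, and hence $t \ge \log n - \log\log n - O(1) = \Omega(\log n)$, which is the asserted lower bound on the number of active rounds.

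The main obstacle is a rigorous justification of the halving cap per active round, since an algorithm is free to activate many edges in parallel along the same shortest path. I would formalise this by an invariant on the shortest path of active edges from a node holding $UID_{d_0}$ to $v$. The distance-$2$ rule of the model (Section~\ref{subsec:model}) means that, at the beginning of a round, only pairs currently at graph-distance exactly $2$ may be joined, so each new edge can replace a length-$2$ sub-path by a length-$1$ edge, i.e.\ shorten that sub-path by exactly $1$. Even when many such activations are performed simultaneously on disjoint length-$2$ chunks of a length-$k$ path, the resulting path has length at least $\lceil k/2 \rceil$. Together with the single extra hop gained from propagation, this yields the claimed $k \to k/2 - 1$ bound and makes the otherwise informal centralised argument precise in the ring setting.
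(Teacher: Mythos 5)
Your proof is correct and follows exactly the argument the paper intends: the observation is stated there without proof, as an immediate consequence of the potential machinery behind Lemma \ref{appLowerTimeBound} (Observation \ref{appPotential}, the two reduction mechanisms with their $k/2$ and $1$ caps, and the fact that a round with no messages and no edge activations cannot decrease $PO_{d_0,v}$, since deactivations only lengthen paths), which is precisely what you reconstruct on the antipodal pair of the ring. The only caveat is quantitative: your recurrence yields $\log n - \log\log n - O(1)$ rather than literally $\log n$ active rounds, but since the paper states the underlying time bound as $\Omega(\log n)$ and uses this observation only asymptotically in the subsequent theorem, this matches the intended claim.
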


\begin{theorem}
	Any distributed algorithm that solves the Depth-$\log n$ Tree problem in $O(\log n)$ time, requires $\Omega (n\log n)$ total edge activations.
\end{theorem}

\begin{proof}
	Consider an increasing order ring $R$ with $n$ nodes and algorithm $A$ that solves the Depth-$\log n$ problem. Consider the node with the greatest UID in the network, called $u_{max}$, the node with the smallest UID in the network, called $u_1$, and the antipodal node of $u_{max}$ called $u_c$.
	
	First of all, note that in the first round, all nodes except from $u_1$ and $u_{max}$ are in corresponding states. We can generalize this statement by using Lemma \ref{appCorresponding States} to state that in round $i$, each node whose i-expo-neighborhood does not include both $u_1$,$u_{max}$ is in a corresponding state with each such node. Therefore those nodes behave the same way e.g. if in round $i$, one of those $c$ nodes activates an edge, then all $c$ nodes activate an edge. For this proof, we define a round of algorithm $A$ to be live if the $c$ nodes activates at least one edge in it, we also define a round of algorithm $A$ to be asleep if none of the $c$ nodes activate an edge in it.
	
	We already know that we need at least $\log n$ active rounds to connect $u_{max}$ with $u_c$ from Lemma \ref{appActive rounds}. Our goal here is to prove that $\log n$ of those active rounds also have to be live rounds. 
	
	For simplicity, we define the set $C$ where node $u\in C$ if $u$ is in the same corresponding state as $u_a$ (including $u_a$), the set $A$ where node $u\in A$ if $u$ is not in the same corresponding state as $u_c$. 
	
	Consider an arbitrary round $i$, where the shortest path between $u_{max}$ and $u_c$ is $|P|=k$. This shortest path can be split into two different paths. The one called $P_A$ that includes nodes $u\in A$ and the one called $P_C$ that includes nodes $v\in C$. Essentially, the potential $PO_{u_{max},a}=|P_C|$. Let us divide our analysis between asleep and live rounds and study how much the potential can be reduced in each round.
	
	$\bullet$ \textbf{Asleep rounds.} In each asleep round $a$, only nodes $u\in A$ can activate edges and $|P_C|$ can only be reduced by at most $l+1$ where $l$ is the total number of live rounds before round $a$. We can reduce it $l$ by having $u\in A$ activating an edge with each potential neighbor $v\in C$, and reduce it by $1$ by having $u$ send $UID_{u_{max}}$ to all $v\in C$.
	
	$\bullet$ \textbf{Live rounds.} In each live round $l$, all nodes can activate an edge so we can reduce $|P_C|$ by $l+1$ by following the above strategy and additionally, use edge activations between nodes $v\in C$ so that $|P_C|$ is reduced by at most half.
	
	Note here, that Asleep rounds are not enough to reduce the potential to $0$ in order to solve the Depth-$\log n$ problem. After $O \log (n)$  asleep rounds, $PO_{u_{max},a}\geq InitialPotential-(\log n)(l+1)= \frac{n}{2}-(\log n)(l+1)$. Therefore we need at least $\log n$ live rounds to solve the Depth-$\log n$ problem.
	
	We are now gonna examine how many edges are activated in each live round. Recall that in each live round $l$ , at least $1$ node $v \in C$ activates an edge and by Lemma \ref{appCorresponding States}, all nodes $v\in C$ activate an edge. The number of nodes $v\in C$ in round $i$ are  $|u|\geq\#CnodesInInitialGraph-NodesRemovedInPreviousLiveRounds-$ $NodesRemovedInPreviousAsleepRounds$ $=(n-2)-(\sum_{i=1}^{l-1} 2^i) (\sum_{i=1}^{a} -i(l-1))-a(l-1)$.
	The number of edges activated in each round $l$ are at least $|C|\geq |u|.$
	Therefore the total number of edge activations in live rounds after $\log n$ rounds is at least $(n-2)-(\sum_{i=1}^{\log n} 2^i) (\sum_{i=1}^{\log n} -i(l-1))-a(l-1)=\Theta (\log n)$
\end{proof}

\end{document}